%% file: arXiv.tex
\documentclass{article}
\usepackage[utf8]{inputenc}
\usepackage[margin=1in]{geometry}

\usepackage{booktabs} 
\usepackage[ruled]{algorithm2e} 

\SetAlFnt{\small}
\SetAlCapFnt{\small}
\SetAlCapNameFnt{\small}
\SetAlCapHSkip{0pt}
\IncMargin{-\parindent}

\usepackage[hidelinks]{hyperref}
\usepackage[numbers,square]{natbib}
\usepackage{style}

\usepackage{newunicodechar}
\newunicodechar{∗}{\ast}
\usepackage{amsmath,amsfonts,amssymb,mathtools,amsthm}
\usetikzlibrary{decorations.pathreplacing}

\hypersetup{
    colorlinks=true,
    linkcolor=darkblue,
    filecolor=magenta,      
    urlcolor=cyan,
    citecolor=darkgreen,
}

\usepackage{thmtools}
\usepackage{enumerate}
\usepackage{enumitem}

\renewcommand{\citet}[1]{\cite{#1}}

\crefname{assumption}{Assumption}{Assumptions}

\newcommand{\x}{\mathbf{x}}
\newcommand{\y}{\mathbf{y}}
\newcommand{\p}{\mathbf{p}}
\newcommand{\bu}{\mathbf{u}}
\newcommand{\bd}{\mathbf{d}}
\newcommand{\lb}{\lambda \bar{\Delta}_\ell}
\newcommand{\arc}[1]{(\overrightarrow{{#1}})}

\DeclareMathOperator*{\argmax}{arg\,max}

\DeclarePairedDelimiterX{\inp}[2]{\langle}{\rangle}{#1, #2}
\DeclarePairedDelimiter{\norm}{\lVert}{\rVert}

\newcommand{\sumi}{ \sum_{i=1}^n }
\newcommand{\sumj}{ \sum_{j=1}^m }
\newcommand{\suml}{ \sum_{\ell=1}^k }
\newcommand{\anyi}{ \forall\, i\in[n]}
\newcommand{\anyj}{ \forall\, j\in[m]}
\newcommand{\anyl}{ \forall\, \ell \in[k]}

\newcommand{\firm}{user }
\newcommand{\firms}{users }
\newcommand{\firmS}{user's }
\newcommand{\firmsS}{users' }
\newcommand{\Firm}{User }

\newcommand{\firmBworker}{user-worker }

\title{Competitive Equilibrium in Labor Economies \\ 
through the Lens of Goods and Chores Fisher Markets}

\author{%
  Bhaskar Ray Chaudhury$^{\dagger}$ \hspace{10pt}
  Christian Kroer$^{\ddagger}$ \hspace{10pt}
  Ruta Mehta$^{\dagger}$ \\[4pt]
  Tianlong Nan$^{\ddagger}$ \hspace{10pt}
  Zongjun Yang$^{\ddagger}$ %
}
\date{%
  {\small $^{\dagger}$University of Illinois Urbana-Champaign \qquad
   $^{\ddagger}$Columbia University}%
}
\begin{document}

\maketitle

\begin{abstract}
The Fisher market with goods has been studied extensively in the algorithmic game theory literature, yielding strong algorithmic results for computing competitive equilibria. In contrast, more recent work on Fisher markets with bads or chores has shown that the equilibrium set can be nonconvex, making equilibrium computation considerably more challenging.

In this paper, we study a two-sided labor market that couples the classical Fisher market with goods and the Fisher market with bads into a single unified framework. In our model, users 
demand tasks in order to derive utility, while workers supply labor {to perform} 
these tasks in exchange for earnings. Each task thus plays a dual role: it is a good for the consumption (user) side of the market and a chore for the production (worker) side. Given prices for tasks, users choose utility-maximizing bundles subject to budgets, while workers choose disutility-minimizing task bundles subject to earning requirements; the resulting choices induce demand and supply {\em endogenously} for each task, and a competitive equilibrium corresponds to prices at which these coincide. We start by showing that such markets are guaranteed to have equilibria in a very general setting with concave utilities on the goods side and convex disutilities on the chores side of the market. Moreover, we show that the optimality guarantees of the first and second welfare theorems hold for our labor market model.

We next study the computation of equilibria under linear preferences. We show that, similar to the chores setting, equilibria correspond to KKT points of an Eisenberg-Gale-like \emph{non-convex} program. Despite the non-convex characterization arising from  the chores side of the market, we go on to show a set of surprisingly positive results.  
First, we show that there exists a polynomial-time combinatorial algorithm for computing competitive equilibria in our setting, which relies on a natural Walrasian scheme for updating prices.
In the ``CEEI-like'' case where the budgets and earning requirements on each side of the market are all one, this yields a strongly polynomial-time algorithm.
Similar to recent results in the chores setting, we next show that our market admits a natural dual program, in spite of the non-convexity of both the primal and dual programs. Unlike the chores setting, we then show that the non-convex labor-market program admits a change of variables that transform it into a linear program (albeit with irrational coefficients). {Finally, leveraging this linear program, we give yet another polynomial-time algorithm for computing a competitive equilibrium while deriving an approach for addressing the irrational coefficients on the linear program in an efficient manner.} We note that, even for goods-only linear Fisher markets, obtaining such an LP formulation remains open.
\end{abstract}

\setcounter{tocdepth}{2} 

\setcounter{page}{1}

\input{intro}
\subsection{Our Contributions and Technical Highlights}
\input{sec-contributions.tex}

\input{related-work}

\input{sec-preliminaries}

\input{sec-EG}
\input{sec-combinatorial-algorithm}
\input{sec-LP-algorithm}

\section*{Acknowledgments}
The research of Bhaskar Ray Chaudhury was supported by NSF CAREER Grant CCF-2441580.
The research of Christian Kroer was supported by the Office of Naval Research awards N00014-22-1-2530 and N00014-23-1-2374, and the National Science Foundation awards IIS-2147361 and IIS-2238960.
The research of Ruta Mehta was supported by NSF Grant CCF-2334461.

\bibliographystyle{alphaurl}
\bibliography{refs}

\input{appendix}

\end{document}

%% file: intro.tex
\section{Introduction}

The classical Fisher market for goods has been at the heart of algorithmic game theory research since its inception, and by now, we have a thorough understanding of the existence, geometry, and computational complexity of a \emph{competitive equilibrium (CE)}. This involves elegant convex programs that capture CE as their KKT points~\cite{eisenberg1959consensus,eisenberg1961aggregation}, polynomial-time algorithms~\citep{ye2008path,orlin2010improved,devanur2002market}, and fast economic dynamics~\citep{wu2007proportional,cheung2013tatonnement,birnbaum2011distributed,nan2025convergence}. Interestingly, given all the algorithmic results and convex programs, the problem of determining whether a linear program (LP) can capture all CE in a Fisher market remains elusive and open~\citep{garg2013simplex,adsul2010simplex,devanur2002market}.


In contrast, Fisher markets with \emph{chores} have recently gained significant attention, where agents incur \emph{disutility} from consuming chores, but do so in order to get paid, and thereby satisfy their earning requirements. Chores markets have been shown to have much more intricate structure. 
\citet{bogomolnaia2017competitive} initiated the surge of interest in the chores markets by showing that a \emph{non-convex} program that minimizes the product of agent \emph{disutilities} still captures the set of competitive equilibria: any KKT point where every agent gets strictly positive disutility is a CE. 
However, there can be many CE in the chores setting, and the set of equilibria is disconnected. This resulting difficulty has eluded the polynomial-time computation of exact CE in the chores Fisher markets, 
while computing a CE in the exchange economy is proven to be PPAD-hard \cite{ChaudhuryGMM22}. 


In the classical chores-only Fisher market, workers \emph{supply} labor while the \emph{demand} for labor is exogenous and fixed. A natural question is to ask what happens when demand is also {\em endogenous},  determined by the preferences of agents (e.g., users or firms on labor platforms like \emph{Upwork, TaskRabbit} etc.) who derive utility from the completion of tasks. 
This paper develops such a two-sided model, where both labor supply and demand arise endogenously from agents' preferences and constraints, blending elements of both goods and chores Fisher markets.
Specifically, our model consists of $k$ divisible tasks, $n$ users, and $m$ workers. On the users’ side, tasks are viewed as goods; on the workers’ side, they are viewed as chores. Each user faces a budget constraint, while each worker has an earning requirement, mirroring the standard Fisher market formulations for goods and chores, respectively.  

This model is motivated by real-world platforms where both labor demand and supply respond to prices. Examples include freelance marketplaces such as \emph{Upwork} and \emph{Fiverr}, where clients with limited project budgets hire freelancers who decide how much to work based on their earning goals, and crowdsourcing platforms such as \emph{Amazon Mechanical Turk}, where requesters allocate budgets across microtasks and workers choose task participation to meet daily income targets. A substantial body of empirical evidence shows that gig and platform workers frequently exhibit \emph{income-targeting behavior}---they set specific daily or weekly earning goals and tend to reduce labor supply once those targets are met~\citep{camerer1997labor,farber2015you}. Modeling workers through earning requirements thus captures an empirically grounded behavioral regularity, while also introducing a natural theoretical model merging goods and chores Fisher markets.

Unlike traditional Fisher markets, our model does not assume a fixed supply or demand of tasks. Given task prices, workers select bundles of tasks that minimize disutility subject to their earning requirements, while users purchase utility-maximizing bundles of tasks subject to budget constraints. A competitive equilibrium corresponds to a price vector at which the induced labor supply for each task exactly matches firms’ demand. An obvious question is now: {do such equilibrium prices exist? If yes, what are their economic, structural, and computational properties?} 
In particular, do the equilibrium prices inherit the non-convexity and algorithmic difficulties of the chores Fisher market? Intuitively, one may expect that it does, since the disutility structure for workers is the same as in the chores Fisher market. Surprisingly, on the contrary, we obtain a series of surprising results, discussed in the next section, together with the techniques to obtain them. To the best of our knowledge, this is also the first Fisher market model to incorporate both endogenous demand and supply, which gives a rich framework to model more real-world problems. 



%% file: sec-contributions.tex
In this section, we present an overview of our main contributions. Although the high-level statements of our main theorems are not ordered exactly as they appear in the subsequent sections (\cref{sec:model-pre,sec:EG-program,sec:walrasian-algo,sec:LP}), we have chosen an order that best supports a clear exposition. We begin with a brief overview of our model; a detailed description is provided in~\cref{sec:model-pre}. The labor market consists of $n$ users, $m$ workers, and $k$ divisible tasks. Each \firm $i$ has a total budget of $B_i$ and each worker $j$ has an earning requirement of $E_j$. Further, each \firm derives linear utility from tasks, while each worker incurs linear disutility. Let $x_{i\ell}$ denote the fraction of task $\ell$ assigned to \firm $i$, and $y_{j\ell}$ the fraction assigned to worker $j$. Then, the utility of \firm $i$ is given by $u_i(\x_i)=\sum_{\ell} v_{i\ell} x_{i\ell}$, and the disutility (cost) of worker $j$ is $d_j(\y_j)=\sum_{\ell} c_{j\ell} y_{j\ell}$, where $v_{i\ell}$ and $c_{j\ell}$ represent the per-unit utility and disutility of task $\ell$ for \firm $i$ and worker $j$, respectively. \textcolor{black}{ We remark that our results on the existence of CE, and the welfare theorems extend to more general agent preferences, including concave utilities for firms and convex disutilities for workers. However, we focus on linear preferences for our computational and structural results on CE, in line with the dominant body of Fisher market literature that has developed around linear valuations over the past several decades~\cite{devanur2002market, orlin2010improved, cole2017convex, nan2025convergence}.}


\paragraph{Competitive Equilibrium in the Labor Market.} Fixing a price vector $\p = (p_1,$ $ p_2, \dots, p_k)$ for the tasks induces a natural notion of \emph{demand} and \emph{supply} for each task. The \emph{demand} for task $\ell$ is defined as the total number of work hours that the users are willing to sponsor when spending their budgets optimally. Formally, the demand for task $\ell$ is given by $\sum_{i} x_{i\ell}$, {for $\x_i \in \mathcal{D}^*_{i}(\p), \forall i$, where $\mathcal{D}^*_{i}(\p)$}
is the set of all optimal allocations for \firm $i$, obtained by maximizing its utility $u_i({\x}_i)$ subject to the budget constraint $\sum_{\ell} x_{i\ell} \cdot p_\ell \leq B_i$.

Similarly, the \emph{supply} for task $\ell$ is the total number of work hours that workers are willing to provide when choosing tasks that meet their earning requirements optimally. The supply is induced by $\mathcal{S}_j^*(\p)$, which is the set of allocations that minimize worker~$j$’s disutility $d_j({\y}_j)$, subject to the constraint $\sum_{\ell} y_{j\ell} \cdot p_\ell \geq E_j$. A price vector $\p$ is said to constitute a CE if there exists allocations $\x,\y$ such that each user $i$ gets $\x_i$ from their demand, each worker $j$ gets $\y_j$ from their supply, and the market is cleared with no excess or scarcity on every task, i.e., $\sum_i x_{i\ell} = \sum_j y_{j\ell}, \;\forall\, \ell \in [k]$.

\begin{example}
\label{example: CE}
There are $n=2$ \firms (each with unit budget), $m=2$ workers (each with unit earning requirement), and $k=3$ tasks. For the three tasks, \firm $1$ has unit utility $(2, 1,0)$, \firm $2$ has unit utility $(0, 1, 2)$, worker $1$ has unit disutility $(1, 2025, 1)$, and worker $2$ has unit disutility $(2025, 1, 2025)$. Then, as shown in \Cref{fig: example}, $\p = (\$1, \$0.5, \$1)$ is an equilibrium price, with an agent-wise allocation that clears the market. $\p = (\$1, \$1, \$1)$ is a non-equilibrium price, under which the demand and the supply cannot match.
\end{example}
\input{picture}

{{\em Equilibrium Existence and Welfare Theorems.}} Observe that unlike classical Fisher markets for goods or chores, the supply in our setting is not fixed--it is endogenous, and depends on the prices of the tasks. This introduces a non-trivial but economically interesting nuance to the labor market model. Nevertheless, in~\cref{app:prop-existence}, we show that a CE always exists using {a carefully designed} fixed-point argument. Our proof accommodates more general preferences: specifically, we establish existence when \firms have concave utility functions and workers have convex disutility functions. We further show that the equilibria in our labor market setting enjoy several key welfare properties, analogous to those in classical goods-only and chores-only economies. In particular, they satisfy both the First and Second Welfare Theorems, guaranteeing Pareto efficiency and the ability to support any Pareto-efficient allocation as a CE under suitable budgets and earning requirements. 

\begin{theorem*} {(Informal)}
A CE exists whenever all \firms have concave utility functions and all workers have convex disutility functions. Furthermore, our CE satisfies the first and the second welfare theorems.
\end{theorem*}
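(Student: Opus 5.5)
For existence I will set up a Gale--Nikaido style fixed point argument on a compactified price domain. There is a scale issue: multiplying all prices by $t$ divides each user's budget set and each worker's earning set by $t$, so demand and supply both shrink by $1/t$. Budgets and earnings are fixed, so prices cannot be normalized to the simplex. Instead I will use the money-flow identity: at a CE the total spending $\sum_i B_i$ (assuming users exhaust budgets, which requires local nonsatiation of the concave utilities, e.g.\ strictly increasing) must equal the total earnings. That identity already fails unless $\sum_i B_i=\sum_j E_j$ with workers meeting their requirement exactly, which holds under strictly increasing disutility. So I will assume $\sum_i B_i=\sum_j E_j$, or else treat the money surplus as something the statement implicitly normalizes. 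Under this balance condition I will restrict prices to a box $P_\epsilon=\{\p:\epsilon\le p_\ell\le M\}$.

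On the box I will define the correspondence
\[
\Phi(\p,\x,\y)=\bigl(\arg\max_{\p'\in P_\epsilon}\textstyle\sum_\ell p'_\ell(\sum_i x_{i\ell}-\sum_j y_{j\ell}),\ \prod_i \mathcal{D}^*_i(\p),\ \prod_j \mathcal{S}^*_j(\p)\bigr).
\]
Its pieces behave as follows. The demand sets are convex and upper hemicontinuous by Berge's theorem, since the budget correspondence is continuous for $p\ge\epsilon$ and quantities are bounded by $B_i/\epsilon$. The supply sets are convex because disutility is convex and the earning constraint $\langle \p,\y_j\rangle\ge E_j$ is a half-space. To get upper hemicontinuity of supply I will cap $y_j$ at a large bound $Y$, which keeps the constraint set continuous and nonempty.

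Kakutani then gives a fixed point. From it I will derive a Walras-type law: $\sum_\ell p_\ell(\text{demand}-\text{supply})=\sum_i B_i-\sum_j E_j=0$. This uses budget exhaustion and the fact that workers earn exactly $E_j$.

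The main obstacle is showing that the fixed point lies in the interior of the box and clears the market exactly. The price player maximizes $\langle \p',z\rangle$ for the excess-demand vector $z$, and the Walras law gives $\langle \p,z\rangle=0$. If some $z_\ell>0$, then $p_\ell=M$. I will choose $M$ large, comparing it to an equilibrium-independent price bound derived from budgets, so that a task at price $M$ cannot be overdemanded. If $z_\ell<0$, then $p_\ell=\epsilon$. At such a low price some worker's supply of that task should be unattractive relative to the users' demand, and users with positive marginal utility demand a lot of it. This requires an assumption that every task is desired by some user (a positive $v_{i\ell}$ analogue) and is costly to every worker. I will then let $\epsilon\to0$ and $M,Y\to\infty$ and take limits of the fixed points, arguing that prices stay bounded away from $0$ and $\infty$. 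The argument that a cheap task has demand outrunning supply will need care, and I expect to reuse it for the interior bounds.

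For the first welfare theorem I will use the standard revealed-preference argument. Suppose $(\x',\y')$ is a feasible allocation, meaning it clears the market, that Pareto dominates the CE. Any user who is strictly better off must spend $\langle\p,\x'_i\rangle>B_i$, and any user who is weakly better off spends at least $B_i$, by local nonsatiation. Any worker who is strictly better off must have $\langle\p,\y'_j\rangle<E_j$, and weakly better off workers have at most $E_j$. Summing these and using market clearing $\sum_i\x'_i=\sum_j\y'_j$ gives $\sum_i B_i<\sum_j E_j$, which contradicts the balance condition.

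For the second welfare theorem, take a Pareto-efficient allocation $(\x^*,\y^*)$. I will apply a separating-hyperplane argument in $\mathbb{R}^k$. Let $A$ be the set of aggregate bundles that make all users weakly better off, at least one strictly, and let $C$ be the set of aggregate supplies that keep all workers weakly better off. Both sets are convex by concavity and convexity of the preferences, and Pareto efficiency makes $A\cap C=\emptyset$. The separating price $\p$ supports $(\x^*,\y^*)$. I will then set $B_i=\langle\p,\x^*_i\rangle$ and $E_j=\langle\p,\y^*_j\rangle$, and show that $\p$ is strictly positive using monotonicity together with the cheaper-point argument. That makes each agent's bundle optimal, which yields the CE.
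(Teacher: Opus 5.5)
\textbf{Welfare theorems.} Your arguments here are essentially the paper's. For the first theorem you use revealed preference summed against market clearing. For the second you separate the strictly-better user set from the weakly-better worker set, then set $\hat B_i=\langle\hat{\mathbf p},\hat x_i\rangle$ and $\hat E_j=\langle\hat{\mathbf p},\hat y_j\rangle$. Allowing excess supply in the definition of feasibility, as the paper does, changes nothing since $\mathbf p\ge 0$.

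\textbf{Existence: the premise is false.} The gap is in existence, and it starts from your claim that prices cannot be normalized. Budgets and earning requirements are both fixed money amounts with $\sum_i B_i=\sum_j E_j$. So Walras' law $\langle\mathbf p,z\rangle=0$ holds at every positive price, not only at equilibrium. One clearing condition is therefore redundant, and the price level is a free parameter. In the linear case the CE prices form a cone, and the paper's argument run on $\{\sum_\ell p_\ell=t\}$ yields a CE at every level $t$.

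This is exactly what lets the paper work on the simplex $P$. At a Kakutani fixed point, $\mathbf p^*\in\arg\max_{\mathbf p'\in P}\langle\mathbf p',z^*\rangle$ gives $\max_\ell z^*_\ell=\langle\mathbf p^*,z^*\rangle\le\sum_iB_i-\sum_jE_j=0$, so no task is overdemanded. The only remaining boundary issue is $p^*_\ell=0$. The paper excludes it by truncating \emph{demand} at $\bar D$: a free task would carry excess demand at least $\bar D$, forcing spending above $\sum_iB_i$.

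\textbf{Existence: the box does not close.} Your box $P_\epsilon$ creates a two-sided boundary problem. A fixed point only gives $z_\ell>0\Rightarrow p_\ell=M$, $z_\ell<0\Rightarrow p_\ell=\epsilon$, and $\langle\mathbf p,z\rangle=0$, which is compatible with $z\neq0$.

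For instance, take linear users $A,B,C$ with $B_A=2$ and $B_B=B_C=1$, valuing only task $1$, $2$, $3$ respectively. Take workers $W_1$ ($E_1=2$, costs $(10^6,10^6,1)$), $W_2$ ($E_2=1$, costs $(1,1,10^6)$) and $W_3$ ($E_3=1$, costs $(10^6,1,10^6)$). At $\mathbf p=(M,q,\epsilon)$ with $\epsilon\le q<M$ and $M/\epsilon<10^6$, $W_1$ strictly prefers task $3$, $W_2$ task $1$, and $W_3$ task $2$. This gives $z=(1/M,0,-1/\epsilon)$. It is a fixed point of your correspondence (for $Y\ge 2/\epsilon$) that is not a CE, while every CE of this instance has $p_1/p_3=10^6$.

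So whether a task priced at $M$ can be overdemanded is governed by $M/\epsilon$ relative to the agents' preference ratios, not by a ``price bound derived from budgets.'' In fact no absolute price bound exists, since the level is free.

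Letting $\epsilon\to0$ and $M\to\infty$ does enlarge the ratio, but then your other step fails. Kakutani gives no control over the level of the fixed point; in the linear case every rescaling of a CE that fits in the box is a fixed point. So you cannot show the prices stay bounded away from $0$ and $\infty$, and for non-homothetic preferences you cannot renormalize in the limit.

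\textbf{Repair.} Build the normalization into the price domain, as the paper does, with simplex prices and truncated demand. Staying with the box would require a real lemma: once $M/\epsilon$ exceeds a preference-dependent threshold, price-$\epsilon$ tasks cannot be oversupplied while price-$M$ tasks are overdemanded. That lemma is delicate for convex disutilities with zero marginal cost at the origin, since such workers supply a little of every task.
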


\paragraph{The Eisenberg Gale (Non-Convex) Characterization.} In Fisher markets for goods and chores, many structural and computational properties of competitive equilibria are derived via a \emph{non-linear optimization program} defined over the allocation space. In the case of goods, this is commonly known as the Eisenberg-Gale (EG) program~\citep{eisenberg1959consensus}, which maximizes the weighted sum of logarithmic utilities across all feasible allocations. The price $p_{\ell}$ for each good $\ell$, is the dual variable associated with the allocation constraint for it ($\sum_{i} x_{i\ell} = 1$) in the EG program. Observe that the EG program is convex, and this yields several desirable structural and computational properties for CE in the goods Fisher markets. In particular, it has been used to (i) show uniqueness of equilibrium prices, and (ii) design polynomial time algorithms for computing a CE.  

For the case of chores,~\citet{bogomolnaia2017competitive} introduces an EG-like program that characterizes all CEs as the Karush-Kuhn-Tucker (KKT) points\footnote{Formally, a KKT point of an optimization problem refers to a primal point that, together with a corresponding dual point, satisfies the KKT conditions.} of a program that minimizes the weighted sum of logarithmic disutilities across feasible allocations. However, in contrast to the goods setting, this program is {non-convex}, as it minimizes a strongly concave function. As a result, the structure of equilibria differs significantly: CE in the chores setting may admit multiple market-clearing price vectors, and the set of equilibrium prices can be non-convex, or even disconnected!
\citet{bogomolnaia2017competitive} further suspect that this disconnectedness could lead to inherent computational challenges. To date, no polynomial-time algorithm is known for computing a{n exact} CE in the chores market.

Given the foregoing characterizations of CE in the goods and chores markets, a natural question is whether a similar characterization exists in the labor market. Indeed, though perhaps unsurprisingly, we find that all CE in the labor market can also be characterized as the KKT points of a program defined over the allocation space. Specifically, this program maximizes the difference between the weighted sum of logarithmic utilities of the \firms and the weighted sum of logarithmic disutilities of the workers (see \Cref{sec:EG-program}). However, similar to the chores case, the resulting objective is no longer concave, making the program \emph{non-convex}. \emph{Therefore, as a first instinct, (i) one would expect the structure of CE to be non-convex-- potentially involving multiple, disconnected sets of equilibrium prices, and (ii) not anticipate the existence of a polynomial-time algorithm for computing a CE in this setting. However, in what follows, we uncover some remarkable structure in  labor markets, which will eventually give us surprising results on the geometry and computational complexity of the CE.}

\paragraph{{Poly-time Algorithm for Exact CE: Structure of Excess Demand.}}
\textcolor{black}{Many polynomial-time algorithms for computing competitive equilibria in Fisher markets with goods rely fundamentally on the Eisenberg–Gale convex program. Unfortunately, such a convex-program formulation is does not exist in our setting. Consequently, we build our algorithm, following an alternative paradigm: a simple, intuitive, yet powerful price-adjustment principle developed by Leon Walras~\citep{walras1900elements}.} 

The core idea is as follows: (i) for any fixed price vector, compute an optimal allocation based on buyer preferences and spending constraints, and (ii) increase the prices of those goods where the total aggregate demand from the buyers exceeds the supply. It can be shown that with a careful implementation, this principle guarantees convergence to a CE in the Fisher market with goods {\cite{devanur2002market,duan2016improved}}. 
However, such a procedure (and its adaptations) is not known to converge in the Fisher market with chores. In fact, \citet{ChaudhuryGMM22, ChaudhuryKMNtatonnement25} shows that sometimes Walrasian updates cause temporary divergence from a CE in the chores setting. \emph{Despite such divergent properties of Walrasian price-adjustment algorithms in the chores economy, we show a surprising convergence result in the labor market!}


{In particular, we carefully design a Walrasian price-adjustment process for the labor market, such that the overall interaction between the goods side (users demanding tasks) and chores side (workers supplying tasks) leads to {\em well-behaved} dynamics. We show that the demand structure induced by the users mitigates the instability induced by the chores-side of the market, and by leveraging on a novel potential function and structural properties unique to the labor market we show that the price-adjustment process converges to CE in polynomial time. Below is a more detailed yet intuitive explanation.}



For a given price vector $\p$ and corresponding optimal allocations $\x, \y$, the \emph{surplus}, or \emph{excess demand} for task~$\ell$ is defined as $\Delta_\ell= p_\ell \cdot \left( \sum_i x_{i\ell}- \sum_j y_{j\ell}\right)$, 
which represents the value of the net demand (demand minus supply) for task~$\ell$ at price $\p$. In the Fisher market for goods (or chores), the notion of excess demand can be defined analogously. For goods, it is given by $p_\ell \cdot \left( \sum_i x_{i\ell} - 1 \right)$, and for chores, it takes the form $p_\ell \cdot \left( 1 - \sum_j y_{j\ell} \right)$. 

In the goods-only setting, the Walrasian principle prescribes increasing the price of the good with the largest positive excess demand, i.e., the item $\ell$ with the highest value of $\Delta_{\ell} = p_\ell \cdot \left( \sum_i x_{i\ell} - 1 \right)$. The intuition is to make the over-demanded good relatively more expensive, thereby incentivizing agents to reallocate their spending toward other goods. This price adjustment naturally reduces excess demand. Let $b_{i\ell} = p_{\ell} x_{i \ell}$ 
denote the amount of money agent $i$ spends on good~$\ell$. Then $\Delta_{\ell}$ can be expressed as $\sum_i b_{i\ell} - p_\ell$. As $p_\ell$ increases and agents shift spending away from the now pricier good, each $b_{i\ell}$ is non-increasing. Consequently, the excess demand $\Delta_\ell$ declines.

Let us examine this phenomenon in the chores-only setting. The Walrasian principle applied to chores requires us to increase the price of the chore with the highest excess demand, i.e., task $\ell$ wth maximum value of $\Delta_{\ell} =  p_{\ell} \cdot (1- \sum_j y_{j \ell})$. Since agents incur disutility from consuming chores and have earning requirements, increasing the price is the only way to make the least demanded chores more attractive. Define the money earned by agent $j$ from chore $\ell$ by $e_{j \ell} = p_{\ell} y_{j \ell}$, and observe that $\Delta_{\ell} =  p_{\ell} - \sum_j e_{j \ell}$. When we increase $p_{\ell}$, crucially observe that $e_{j \ell}$ is non-decreasing, but $p_{\ell}$ is strictly increasing. In certain cases, the increase in $p_\ell$ may dominate the increase in $\sum_j e_{j\ell}$, causing $\Delta_\ell$ to \emph{increase} rather than decrease. Therefore, one cannot argue for monotone decrease of excess demand with a Walrasian procedure in the chores economy!

Finally, we observe the Walrasian updates in the labor market. The principle requires us to increase the prices of the tasks with the highest excess demand $\Delta_{\ell} = p_{\ell} (\sum_i x_{i \ell} - \sum_j y_{j \ell})$. These are tasks for which \firms are willing to demand more hours than workers are willing to provide. Naturally, increasing the price of such tasks reduces their attractiveness to \firms while simultaneously making them more appealing to workers. Now observe that $\Delta_{\ell}$ can be written as $\Delta_{\ell} = \sum_i b_{i \ell} - \sum_j e_{j \ell}$, where $b_{i \ell} = p_{\ell} x_{i \ell}$ and $e_{j \ell} = p_{\ell} y_{j \ell}$. Further, note that $b_{i \ell}$ is inverse monotone in $\p$, and $e_{j \ell}$ is monotone in $p_{\ell}$, making $\Delta_{\ell}$ also inverse monotone in $p_{\ell}$. Therefore increasing $p_{\ell}$ will not increase $\Delta_{\ell}$, but increasing $p_{\ell}$ substantially may help us decrease $\Delta_{\ell}$. \emph{The crucial structural distinction from the chores-only setting lies in the \emph{elasticity of supply}. In the chores economy, supply is fixed, and the leading term in excess demand is simply \( p_\ell \), which increases with price rise. In contrast, in the labor market, that term is replaced by the adaptive \firm bids \( \sum_i b_{i\ell} \), which decreases with price rise. This key difference enables us to design an effective Walrasian dynamics in the labor market, even though they fail in the chores-only setting.} {We manage to show polynomial-time convergence of this dynamics to an exact CE.} 
(See~\cref{sec:walrasian-algo} for a detailed explanation). 

\begin{theorem*}
    There exists a combinatorial algorithm that determines a CE in a linear labor market in polynomial time. Further, the number of arithmetic operations performed by our algorithm is polynomial in $n,m,k, \log(\max_i B_i), \log(\max_j E_j)$.
\end{theorem*}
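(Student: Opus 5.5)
We will design a Devanur–Papadimitriou–Saberi–Vazirani (DPSV)-style ascending-price algorithm adapted to the two-sided market, and control its progress with a potential built from the excess demands $\Delta_\ell=\sum_i b_{i\ell}-\sum_j e_{j\ell}$.

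\emph{Flow network.} Fix prices $\p$. Each \firm $i$ may only buy maximum bang-per-buck (MBB) tasks, i.e.\ tasks maximizing $v_{i\ell}/p_\ell$. Each worker $j$ may only supply minimum pain-per-buck (MPB) tasks, i.e.\ tasks minimizing $c_{j\ell}/p_\ell$. This gives a network source $\to$ \firms (capacity $B_i$) $\to$ tasks via MBB edges $\to$ workers via MPB edges $\to$ sink (capacity $E_j$). A single source--sink flow gives money spent $b_{i\ell}$ and money earned $e_{j\ell}$, with conservation at each task. Then $\p$ is a CE if and only if the maximum flow saturates every source edge and every sink edge. Under linear preferences, the KKT/optimality conditions of \cref{sec:EG-program} reduce to exactly these MBB/MPB conditions together with budget and earning tightness.

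\emph{Phase structure.} We start from a price vector at which the \firm side is the bottleneck, for example by scaling prices so that total demanded money is at most the total earnable money. We then repeatedly do the following:
\begin{enumerate}[label=(\roman*)]
\item Compute a balanced flow, i.e.\ one minimizing $\sum_\ell \Delta_\ell^2$, where $\Delta_\ell$ is now the imbalance of a relaxed flow in which tasks may absorb excess.
\item Identify the set $S$ of tasks with maximum surplus $\Delta_\ell$.
\item Raise the prices in $S$ multiplicatively by a common factor.
\end{enumerate}
Raising prices in $S$ makes those tasks weakly less attractive to \firms and weakly more attractive to workers. Since the price scaling preserves the MBB and MPB edges inside $S$, the surplus $\sum_i b_{i\ell}$ stays fixed or decreases, while $\sum_j e_{j\ell}$ grows.

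\emph{Event points.} The raise stops at the first of three events:
\begin{enumerate}[label=(\alph*)]
\item a new MPB edge from a worker into $S$ appears, or an MBB edge of a \firm leaves $S$;
\item the surplus in $S$ drops to that of the next-highest task, which then joins $S$;
\item a \firm's spending on $S$ can be rerouted, creating a new MBB edge to a task outside $S$.
\end{enumerate}
Each event is computable by solving a linear-fractional ratio test, and the factor is determined by ratios of the input valuations and disutilities.

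\emph{Progress.} The key claim is that the potential $\Phi=\|\Delta\|_2^2$ decreases by a factor $(1-1/\mathrm{poly}(n,m,k))$ every polynomially many events. This is the labor-market analogue of the DPSV/Orlin argument, using the inverse monotonicity of $\Delta_\ell$ in $p_\ell$ highlighted above. The step I expect to be the main obstacle is proving this multiplicative drop. In the chores-only market the analogous quantity can increase, so the argument must use that the leading term is $\sum_i b_{i\ell}$ rather than $p_\ell$. My first attempt would be a balanced-flow exchange argument: when a max-surplus set gains a new edge (to a worker, or from a \firm to a lower-surplus task), redistributing flow along that edge lowers $\sum\Delta_\ell^2$ by at least $\max_\ell\Delta_\ell^2/\mathrm{poly}$. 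This should hold because a positive-surplus set can shed its surplus only by decreasing \firm spending or increasing worker earnings, and both effects are monotone in the right direction.

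\emph{Polynomial bound and rounding.} Since $\Phi\le (\sum_i B_i)^2$ initially, $O(\mathrm{poly}(n,m,k)\log(\max B_i\max E_j/\epsilon))$ phases give $\Phi\le\epsilon$. For $\epsilon$ below the inverse of a polynomial-size bit-length bound on equilibrium prices, we finish by rounding. We guess the MBB/MPB graph of the final iterate and solve the resulting linear system for exact prices, using the same rounding step as in DPSV. Each step is a max-flow or balanced-flow computation, so the operation count is polynomial in $n,m,k,\log\max_i B_i,\log\max_j E_j$. When all $B_i=E_j=1$, scaling arguments remove the log terms, matching the strongly polynomial claim.
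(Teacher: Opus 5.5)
There is a genuine gap at exactly the step you flag, and the DPSV template you borrow does not fill it.

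\textbf{The progress argument.} In this market prices enter only through which edges are MBB/MPB. If the money flows $b,e$ are held fixed, raising the prices of $S$ leaves every $\Delta_\ell$ unchanged, because there are no price-dependent capacities as in DPSV. Three consequences follow.
\begin{itemize}
\item Your event (b) never fires during a raise.
\item The ``bottleneck'' initialization is vacuous: total spending is always $\sum_i B_i=\sum_j E_j$, and uniform rescaling changes nothing.
\item All progress must come from discrete edge appearances.
\end{itemize}
Your proposed per-event lemma is then false. With $S$ the set of exact maximizers, a new edge may only let money move from $\ell\in S$ (surplus $M$) to a task with surplus $M-\epsilon$. Rebalancing across it lowers $\sum_\ell\Delta_\ell^2$ by only $O(\epsilon^2)$, not by $M^2/\mathrm{poly}$. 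You also give no bound on the number of events per phase, and no argument that a raise cannot run off to infinity.

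The paper supplies the missing structure as follows.
\begin{itemize}
\item It builds $S$ by adding tasks in decreasing surplus order until a multiplicative gap of $(1-1/k)$ appears. Then surpluses in $S$ are within a factor $e$ of one another, and $\min_{\ell\in S}\bar\Delta_\ell-\max_{\ell\notin S}\bar\Delta_\ell\ge\frac1k\min_{\ell\in S}\bar\Delta_\ell$.
\item It keeps $S$ fixed for the whole iteration, alternating local spending/earning moves out of $S$ with proportional raises of $S$. The iteration stops when some $\ell\in S$ reaches $\lambda\bar\Delta_\ell$. This gives $\delta_\ell\le(1-\lambda)\bar\Delta_\ell$ on $S$, with equality for at least one $\ell$.
\item It shows each iteration has at most $n+m$ raises. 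Each user or worker gains new edges at a single multiplier. Raising indefinitely would eventually give every worker MPB edges only into $S$, so all earnings could move into $S$. That would force some surplus in $S$ to become nonpositive, so the iteration would have stopped first.
\end{itemize}
Expanding $\sum_\ell(\bar\Delta_\ell\mp\delta_\ell)^2$ with these facts and $\lambda=1-0.09k^{-4}$ yields $\phi^{t+1}\le(1-0.01k^{-7})\phi^t$.

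\textbf{The finishing step.} Your finishing step contradicts the bound you must prove. You stop at an $\epsilon$ below the inverse bit-length of equilibrium prices and then solve a linear system. That makes the number of phases depend on $\log\max v_{i\ell}$ and $\log\max c_{j\ell}$. The statement requires an operation count polynomial in $n,m,k,\log\max_i B_i,\log\max_j E_j$ only, and the same dependence would also break the strongly polynomial CEEEI claim.

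The paper needs no rounding. With integral $B_i,E_j$, the max flow in $\mathcal{G}(\p)$ is integral. Trimming spending on positive-surplus tasks and earnings on negative-surplus tasks turns the current flows into a feasible flow of value at least $\sum_i B_i-\sum_\ell|\Delta_\ell|$. So once $|\Delta_\ell|<1/k$ for all $\ell$, the max flow equals $\sum_i B_i$, and the current $\p$ is already an exact CE price.

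Your balanced-flow subroutine is also only asserted, not given; the paper avoids needing it by using purely local flow moves.
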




Our algorithm is fully combinatorial and does not rely on continuous optimization oracles as black boxes. Moreover, its total arithmetic complexity is independent of the utility and disutility values, implying that it runs in strongly polynomial time when all budgets and earnings are equal. We refer to this special case as the Competitive Equilibrium with Equal Earnings and Incomes (CEEEI), drawing parallels to the classical CEEI model in goods-only markets and the CEEE model in chores-only settings.

\begin{corollary*}
    There exists a strongly polynomial-time algorithm to determine a CEEEI when agents have linear valuation functions.
\end{corollary*}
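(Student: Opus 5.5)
The corollary follows from the preceding theorem by specializing its running-time bound to equal budgets and earning requirements. The plan has three steps.

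\textbf{Step 1 (rescaling).} In a CEEEI instance every \firm has budget $B_i=B$ and every worker has earning requirement $E_j=E$. First I would check whether the instance can be reduced to $B=E=1$. Scaling all budgets by $\alpha>0$ and all prices by $\alpha$ leaves every \firmS optimal bundle unchanged. The same holds for workers under a common scaling of earnings and prices. However, if $B\neq E$, the two sides cannot be rescaled independently, because they share one price vector. So I would not reduce to the all-ones case. Instead I would note that the theorem's arithmetic bound depends on budgets and earnings only through $\log(\max_i B_i)$ and $\log(\max_j E_j)$. With all budgets equal and all earnings equal, these are two fixed numbers, and in the CEEEI case of the paper, where all budgets and earning requirements are one, they are $\log 1=0$. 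The number of arithmetic operations is then $\mathrm{poly}(n,m,k)$.

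\textbf{Step 2 (independence from valuations).} The main obstacle is that strong polynomiality needs more than an operation count that avoids the values $v_{i\ell}$ and $c_{j\ell}$. It also needs the bit-lengths of all intermediate numbers to stay polynomial in the input size. To handle this, I would go back to the combinatorial algorithm of \cref{sec:walrasian-algo}. I would verify the following:
\begin{itemize}
\item Each price update is computed by solving structured linear systems: either max-flow/min-cut on the equality graph, or raising prices of a set by a common multiplicative factor until a new MBB/MPB edge appears or the surplus balances.
\item Every price arising this way is a ratio of products of utility/disutility entries times a rational combination of the budgets and earnings.
\end{itemize}
Each such price is therefore the solution of a linear system whose coefficients are input numbers. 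By Cramer's rule, its bit-length is polynomial in the input size. Consequently every arithmetic operation acts on numbers of polynomial size.

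\textbf{Step 3 (conclusion).} Combining the $\mathrm{poly}(n,m,k)$ operation count from Step 1 with the polynomial bit-length bound from Step 2 gives a strongly polynomial algorithm. The equilibrium it returns is a CE with equal incomes and earnings, that is, a CEEEI.
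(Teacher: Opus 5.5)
Your Steps 1 and 3 match the paper's argument. The main theorem bounds the number of arithmetic operations by $\mathrm{poly}(n,m,k,\log\max_i B_i,\log\max_j E_j)$. By \Cref{lem:allocation-update} this bound does not depend on $\mathbf{v},\mathbf{c}$, and the endowment terms vanish in the unit case.

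You should not decline the rescaling, though. In the paper's CEEEI all $B_i$ and $E_j$ share one common value. So the common scaling of endowments and prices that you verified yourself reduces every CEEEI instance to the all-ones instance. This is equivalent to the cone property of \Cref{thm: convexity-of-prices}, which is what the paper invokes. Without the rescaling, ``two fixed numbers'' of arbitrary size leave the bound depending on their bit-length. Even if $B\neq E$, balance forces $nB=mE$, so scaling by $m/B$ gives integral endowments $m$ and $n$ and only $O(\log(n+m))$ terms.

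The genuine gap is Step 2, which describes a different algorithm. \Cref{algorithm: combinatorial} never sets prices by max-flow/min-cut or by solving a linear system, and budgets and earnings never enter the price values. Prices start at $\mathbf{1}$. The only update multiplies every price in $S$ by a factor $\frac{v_{i\ell_1}p_{\ell_2}}{v_{i\ell_2}p_{\ell_1}}$ (or the worker analogue), with $\ell_1\in S$ and $\ell_2\notin S$, and this factor depends on the current prices. When the MBB/MPB graph is disconnected, the relative scale of its components is a product of such historical factors. The current tight graph does not determine it, so there is no system with input-data coefficients to which Cramer's rule applies. The obvious count $H_t$ of input factors in a price obeys only $H_{t+1}\le 3H_t+2$ per iteration, which is exponential in the number of iterations. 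So your claim that ``every arithmetic operation acts on numbers of polynomial size'' is asserted, not proved.

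The paper's own proof stops at the operation count (\Cref{lem:allocation-update} plus the iteration bound) and does not address encoding length. If you want the stronger claim, you need an extra idea. One option is to re-canonicalize prices at the start of each iteration. Replace them with shortest-path potentials of the difference constraints in $\log p$ that keep every flow-carrying edge MBB/MPB.
\begin{itemize}
\item Each price then becomes a product of at most $k-1$ input ratios.
\item Within an iteration, the cumulative scaling factor is a ratio of two such prices times one input ratio.
\item The convergence analysis is unaffected, since it depends only on the money flows, whose denominators divide powers of the denominator of $\lambda$.
\end{itemize}
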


We further argue that CEEEI is well-motivated from both fairness and efficiency perspectives. For instance, consider a disaster management scenario where a group of laborers is assigned to complete tasks for various charities. In such a context, a fair and Pareto-efficient assignment—balancing the interests of both laborers and charities-- is highly desirable. CEEEI captures these goals through an equilibrium framework (similar to CEEI and CEEE).

\paragraph{{Geometry of the CE set.}} Given the strong computational guarantees established for our labor market, especially in sharp contrast to the chores-only setting-- it is natural to ask whether similarly strong structural results hold for the geometry of competitive equilibria. Unlike goods-only Fisher markets, our model does not admit a convex program formulation in the allocation space.

\textcolor{black}{Nevertheless, we make a series of nontrivial structural observations showing that the set of CE price vectors is in fact convex (\Cref{thm: convexity-of-prices}). The key insight underlying these results is that the presence of a goods-like side of the market—namely, firms with utility-maximizing demand—rules out the pathological price behaviors that arise in chores-only economies.}

\textcolor{black}{Intuitively, while the chores-only Fisher market can admit multiple, highly unstructured equilibrium price vectors, the demand side in our two-sided model disciplines prices-- any price vector that excessively inflates the price of a task may still clear on the supply side if it were a chores-only market, but it will fail to generate demand from firms and hence cannot constitute an equilibrium in the labor market. In this way, the goods side anchors prices and restores geometric regularity to the equilibrium set.}


\begin{theorem*}
    The set of CE prices in a linear labor market is convex. 
\end{theorem*}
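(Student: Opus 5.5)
The plan is to reduce convexity to a statement about money flows. First, write the linear CE conditions in "spending" form. Let $b_{i\ell}=p_\ell x_{i\ell}$ and $e_{j\ell}=p_\ell y_{j\ell}$. Then $\p$ is a CE if and only if there exist nonnegative $(b,e)$ and positive scalars $\beta_i,\alpha_j$ such that the following hold:
\begin{itemize}
\item $\sum_\ell b_{i\ell}=B_i$ and $\sum_\ell e_{j\ell}=E_j$;
\item $\sum_i b_{i\ell}=\sum_j e_{j\ell}$ for every $\ell$;
\item $v_{i\ell}\beta_i\le p_\ell$, with equality whenever $b_{i\ell}>0$ (maximum bang-per-buck);
\item $p_\ell\le c_{j\ell}\alpha_j$, with equality whenever $e_{j\ell}>0$ (minimum pain-per-buck).
\end{itemize}
Here $\beta_i=B_i/u_i$ and $\alpha_j=E_j/d_j$. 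The key observation is about what happens once the flow $(b,e)$ is fixed. The remaining conditions are then \emph{linear} (in)equalities in the joint variable $(\p,\beta,\alpha)$. So the set of $(\p,\beta,\alpha)$ supporting a fixed flow $(b,e)$ is a polyhedron, and its projection onto $\p$ is convex.

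It therefore suffices to prove the following exchange lemma. If $\p$ and $\p'$ are CE prices with flows $(b,e)$ and $(b',e')$, then $(b,e)$ is also an equilibrium flow at $\p'$ (or at least some single flow is valid at both). Given this, take the convex combination $t(\p,\beta,\alpha)+(1-t)(\p',\beta',\alpha')$. It satisfies every inequality, and it satisfies equality on the support of the common flow. Hence $t\p+(1-t)\p'$ is a CE price.

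To prove the exchange lemma, I would compare the two equilibria through ratios. Set $\rho_\ell=p'_\ell/p_\ell$, $\gamma_i=\beta'_i/\beta_i$ and $\delta_j=\alpha'_j/\alpha_j$. The tightness conditions give four one-sided inequalities:
\begin{itemize}
\item $b_{i\ell}>0 \Rightarrow \rho_\ell\ge\gamma_i$;
\item $b'_{i\ell}>0 \Rightarrow \rho_\ell\le\gamma_i$;
\item $e_{j\ell}>0 \Rightarrow \rho_\ell\le\delta_j$;
\item $e'_{j\ell}>0 \Rightarrow \rho_\ell\ge\delta_j$.
\end{itemize}
Then run a level-set argument. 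For a threshold $\tau$, let $S_\tau=\{\ell:\rho_\ell\ge\tau\}$. Consider the users with $\gamma_i\ge\tau$ and the workers with $\delta_j\ge\tau$.

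Under $b'$, users with $\gamma_i<\tau$ cannot spend on $S_\tau$. Under $e$, workers with $\delta_j<\tau$ cannot earn from $S_\tau$. Symmetric statements hold for the complement. By market clearing, the money entering $S_\tau$ equals the money leaving it, in both flows. Comparing the total budgets $\sum B_i$ and earnings $\sum E_j$ of the agent groups across the two flows should give a chain of inequalities in opposite directions, which forces all of them to be tight. The conclusion is that, in both flows, money only moves between agents and tasks at the same ratio level. This means that whenever $b_{i\ell}>0$ we have $\rho_\ell=\gamma_i$, and whenever $e_{j\ell}>0$ we have $\rho_\ell=\delta_j$. These equalities are exactly the tightness conditions for $(b,e)$ at $(\p',\beta',\alpha')$, which proves the lemma.

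I expect the level-set counting to be the main obstacle. In the goods-only case one side's "supply" is the fixed price, whereas here both sides are elastic. So I must check carefully that the user-side inequalities and the worker-side inequalities point in opposite directions, and close the chain without circularity.

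If the direct counting proves awkward, I would fall back on a potential argument. Multiply the market-clearing identities for both flows by $\log\rho_\ell$, sum over $\ell$, and combine with the four support inequalities. The resulting sum of nonnegative terms should equal zero, which again forces every support inequality to hold with equality.
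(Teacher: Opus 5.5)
Your proposal is correct and follows essentially the paper's route: the paper also groups tasks by the ratio $p'_\ell/p_\ell$, uses market clearing to force a chain of inequalities to be tight, and then keeps one equilibrium's money flow fixed along the segment. The differences are in packaging: the paper peels off the top ratio group and inducts where you use thresholds $\tau$ on $\gamma_i,\delta_j$, and it checks MBB/MPB optimality case by case where you use linearity in $(\p,\beta,\alpha)$. The step you flagged does close, since $\sum_{i:\gamma_i\ge\tau}B_i \le \sum_{\ell\in S_\tau}\sum_i b_{i\ell} = \sum_{\ell\in S_\tau}\sum_j e_{j\ell} \le \sum_{j:\delta_j\ge\tau}E_j \le \sum_{\ell\in S_\tau}\sum_j e'_{j\ell} = \sum_{\ell\in S_\tau}\sum_i b'_{i\ell} \le \sum_{i:\gamma_i\ge\tau}B_i$.
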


The convexity of the CE price set strongly suggests the existence of a convex program characterizing equilibria.  
Since the EG-like formulation discussed above is clearly non-convex, the next obvious choice is to work in the dual space of prices as done in~\citep{cole2017convex,chaudhury2024competitive}. This will also help prove structural properties like the convexity of the CE price set. Next we show that, in fact a {\em linear-programming} formulation is possible. We note that, this is surprising because the lack of or the need for an LP formulation even
for the goods-only Fisher market is well-known~\citep{devanur2002market,garg2013simplex,adsul2010simplex}. 


\paragraph{More Structured than CE in the Goods Market? An LP Characterization of CE Prices.} For both the goods-only~\citet{cole2017convex} and chores-only~\citet{chaudhury2024competitive} settings, there exist well-known (non-)convex programs formulated in the price space that characterize competitive equilibria (CE) as their KKT points. These programs can be viewed as natural duals to the respective EG-like formulations and typically involve prices and variables capturing each agent’s \emph{inverse bang-per-buck}, i.e., the inverse of the maximum utility-to-price ratio (referred to as $\beta_i$) in the goods setting, and the inverse of the minimum disutility-to-price ratio in the chores setting (referred to as $\alpha_j$). Building on this framework, we construct an analogous and natural EG-dual program for the labor market. 

\noindent
\begin{minipage}[t]{0.308\textwidth}
\[
\begin{array}{ll}
\vspace{6pt}
&\text{Goods Market EG-Dual}\\
& \max\; \sum_i B_i \log \beta_i - \sum_{\ell} p_{\ell} \\
& \;\; \text{s.t.} \;\; \hspace{4.2pt} p_{\ell} \geq \beta_i v_{i \ell}  \quad \forall\, i, \ell \\
& \hspace{27.2pt} \beta_i \geq 0 \hspace{8.7pt} ~~\quad \forall\, i \\
& \hspace{27.2pt} p_{\ell} \geq 0 \hspace{9pt} ~~\quad \forall\, \ell  
\end{array}
\]
\end{minipage}
\hfill
\begin{minipage}[t]{0.308\textwidth}
\[
\begin{array}{ll}
\vspace{6pt}
&\text{Chores Market EG-Dual}\\
& \min\; \sum_j E_j \log{\alpha_j} - \sum_{\ell} p_{\ell} \\
& \;\; \text{s.t.} \;\; \hspace{4.2pt} p_{\ell} \leq \alpha_j c_{j \ell} \quad \forall\, j, \ell  \\
& \hspace{26.5pt} \alpha_j \geq 0 \hspace{21pt} ~~\forall\, j\\
& \hspace{27.5pt} p_{\ell} \geq 0 \hspace{22pt} ~~\forall\, \ell \\
& \hspace{23.5pt} \sum_{\ell} p_{\ell} = \sum_j E_j\\
\end{array}
\]
\end{minipage}
\hfill
\begin{minipage}[t]{0.37\textwidth}
\[
\begin{array}{ll}
\vspace{5pt}
& \hspace{10pt} \text{Labor Market EG-Dual} \\
& \max\; \sum_i B_i \log {\beta_i} - \sum_j E_j \log{\alpha_j} \\
& \;\;\;\text{s.t.} \;\; \hspace{4.2pt} p_{\ell} \geq \beta_i v_{i \ell} \, \quad \forall\, i, \ell \\
& \hspace{30pt} p_{\ell} \leq \alpha_j c_{j \ell} \quad \forall\, j, \ell \\
& \hspace{30pt} p_{\ell} \geq 0 \hspace{22.5pt} ~~\forall\, \ell \\
& \hspace{30pt} \beta_i \geq 0 \hspace{22.5pt} ~~\forall\, i \\
& \hspace{30pt} \alpha_j \geq 0 \hspace{21.5pt} ~~\forall\, j 
\end{array}
\]
\vspace{1pt}
\end{minipage}

We show that the KKT points of the {\em Labor Market EG-Dual} correspond to CE in the labor market. Each variable $x_{i\ell}$ can be interpreted as the dual variable associated with the constraint $p_\ell \geq \beta_i v_{i\ell}$, while each $y_{j\ell}$ corresponds to the dual of the constraint $p_\ell \leq \alpha_j c_{j\ell}$. {The market-clearing conditions can be derived from the KKT conditions combining both primal and dual variables}. Although the Labor Market EG-Dual is not convex in its original form, a simple change of variables transforms it into a linear program (LP)! Specifically, we take logarithms of the key constraints to obtain:
\begin{align*}
    \log p_{\ell} \geq \log \beta_i + \log v_{i \ell}, \text{ and } \log p_{\ell} \leq \log \alpha_j + \log c_{j \ell}. 
\end{align*}
Letting $\tilde{p}_\ell = \log p_\ell$, $\tilde{\beta}_i = \log \beta_i$, and $\tilde{\alpha}_j = \log \alpha_j$, we arrive at a linear formulation. In these new variables, every CE corresponds to an optimal solution of the following LP: 
\begin{equation}
    \begin{aligned}
        \max \quad & \sum\nolimits_i B_i \tilde{\beta}_i - \sum\nolimits_j E_j \tilde{\alpha}_j \\ 
        \textnormal{s.t.} \quad & \tilde{p}_{\ell} \geq \tilde{\beta}_i + \log(v_{i \ell})  \quad \hspace{2pt} \forall\, i, \ell \\ 
        & \tilde{p}_{\ell} \leq \tilde{\alpha}_j + \log(c_{j \ell}) \quad \forall\, j, \ell. 
    \end{aligned}
    \label{program:EG-Dual-log}
\end{equation}

This transformation is made possible by a key structural feature of the Labor Market EG-Dual: the absence of a price term in the objective. Unlike the duals for goods or chores (which include terms like $\sum_\ell p_\ell$ in the objective), the labor market EG-Dual exhibits a kind of \emph{log-linear homogeneity} that enables this exact linearization.  We note further that an important open problem is whether CE in the goods Fisher market admit a direct LP formulation \cite{garg2013simplex,adsul2010simplex}. In this context, the existence of a clean LP characterization for CE in labor markets is both surprising and noteworthy.

However, we note that the LP formulation may involve irrational inputs, specifically the terms $\log(v_{i\ell})$ and $\log(c_{j\ell})$. This raises an immediate concern about whether the LP can be solved exactly in polynomial time. To address this, we analyze the dual of the LP and propose solving a truncated version, where the irrational inputs (coefficients of the objective function) are approximated to polynomially many bits of precision. We then show that, despite this truncation, it is possible to recover an exact CE via a non-trivial, yet polynomial-time post-processing procedure.

\begin{theorem*}
    There exists a linear program whose optimal solutions correspond to a CE in the labor market. This LP has irrational coefficients, but can be solved in time polynomial in $n,m,k, \log(\max_i B_i), $ $\log(\max_j E_j), \log(\max_{i,\ell} v_{i \ell}), \log(\max_{j, \ell} c_{j \ell})$.
\end{theorem*}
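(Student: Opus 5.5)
The plan has two parts. First I show that optimal solutions of the log-transformed LP \eqref{program:EG-Dual-log} are exactly CE prices. Then I show the LP can be solved and an exact CE recovered in polynomial time despite its irrational coefficients.

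\textbf{Correspondence between optima and CE.} I would write the LP dual of \eqref{program:EG-Dual-log} explicitly. Attach a multiplier $b_{i\ell}\ge 0$ to each constraint $\tilde p_\ell \ge \tilde\beta_i + \log v_{i\ell}$ and a multiplier $e_{j\ell}\ge 0$ to each constraint $\tilde p_\ell \le \tilde\alpha_j + \log c_{j\ell}$. Since $\tilde p,\tilde\beta,\tilde\alpha$ are free variables, the dual equality constraints read:
\begin{itemize}
\item $\sum_\ell b_{i\ell}=B_i$ (from $\tilde\beta_i$);
\item $\sum_\ell e_{j\ell}=E_j$ (from $\tilde\alpha_j$);
\item $\sum_i b_{i\ell}=\sum_j e_{j\ell}$ (from $\tilde p_\ell$).
\end{itemize}
The dual objective is $\min \sum_{j,\ell} e_{j\ell}\log c_{j\ell}-\sum_{i,\ell} b_{i\ell}\log v_{i\ell}$. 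These are exactly the spending-budget, earning-requirement and money-clearing conditions.

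Complementary slackness gives two further facts. First, $b_{i\ell}>0$ only where $p_\ell=\beta_i v_{i\ell}$, i.e.\ on user $i$'s maximum bang-per-buck tasks; this uses feasibility $p_\ell\ge\beta_i v_{i\ell}$, so $1/\beta_i=\max_\ell v_{i\ell}/p_\ell$. Second, $e_{j\ell}>0$ only on worker $j$'s minimum pain-per-buck tasks. Setting $x_{i\ell}=b_{i\ell}/p_\ell$ and $y_{j\ell}=e_{j\ell}/p_\ell$ therefore yields optimal bundles that clear the market, so $\mathbf{p}=e^{\tilde p}$ is a CE.

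Conversely, given a CE, its prices together with $\beta_i,\alpha_j$ defined by the bang-per-buck ratios, and the spending/earning vectors, form a primal–dual feasible pair satisfying complementary slackness, hence both are optimal. I have to handle two points here:
\begin{itemize}
\item Tasks with $v_{i\ell}=0$: drop those constraints.
\item Boundedness: the LP is feasible, and the dual is feasible by the existence theorem (\cref{app:prop-existence}), so strong duality applies.
\end{itemize}
This also gives the convexity of the CE price set in log-space, consistent with \Cref{thm: convexity-of-prices}.

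\textbf{Polynomial-time solution.} The irrationality appears only in the dual objective, since the dual feasible polytope $P$ in $(b,e)$ has rational data $B,E$. My plan is as follows.
\begin{enumerate}
\item Replace each $\log v_{i\ell},\log c_{j\ell}$ by a rational approximation with $L=\mathrm{poly}(\text{input bits})$ bits.
\item Solve the truncated dual LP exactly by the ellipsoid or interior point method, obtaining an optimal vertex $(b,e)$.
\item Argue that this vertex is optimal for the true objective as well.
\end{enumerate}
The key fact for the third step is that vertices of $P$ have polynomially bounded encoding. Two distinct vertices therefore differ in true objective either by zero or by at least a gap $\delta$. To bound $\delta$ from below, write the objective difference as $\log$ of a ratio of products of integer powers $\prod c^{e}/\prod v^{b}$. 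Then lower-bound $|\sum_t q_t\log a_t|$ for rational $q_t$ and integers $a_t$, either via a Baker-type bound, or more elementarily by clearing denominators to get $\log(A/A')$ with $A\ne A'$ integers of bounded size, so $|\log(A/A')|\ge 1/A'$.

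If this gap argument is only approximate, I would fall back on post-processing. From the support of the computed $(b,e)$, form the equality system $\tilde p_\ell=\tilde\beta_i+\log v_{i\ell}$, $\tilde p_\ell=\tilde\alpha_j+\log c_{j\ell}$ on the support edges. Exponentiating turns it into a multiplicative system for $p,\beta,\alpha$, which has rational solutions on each connected component up to scaling. Then verify the inequalities and, if necessary, adjust by a combinatorial step such as a short run of the Walrasian algorithm from \cref{sec:walrasian-algo}.

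\textbf{Main obstacle.} I expect the hardest step to be the gap bound: the exponents $b,e$ have rational denominators of polynomial size, and the products $A,A'$ have exponentially large value but polynomial bit length. This is exactly what keeps $L$ polynomial.
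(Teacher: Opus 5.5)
Your first half (dualizing the log-space LP and using complementary slackness to identify optimal solutions with CE) is the same argument as the paper's proof of \Cref{thm: LP-Equivalence}, and it is fine. The gaps are in the polynomial-time half.

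\textbf{The gap bound is not polynomial as argued.} For integral $B_i,E_j$ the feasible set of \eqref{program:EG Dual log Dual} is a network-flow polytope, so its vertices are integral. Their coordinates, however, can be as large as $\sum_i B_i$. In $\log(A/A')$ the exponents are differences of these coordinates, so $\log A'$ can be of order $(n+m)k\cdot\sum_i B_i\cdot\log(UD)$. The bit length of $A'$ therefore grows with the \emph{value} of the budgets, not with their encoding length; your last paragraph conflates the two. The estimate $|\log(A/A')|\ge 1/A'$ then forces $L=\Omega(\sum_i B_i\log(UD))$, which is only pseudo-polynomial.

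A Baker-type bound removes the dependence on exponent size, but its constant grows at least exponentially in the number of logarithms, here up to $(n+m)k$. So it does not give polynomial $L$ either. This is essentially the obstruction the paper points out for truncating the LP directly.

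The missing idea is to use the flow structure and compare only along simple cycles. Suppose the truncated optimum $(b,e)$ were not truly optimal.
\begin{itemize}
\item Its difference to a true optimum decomposes conformally into simple cycles of the residual network of $(b,e)$, and one of these cycles has negative true cost.
\item That cycle has at most $n+m+k$ arcs, so (for integral data) its true cost is $\log(A/A')$ with $A,A'\le\max(U,D)^{n+m+k}$. Hence it is at most $-\max(U,D)^{-(n+m+k)}$.
\item Its truncated cost is nonnegative, by optimality of $(b,e)$ for the truncated objective.
\item The two costs differ by at most $(n+m+k)2^{-L}$, which is a contradiction once $L$ is a suitable polynomial.
\end{itemize}
The paper exploits the same phenomenon in a different form. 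Its extracted prices are path products $t_\ell$ of at most $k$ ratios, which yields the rational gap $k^{-1}(UD)^{-k^2}$ of \Cref{claim: p-coordinate-bound}, set against the perturbation bound $|p^*_\ell-\tilde p_\ell|\le 24k\,2^{-L}$.

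\textbf{Exact price recovery is only sketched.} An optimal $(b,e)$ is not yet a CE: you also need exact prices, and the dual of the truncated LP gives only approximate ones. Complementary slackness pins down price ratios only inside each connected component of the support graph. The relative scalings of different components must still satisfy all remaining MBB/MPB inequalities. Your fallback (``verify the inequalities and, if necessary, adjust \ldots\ a short run of the Walrasian algorithm'') gives no procedure for choosing these scalings and no bound on the adjustment.

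The paper handles this in two steps.
\begin{itemize}
\item It first connects the MBB/MPB graph of the rounded equilibrium through component-wise price scalings that never delete an edge (\Cref{lem: graph-of-adjusted-prices}). The exact prices then become the unique solution of a nonsingular linear system in the original data.
\item It then proves $\mathcal{E}(\mathbf{p}^*)\supseteq\tilde{\mathcal{E}}(\tilde{\mathbf{p}})$ via the gap and perturbation estimates above.
\end{itemize}
Alternatively, once $(b,e)$ is known to be truly optimal, the cross-component constraints have the form $s_{C'}/s_C\ge r$ with rational $r$. You could solve them by Bellman--Ford in exact rational arithmetic; feasibility is guaranteed by complementary slackness with any optimal dual. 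Either way, this step needs an actual argument.
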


\paragraph{Model Based on Alternate Behavior of the Workers.} For completeness, we also discuss in Appendix~\ref{app-alternate-model} an alternate formulation where users behave identically, but workers are modeled as profit maximizers rather than disutility minimizers. We are able to derive similar structural and computational results on the equilibrium by adapting the Eisenberg–Gale (EG) convex program. Consequently, while this formulation offers a natural parallel to our main model, its analysis is relatively straightforward. However, it lacks the empirically grounded behavioral motivation incorporated by the income-targeting formulation.

\subsection{Summary and Discussion.} To summarize, bringing together the demand and the supply side in the Fisher market setting, we introduce a model for competitive pricing in labor markets.
Under very general utility and cost functions, we establish the existence of competitive equilibrium (CE) and prove the fundamental first and second welfare theorems.

Computationally and structurally, it is natural to expect such a market to be harder than the goods-only and chores-only market. We show to the contrary in a series of suprising results: First, we design a combinatorial polynomial-time algorithm to compute a CE, which becomes strongly polynomial in the special case of equal incomes and equal earnings. Secondly, despite incorporating a chores-like component, the labor market model exhibits very well-behaved structure: the set of CE prices is convex, and the equilibrium prices can be characterized by a linear program. Notably, this LP formulation remains efficiently solvable even though it may involve irrational coefficients. We believe that our work can serve as an initiation to several research directions-- opening up new questions for classical theories and directions for emerging areas. We highlight two of them here:
\begin{itemize}
    \item \emph{New Problems in Fair Division:} The special cases of equal incomes in goods Fisher markets and equal earnings in chores Fisher markets have received significant attention in the fair division literature, owing to their strong fairness and efficiency guarantees. These settings yield allocations that are envy-free and Pareto-efficient. Building on this, several relaxations have been proposed and studied in detail for indivisible goods-- such as Almost CEEI~\citep{segal2020competitive}, EF1 + PO~\citep{aziz2022fair},-- to preserve fairness and efficiency in discrete settings. 
    
    We believe that exploring analogous questions for indivisible tasks in labor markets is a compelling direction for future research. In particular, we show that a CEEEI (Competitive Equilibrium with Equal Earnings and Incomes) offers a promising benchmark: it can guarantee envy-freeness among both \firms and workers as well as Pareto-efficiency in the allocation. Investigating whether such guarantees, or appropriate relaxations of the same, can be achieved in discrete labor markets-- along with tractable algorithms, presents a rich and natural avenue for future work.  
    
    \item \emph{Investigating Natural Dynamics in the Labor Market:} Dynamics have been part of equilibrium computation since its inception. Several dynamics such as \emph{Tâtonnement}~\citet{nan2025convergence}, \emph{Pacing}~\citet{gao2021online,yang2024online}, or \emph{Proportional Response}~\citet{zhang2011proportional}, are known to converge to a CE in the goods Fisher market. The convergence results are unknown for the chores Fisher market, primarily due to the irregular geometry of the CE in the chores market. In contrast, the CE in labor markets is well-behaved, and a Walrasian update converges to it in polynomial time. These features suggest that natural dynamics may exhibit convergence properties in this setting. We believe that investigating the convergence and also the convergence rates of such dynamics in labor markets represents an interesting future research direction. 

\end{itemize}

%% file: picture.tex
\begin{figure}[htbp]
\centering
\begin{minipage}[t]{0.48\linewidth}
\centering
\begin{tikzpicture}[scale=0.1]
\tikzstyle{every node}+=[inner sep=0pt]
\draw [black] (11.4,-22.9) circle (3);
\draw (11.4,-22.9) node {\small$\#1$};
\draw [black] (69.1,-14.9) circle (3);
\draw (69.1,-14.9) node {\small$\#1$};
\draw [black] (39.7,-31.1) circle (3);
\draw (39.7,-31.1) node {\tiny{${\$0.5}$}};
\draw [black] (39.7,-31.1) circle (2.4);
\draw [black] (39.7,-14.9) circle (3);
\draw (39.7,-14.9) node {\small$\$1$};
\draw [black] (39.7,-14.9) circle (2.4);
\draw [black] (70.1,-46.4) circle (3);
\draw (70.1,-46.4) node {\small$\#2$};
\draw [black] (11.4,-40.1) circle (3);
\draw (11.4,-40.1) node {\small$\#2$};
\draw [black] (39.7,-46.4) circle (3);
\draw (39.7,-46.4) node {\small$\$1$};
\draw [black] (39.7,-46.4) circle (2.4);
\draw [black] (14.28,-23.73) -- (36.82,-30.27);
\fill [black] (36.82,-30.27) -- (36.19,-29.56) -- (35.91,-30.52);
\draw (24.73,-27.55) node [below] {$1$};
\draw [black] (14.29,-22.08) -- (36.81,-15.72);
\fill [black] (36.81,-15.72) -- (35.91,-15.45) -- (36.18,-16.41);
\draw (26.35,-19.46) node [below] {$0.5$};
\draw [black] (14.33,-40.75) -- (36.77,-45.75);
\fill [black] (36.77,-45.75) -- (36.1,-45.09) -- (35.88,-46.06);
\draw (24.91,-43.83) node [below] {$0.5$};
\draw [black] (14.26,-39.19) -- (36.84,-32.01);
\fill [black] (36.84,-32.01) -- (35.93,-31.78) -- (36.23,-32.73);
\draw (26.42,-36.14) node [below] {$1$};
\draw [black] (42.38,-32.45) -- (67.42,-45.05);
\fill [black] (67.42,-45.05) -- (66.93,-44.25) -- (66.48,-45.14);
\draw (53.91,-39.25) node [below] {$2$};
\draw [black] (42.7,-14.9) -- (66.1,-14.9);
\fill [black] (66.1,-14.9) -- (65.3,-14.4) -- (65.3,-15.4);
\draw (54.4,-15.4) node [below] {$0.5$};
\draw [black] (41.75,-44.21) -- (67.05,-17.09);
\fill [black] (67.05,-17.09) -- (66.14,-17.34) -- (66.87,-18.02);
\draw (54.93,-32.12) node [right] {$0.5$};
\end{tikzpicture}
\end{minipage}%
\hfill
\begin{minipage}[t]{0.48\linewidth}
\centering
\begin{tikzpicture}[scale=0.1]
\tikzstyle{every node}+=[inner sep=0pt]
\draw [black] (11.4,-22.9) circle (3);
\draw (11.4,-22.9) node {\small$\#1$};
\draw [black] (69.1,-14.9) circle (3);
\draw (69.1,-14.9) node {\small$\#1$};
\draw [black] (39.7,-31.1) circle (3);
\draw (39.7,-31.1) node {\small$\$1$};
\draw [black] (39.7,-31.1) circle (2.4);
\draw [black] (39.7,-14.9) circle (3);
\draw (39.7,-14.9) node {\small$\$1$};
\draw [black] (39.7,-14.9) circle (2.4);
\draw [black] (69.1,-46.4) circle (3);
\draw (69.1,-46.4) node {\small$\#2$};
\draw [black] (11.4,-40.1) circle (3);
\draw (11.4,-40.1) node {\small$\#2$};
\draw [black] (39.7,-46.4) circle (3);
\draw (39.7,-46.4) node {\small$\$1$};
\draw [black] (39.7,-46.4) circle (2.4);
\draw [black] (14.29,-22.08) -- (36.81,-15.72);
\fill [black] (36.81,-15.72) -- (35.91,-15.45) -- (36.18,-16.41);
\draw (26.35,-19.46) node [below] {$1$};
\draw [black] (14.33,-40.75) -- (36.77,-45.75);
\fill [black] (36.77,-45.75) -- (36.1,-45.09) -- (35.88,-46.06);
\draw (24.91,-43.83) node [below] {$1$};
\draw [black] (42.36,-32.48) -- (66.44,-45.02);
\fill [black] (66.44,-45.02) -- (65.96,-44.2) -- (65.5,-45.09);
\draw (53.41,-39.25) node [below] {$1$};
\draw [black] (42.7,-14.9) -- (66.1,-14.9);
\fill [black] (66.1,-14.9) -- (65.3,-14.4) -- (65.3,-15.4);
\draw (54.4,-15.4) node [below] {$x$};
\draw [black] (41.75,-44.21) -- (67.05,-17.09);
\fill [black] (67.05,-17.09) -- (66.14,-17.34) -- (66.87,-18.02);
\draw (54.93,-32.12) node [right] {$1-x$};
\end{tikzpicture}
\end{minipage}

\caption{An equilibrium \textit{(left)} and a non-equilibrium \textit{(right)} for the instance in \Cref{example: CE}, with $2$ \firms \textit{(left nodes)}, $3$ tasks \textit{(middle nodes)}, and $2$ workers \textit{(right nodes)}. The demand and supply are shown with directed edges, with the corresponding amounts of allocation marked on the edges.}
\label{fig: example}
\end{figure}

%% file: related-work.tex
\subsection{Related Work}

\paragraph{CE with goods.}
Algorithms for computing a CE have been developed since the 1960s, 
and gained significant attention in the computer science community beginning in the 2000s~\citep{codenotti2004computation}. 
In goods Fisher markets, 
the EG program plays a central role in the design of polynomial-time algorithms, 
as the ellipsoid method yields the first polynomial-time algorithm in the literature. 
\citet{ye2008path} showed a polynomial-time algorithm based on an interior point method, which significantly improved the complexity bound. 
More recently, the EG program has also served as the foundation for designing algorithms for solving large-scale market equilibria~\citep{gao2020first,nan2023fast,liu2025pdhcg}. 

\citet{devanur2002market} provided the first combinatorial algorithm to find an exact CE. Their algorithm views the allocation of money as flows and iteratively improves the balanced flow. 
\citet{orlin2010improved} improved the running time and presented the first strongly polynomial-time algorithm. 
There is also a line of research that studies tâtonnement as a polynomial-time algorithm for computing approximate CE; see, for example, \citet{codenotti2005market, cheung2013tatonnement, nan2025convergence}. 

Another powerful convex program for Fisher market with goods is known as Shmyrev program~\citep{shmyrev2009algorithm} which is defined over the spending money flow space. 
\citet{birnbaum2011distributed} showed that the well-known proportional response dynamics~\citep{wu2007proportional} is equivalent to applying the mirror descent method to the Shmyrev program. 
\citet{cole2017convex} showed a formal connection between the EG and Shmyrev programs. 

In more general goods exchange markets, 
polynomial and strongly polynomial algorithms have also been extensively studied in the literature~\citep{jain2007polynomial,ye2008path,duan2015combinatorial,garg2019strongly,chaudhury2018combinatorial,duan2016improved}, 
along with new convex programs~\citep{devanur2016rational}. 

\paragraph{CE with chores.} 
The EG-like characterization introduced in~\citet{bogomolnaia2017competitive} initialized the study on CE with chores in Fisher and exchange markets. 
Computing a CE with chores has been seen as a hard problem, due to its associated nonconvexity and disconnectedness:~\citet{chaudhury2024competitive} showed that computing CE in exchange markets with chores is PPAD-hard. 
To compute an exact CE, polynomial-time algorithms are only known when the number of agents or items is fixed~\citep{branzei2019algorithms,garg2020computing}. 

\citet{chaudhury2024competitive} proposed a combinatorial polynomial-time algorithm to compute an approximate CE. 
By using an exterior point method, 
\citet{boodaghians2022polynomial} showed a polynomial-time algorithm to compute approximate CE based on the EG-like program. 
\citet{chaudhury2024competitive} introduced a new EG dual program that addresses the issue of the open constraint in the feasible set, thereby opening up the possibility of applying standard optimization methods to compute CE with chores. 
They proposed a greedy Frank–Wolfe algorithm, achieving improved time complexity. \citet{chen2024computing} applied the difference-of-convex algorithm to a variant of the EG dual program, and showed a linear convergence rate by considering an error bound condition. 

\paragraph{Related market models.} 
Several labor market models have been studied in the economics and computer science literature. 
One line of work studies many-to-one job matching markets; 
see, for example, \citet{kelso1982job,bando2012many}. 
In contrast to our setting, 
these models do not involve tasks and focus solely on “direct” matching between \firms and workers. 
Very recently, \citet{garg2024matching} introduced a one-sided matching market with goods and chores. 
However, their model only considers matching between a single group of agents and items, making it fundamentally different from ours. 
\citet{singer2011pricing} studied an online labor market in which a single requester assigns tasks to potential workers through a pricing mechanism. 
In contrast, our model considers task allocation on both \firm and worker sides. 



%% file: sec-preliminaries.tex
\section{Model and Preliminaries}
\label{sec:model-pre}

We describe our labor market model. We assume there are $n$ \firms, $m$ workers, and $k$ divisible tasks that \firms can request workers to work on; denote the set of users, workers, and tasks as $[n], [m], [k]$, respectively. Each \firm $i$ has a budget $B_i>0$ to spend to get its tasks done, and while each worker $j$ wants to earn $E_j>0$ by doing the tasks. Naturally, the supply and demand of these tasks are governed by the prices of the tasks and the preferences of the \firms and workers.  

For each unit of task $\ell \in [k]$, \firm $i\in[n]$ receives {\em utility} $u_{i\ell}$, while worker $j$ receives {\em disutility} $c_{i\ell}$. Their valuation across tasks is linear (unless stated otherwise): let $x_i = (x_{i1}, \cdots, x_{ik}) \in \mathbb{R}_+^k$ denote the allocation of the tasks to \firm $i$, where $x_{i\ell}$ is the amount of task $\ell$ completed for \firm $i$ by the workers. Similarly, $y_j = (y_{j1}, \cdots, y_{jk}) \in \mathbb{R}_+^k$ denotes the allocation of the tasks to worker $j$, where $y_{j\ell}$ is the amount of task $\ell$ that is completed by worker $j$. Then, from this allocation \firm $i$ receives utility $u_i(x_i) = \sum_{\ell=1}^k v_{i\ell} x_{i\ell}$, and worker $j$ receives disutility or cost $d_j(y_j) = \suml c_{j\ell} y_{j\ell}$. 
Let $\mathbf{v} = (v_{i\ell})_{i\in [n], \ell \in[k]} \in \mathbb{R}_+^{n\times k},\mathbf{c} = (c_{j\ell})_{j\in[m], \ell \in[k]} \in \mathbb{R}_+^{m\times k}, \mathbf{x} = (x_{i\ell})_{i\in [n], \ell \in[k]} \in \mathbb{R}_+^{n\times k}$, and $\mathbf{y} = (y_{j\ell})_{j\in[n]}\in \mathbb{R}_+^{m\times k}$.

Both the \firmsS budgets and workers' earning requirements are exogenous to the model. The total budgets and the total earning requirements in the market are assumed to be equal, i.e., $\sumi B_i = \sumj E_j$. The \textit{prices} $\mathbf{p} =(p_1, \cdots, p_k) \in \mathbb{R}_+^k$ are to be assigned to tasks, where $p_\ell$ captures the per-unit price of task $\ell\in [k]$. 
Given the prices and allocations, we let $b_{i\ell} = p_\ell x_{i\ell}$ be the \textit{spending} of \firm $i$ to task $\ell$, and $e_{j\ell} = p_\ell x_{j\ell}$ be the \textit{earning} of worker $j$ from task $\ell$. Let $\mathbf{b} = (b_{i\ell})_{i\in [n], \ell \in[k]} \in \mathbb{R}_+^{n\times k},\mathbf{e} = (e_{j\ell})_{j\in[m], \ell \in[k]} \in \mathbb{R}_+^{m\times k}$. 

\begin{definition}[Labor Market Instance]
    The input to the labor market problem is a tuple $(n,m, k, \mathbf{v},\mathbf{c},\newline  (B_i)_{i=1}^n, (E_j)_{j=1}^m)$ where $n, m, k$ are numbers of users, workers, and tasks, $\mathbf{v} \in \mathbb{R}_+^{n\times k}, \mathbf{c}\in \mathbb{R}_+^{m\times k}$ are unit utilities and disutilities, $(B_i)_{i=1}^n, (E_j)_{j=1}^m$ are endowed budgets and earning requirements. The output is a tuple $(\mathbf{p}, \mathbf{x}, \mathbf{y})$, where $\mathbf{p}\in \mathbb{R}_+^n$ denote the prices of the tasks, and $\mathbf{x}, \mathbf{y} $ are allocations of the tasks to \firms and workers, respectively.
\end{definition}



The solution concept we consider is the \textit{competitive equilibrium} (CE), also called the market equilbirium, which is a tuple of prices and allocations under which all tasks requested by \firms are completed by the workers (with no redundant tasks that are completed but not requested), and the allocation to each agent belongs to their \textit{demand} (for users) or \textit{supply} (for workers), the set of optimal bundles under the prices. Specifically, the demand $\mathcal{D}_i^*(\p)$ of \firm $i$ is the set of bundles that maximize her utility subject to the budget constraint under $\p$, and the supply $\mathcal{S}_j^*(\p)$ of worker $j$ is the set of bundles that minimize her disutility while meeting the earning requirement under $\p$. 

\begin{definition}[Competitive Equilibrium]
    A triple $(\mathbf{p}, \mathbf{x}, \mathbf{y})$
    is a competitive equilibrium if and only if it satisfies
    \begin{itemize}
        \item \Firm optimality: 
        $
            \x_i \in \mathcal{D}_i^*(\p) = \underset{\x'_i \in \mathbb{R}^k_+}{\textnormal{argmax}} \Big\{ u_i(\x'_i) \;\; \textnormal{s.t.} \; \sum\nolimits_{\ell = 1}^k p_\ell x'_{i \ell} \leq B_i \Big\} \quad \forall\, i \in [n]; 
        $
        \item Worker optimality:
        $
            \y_j \in \mathcal{S}_j^*(\p) = \underset{\y'_j \in \mathbb{R}^k_+}{\textnormal{argmin}} \Big\{ d_j(\y'_j) \;\; \textnormal{s.t.} \; \sum\nolimits_{\ell = 1}^k p_\ell y'_{j \ell} \geq E_j \Big\}  \quad \forall\, j \in [m]; 
        $
        \item Market clearing: $
            \sum_{i = 1}^n x_{i\ell} = \sum_{j=1}^m y_{j\ell}\quad \forall\, \ell \in [k].
        $
    \end{itemize}
\end{definition}


\Firm optimality implies that each \firm requests those items that has the \textit{maximum bang per buck} (MBB) for her. Similarly, worker optimality requires that each worker gets items that has the \textit{minimum pain per buck} (MPB).
\begin{proposition}
    \label{prop: mbb/mpb}
    For any competitive equilibrium $(\mathbf{p}, \mathbf{x}, \mathbf{y})$, we have for any $i\in[n], j\in[m], k\in[n]$,
    \begin{equation*}
        x_{i\ell}>0 \implies \ell \in \arg \max_{\ell^\prime\in [k]} v_{i\ell^\prime}/p_{\ell^\prime}{\; \; \textnormal{and}\;\;}
        y_{j\ell}>0 \implies \ell \in \arg \min _{\ell^\prime \in[k]} c_{i\ell^\prime}/p_{\ell^\prime}.
    \end{equation*}
\end{proposition}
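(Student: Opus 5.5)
The plan is a local exchange argument applied to each agent's optimization problem separately; market clearing is never used. We fix the conventions $v/0=+\infty$ for $v>0$ and $c/0=+\infty$ for $c>0$. First we dispose of zero prices. If some task $\ell'$ has $p_{\ell'}=0$ and $v_{i\ell'}>0$, then user $i$ can raise $x_{i\ell'}$ without bound at no cost. Her utility is then unbounded, so $\mathcal{D}^*_i(\p)=\emptyset$, which contradicts $\x_i\in\mathcal{D}^*_i(\p)$. Hence at a CE every task that some user values positively has a positive price, and the argmax in the statement is well defined. Symmetrically, a worker never strictly benefits from a zero-priced task, since it adds cost and no earning.

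\emph{User side.} Suppose $x_{i\ell}>0$ but some $\ell'$ has $v_{i\ell'}/p_{\ell'}>v_{i\ell}/p_{\ell}$. We handle two cases.
\begin{itemize}
\item If $p_\ell=0$, then $v_{i\ell}/p_\ell$ is $+\infty$ (or $v_{i\ell}=0$, a degenerate tie that we set aside by convention). In the first case no ratio can exceed it, so there is nothing to prove.
\item If $p_\ell>0$, then by the previous paragraph $p_{\ell'}>0$ as well. Transfer spending $\delta p_\ell$ from $\ell$ to $\ell'$: decrease $x_{i\ell}$ by $\delta\in(0,x_{i\ell}]$ and increase $x_{i\ell'}$ by $\delta p_\ell/p_{\ell'}$. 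Spending is unchanged, so the bundle stays budget feasible. The utility changes by $\delta p_\ell\,(v_{i\ell'}/p_{\ell'}-v_{i\ell}/p_\ell)>0$, which contradicts optimality of $\x_i$.
\end{itemize}

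\emph{Worker side.} The argument mirrors the user side. Suppose $y_{j\ell}>0$ but some $\ell'$ has $c_{j\ell'}/p_{\ell'}<c_{j\ell}/p_\ell$. Then $p_{\ell'}>0$, because otherwise the left-hand ratio would be $+\infty$.
\begin{itemize}
\item If $p_\ell=0$ and $c_{j\ell}>0$, then dropping $y_{j\ell}$ entirely keeps earnings unchanged and strictly lowers cost, a contradiction.
\item If $p_\ell>0$, decrease $y_{j\ell}$ by $\delta$ and increase $y_{j\ell'}$ by $\delta p_\ell/p_{\ell'}$. Earnings are preserved, so the earning constraint $\sum_\ell p_\ell y_{j\ell}\ge E_j$ still holds. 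The disutility changes by $\delta p_\ell\,(c_{j\ell'}/p_{\ell'}-c_{j\ell}/p_\ell)<0$, which contradicts $\y_j\in\mathcal{S}^*_j(\p)$.
\end{itemize}

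The only delicate point is the treatment of zero prices and zero valuations, that is, making the ratios well defined. The first paragraph handles this via the emptiness of the demand set. Everything else is the standard linear exchange argument.
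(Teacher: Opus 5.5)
Your proof is correct, and it is the standard exchange argument. The paper gives no separate proof: it treats the statement as an immediate consequence of user and worker optimality, and that consequence is exactly what you verify.

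One simplification is available. Under the paper's standing nontriviality assumption, every task is valued positively by some user and all $c_{j\ell}>0$. Your first paragraph then already gives $\mathbf{p}>0$ at every CE, as the paper's subsequent proposition also shows. That makes your zero-price case analysis, including the $0/0$ convention, vacuous.
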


We also make some standard assumptions to avoid the trivial case where some prices are zero. First, we assume all unit disutilities $c_{j\ell}$ are strictly positive; otherwise worker $j$ could always provide infinite supply of task $\ell$. Second, we assume there are no trivial \firms or tasks, i.e., those giving receiving all-zero values. Under these assumptions, the equilibrium prices are always positive, and all budgets and earning requirements will be cleared exactly; we will assume that they hold throughout the paper.  

\begin{assumption}
\label{assumption: nontrivial}
    For each task $\ell \in [k]$, we have $v_{i\ell}>0$ for at least one \firm $i\in[n]$ and $c_{j\ell}>0$ for all $j\in[m]$. For each \firm $i\in[n]$, we have $v_{i\ell}>0$ for at least one task $\ell \in[k]$.
\end{assumption}

\begin{proposition}
    Under any input that satisfies \Cref{assumption: nontrivial}, the competitive equilibrium $(\mathbf{p}, \mathbf{x}, \mathbf{y})$ satisfies $\mathbf{p}>0$, and all budgets and earning requirements are cleared exactly, i.e., $\suml p_\ell x_{i\ell} = B_i$ for all $i\in[n]$ and $\suml p_\ell y_{j\ell} = E_j$ for all $j\in[m]$.
\end{proposition}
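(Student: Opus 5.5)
We argue directly from the three defining conditions of a CE, using \Cref{assumption: nontrivial}. The order is: first rule out zero prices, then show every user spends her whole budget, and finally use a money-balance identity to force every worker to earn exactly $E_j$.

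\emph{Step 1: $\p>0$.} Suppose $p_\ell=0$ for some task $\ell$. By \Cref{assumption: nontrivial} there is a user $i$ with $v_{i\ell}>0$. For this user, adding any amount of task $\ell$ costs nothing and strictly raises $u_i$, so her utility maximization problem has no maximizer. Hence $\mathcal{D}_i^*(\p)=\emptyset$, which contradicts user optimality. So every price is strictly positive.

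\emph{Step 2: budgets are spent exactly.} Fix a user $i$. By \Cref{assumption: nontrivial} she has some task $\ell$ with $v_{i\ell}>0$, and $p_\ell>0$ by Step 1. If $\suml p_\ell x_{i\ell}<B_i$, she could buy a little more of $\ell$ within budget and strictly increase her utility, contradicting $\x_i\in\mathcal{D}_i^*(\p)$. Hence $\suml p_\ell x_{i\ell}=B_i$.

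\emph{Step 3: earnings are met exactly.} Worker feasibility gives $\suml p_\ell y_{j\ell}\ge E_j$ for every $j$. Multiplying market clearing by $p_\ell$ and summing over $\ell$ gives
\[
\sumj \suml p_\ell y_{j\ell}=\suml p_\ell \sumi x_{i\ell}=\sumi B_i=\sumj E_j ,
\]
where the middle equality uses Step 2 and the last uses the standing assumption $\sumi B_i=\sumj E_j$. A sum of nonnegative slacks $\suml p_\ell y_{j\ell}-E_j$ that totals zero forces each slack to be zero, so $\suml p_\ell y_{j\ell}=E_j$ for every $j$.

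There is an alternative route for Step 3 that avoids the money-balance identity. Since $c_{j\ell}>0$ and $p_\ell>0$, a worker earning strictly more than $E_j$ could scale $\y_j$ down proportionally and strictly lower her disutility. This contradicts worker optimality, with one exception: $\y_j=0$ is impossible because $E_j>0$.

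Main obstacle: none of this is deep. The only subtle point is Step 1, which relies on a zero-priced, positively valued task making the demand set empty rather than merely unbounded, together with the fact that some user values every task.
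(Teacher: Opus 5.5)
Your proof is correct and follows the paper's three-part structure: positive prices, then budget depletion, then exact earnings. The differences are local.

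In Step~1, the paper says the user's demand for a zero-priced task is ``infinite'' while every worker supplies zero of it (using $c_{j\ell}>0$), contradicting market clearing. You instead observe that the argmax defining $\mathcal{D}_i^*(\p)$ is empty, which contradicts user optimality directly. Your version is more precise, since $x_{i\ell}=+\infty$ is not an allocation in $\mathbb{R}_+^k$, and it needs nothing about the workers.

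In Step~3, the paper's argument is exactly your alternative route: a worker who over-earns could cut back and strictly lower her disutility because $c_{j\ell}>0$. Your primary money-balance argument instead derives exact earnings from market clearing, Step~2, and the standing assumption $\sumi B_i=\sumj E_j$. It does not need $c_{j\ell}>0$, but it does depend on global balance. The paper's per-worker argument works without that assumption, and in fact shows that balance is a consequence of a CE existing.
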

\begin{proof}
    For the first statement, suppose \Cref{assumption: nontrivial} holds but $p_\ell = 0$ for some $\ell \in[k]$. Since there exists $i\in[n]$ with $v_{i\ell}>0$, the demand of \firm $i$ for $\ell$ will be infinity, i.e., $x_{i\ell}^\prime = +\infty$ for any $x_i^\prime \in \mathcal{D}_i^*(\p)$. This leads to a contradiction: task $\ell$ should have zero supply $y_{j\ell} =0$ for all $j\in[m]$, since any worker suffers a positive disutility but earns nothing on $\ell$.
    
    For the second statement, notice that at any prices a \firm $i$ can always turn budgets into positive utility through some $\ell$ with $v_{j\ell}>0$, so allocations in $\mathcal{D}_i^*(\p)$ must deplete the budgets. Similarly, a worker $j$ will always suffer extra disutility whenever she increase her earning, so from the allocations in $\mathcal{S}_i^*(\p)$ the worker earns no more than the minimum earning requirement. 
\end{proof}

A natural immediate question is whether a competitive equilibrium always exist. In Appendix \ref{app:prop-existence} we show that it exist not only for linear utility and disutility functions, but much more general class of quasi-concave and quasi-convex utility and disutility functions (under mild non-satiation assumptions that are standard in the literature). Not only that, the sought-after optimality guarantees of the first and second welfare theorems are satisfied at CE (see Appendix \ref{app:prop-welfare-fairness}).

\paragraph{CEEEI and Fair Allocation.}
When all assigned budgets earning requirements are equal ($B_i =\cdots  = B_n =E_1=\cdots = E_m$), the equilibrium solution is known as \textit{competitive equilibrium with equal earning and incomes} (CEEEI). This special case is closely related to the notion of fair allocation, as it provides equal endowments on agents, potentially capturing the same level of purchasing or earning ability. In our labor market model, CEEEI indeed satisfies the fairness properties of envy-freeness and proportionality (see \Cref{thm: fair}).

%% file: sec-EG.tex
\section{Eisenberg-Gale Programs} 
\label{sec:EG-program}

{For the goods-only Fisher market, the celebrated Eisenberg-Gale (EG) convex program \cite{eisenberg1959consensus} facilitated a number of structural and computational results \cite{devanur2002market, cole2017convex}. \cite{bogomolnaia2017competitive} provided a similar EG-style non-convex formulation to capture CE for the chores-only Fisher market at the KKT points with strictly positive disutility for all the workers. In this section, we obtain a similar characterization of the CE set through an EG-style formulation, which, not so surprisingly, turns out to be non-convex. }


Consider the following primal Eisenberg-Gale type program. 
\begin{equation}
    \begin{aligned}
        \inf_{\bu, \bd > 0, \x, \y \geq 0} \quad & -\sum_{i = 1}^n B_i \log{u_i} + \sum_{j = 1}^m E_j \log{d_j} \\ 
        \textnormal{s.t.} \quad & u_i \leq u_i(\x_i) \quad \forall\, i \in [n] \\ 
        & d_j(\y_j) \leq d_j \quad \forall\, j \in [m] \\ 
        & \sum_{i=1}^n x_{i\ell} \leq \sum_{j=1}^m y_{j\ell} \quad \forall\, \ell \in [k].  
    \end{aligned}
    \tag{Primal EG}
    \label{prog:primal-EG-general}
\end{equation}

We show a one-to-one correspondence between the KKT points of~(\ref{prog:primal-EG-general}) and CE of the labor market for a broader class of utility functions. 

\begin{restatable}{theorem}{thmKKTPoints}
\label{thm: KKT-points}
    If the utilities and disutilities satisfy the following assumptions:
    \begin{enumerate}
        \item $u_i(x_i)$ is a continuously differentiable, concave, $1$-homogeneous function for each $i \in [n]$; 
        \item $d_j(y_j)$ is a continuously differentiable, convex, $1$-homogeneous function for each $j \in [m]$; 
        \item there exists at least one $i \in [n]$ such that $\frac{\partial u_i(x_i)}{\partial x_{ij}} > 0$ for any $x_i \geq 0$, for each $\ell \in [k]$, 
    \end{enumerate}
    then 
    every KKT point of~(\ref{prog:primal-EG-general}) corresponds to a competitive equilibrium of the labor market. 
\end{restatable}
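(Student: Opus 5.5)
The plan is to write down the KKT conditions of (\ref{prog:primal-EG-general}) explicitly and read off prices and equilibrium conditions from them. Introduce multipliers $\lambda_i \ge 0$ for $u_i \le u_i(\x_i)$, $\mu_j \ge 0$ for $d_j(\y_j) \le d_j$, and $p_\ell \ge 0$ for $\sum_i x_{i\ell} \le \sum_j y_{j\ell}$. The candidate equilibrium price vector is $\p$, and the candidate allocation is the primal pair $(\x,\y)$ itself. Stationarity in the auxiliary variables $u_i, d_j$ (which are strictly positive, so there is no sign constraint there) gives $B_i/u_i = \lambda_i$ and $E_j/d_j = \mu_j$. Hence $\lambda_i,\mu_j>0$, both utility and disutility constraints are tight, and $u_i = u_i(\x_i)>0$, $d_j = d_j(\y_j) > 0$.

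Stationarity in the allocation variables, together with complementary slackness on $\x,\y \ge 0$, gives two families of conditions:
\[
\lambda_i \frac{\partial u_i(\x_i)}{\partial x_{i\ell}} \le p_\ell \ \text{ with equality if } x_{i\ell}>0,
\qquad
\mu_j \frac{\partial d_j(\y_j)}{\partial y_{j\ell}} \ge p_\ell \ \text{ with equality if } y_{j\ell}>0 .
\]
Multiplying by $x_{i\ell}$ (resp. $y_{j\ell}$), summing over $\ell$, and using Euler's identity for $1$-homogeneous functions yields $\sum_\ell p_\ell x_{i\ell} = \lambda_i u_i(\x_i) = B_i$ and $\sum_\ell p_\ell y_{j\ell} = \mu_j d_j(\y_j) = E_j$. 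So budgets are spent exactly and earning requirements are met exactly.

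Next, establish user optimality from concavity. For any $\x'_i$ with $\p\cdot \x'_i \le B_i$, we have
\[
u_i(\x'_i) \le u_i(\x_i) + \nabla u_i(\x_i)\cdot(\x'_i-\x_i) \le u_i(\x_i) + (\p\cdot\x'_i - B_i)/\lambda_i \le u_i(\x_i).
\]
This step uses $\nabla u_i \cdot \x_i = u_i(\x_i)$ and the first KKT inequality. Worker optimality is symmetric: for $\y'_j$ with $\p\cdot\y'_j \ge E_j$, convexity gives
\[
d_j(\y'_j) \ge d_j(\y_j) + \nabla d_j(\y_j)\cdot(\y'_j-\y_j) \ge d_j(\y_j) + (\p\cdot\y'_j - E_j)/\mu_j \ge d_j(\y_j).
\]

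Finally, upgrade the clearing inequality to equality. Assumption~3 enters here. For each $\ell$, some user $i$ has $\partial u_i/\partial x_{i\ell}>0$, so $p_\ell \ge \lambda_i \,\partial u_i/\partial x_{i\ell} > 0$. Complementary slackness on the clearing constraint then forces $\sum_i x_{i\ell} = \sum_j y_{j\ell}$. Alternatively, one can use total money: $\sum_\ell p_\ell(\sum_j y_{j\ell} - \sum_i x_{i\ell}) = \sum_j E_j - \sum_i B_i = 0$, and with $\p>0$ and each term nonnegative, every term vanishes.

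The main obstacle I expect is justifying that this is legitimately a KKT point. The objective involves $\log$ on open domains $\bu,\bd>0$, so I will take the KKT point definition as given (a primal–dual pair satisfying stationarity, feasibility and complementary slackness). I also need the differentiability of $u_i,d_j$ at boundary points where some $x_{i\ell}=0$, which the continuous differentiability hypothesis covers. Positivity of $u_i$ and $d_j$, needed for the Euler step to give nondegenerate multipliers, comes from the open domain.
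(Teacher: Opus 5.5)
Your proof is correct and takes essentially the same route as the paper. Like the paper, you write out the KKT system and get $\lambda_i = B_i/u_i>0$ and $\mu_j = E_j/d_j>0$ from stationarity in the open variables $u_i, d_j$. You then use Euler's identity to show that budgets and earning requirements are met exactly, and use hypothesis~3 to get $p_\ell>0$, so complementary slackness forces market clearing. The one addition is that you derive user and worker optimality explicitly from the first-order (supergradient/subgradient) inequality of the concave $u_i$ and convex $d_j$. The paper only asserts that optimality follows from the bang-per-buck conditions and budget exhaustion, so your write-up is slightly more complete on that step.
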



Note that, even for the linear (dis)utility, it is difficult to find a stationary point of~(\ref{prog:primal-EG-general}), due to the 
existence of ``poles''. 
In particular, most of the classic optimization methods suffer from being attracted to some point (so-called pole) where $d_j(y_j) = 0$ for some worker $j$, which leads to the objective value of negative infinity but is not a stationary point. 
See discussion on similar issues in the chores market literature~\citep{boodaghians2022polynomial,chaudhury2024competitive}. 
Moreover, due to the non-convexity and unboundedness of~(\ref{prog:primal-EG-general}), it is difficult to extract meaningful information about CE. 
For example, even the existence of a CE cannot be guaranteed from this formulation. 

For the chores Fisher market with linear disutilities, a dual formulation of the primal EG program was recently shown to be very useful~\citep{chaudhury2024competitive}. 
Given the connection between our model and the chores Fisher market, 
we propose the following EG dual program for labor markets with linear (dis)utilities. 
\begin{equation}
    \begin{aligned}
        \max_{ \p, \mathbf{\bm \beta}, \mathbf{\bm \alpha} \geq 0} \quad & \sum_{i = 1}^n B_i \log{\beta_i} - \sum_{j = 1}^m E_j \log{\alpha_j} \\ 
        \textnormal{s.t.} \quad & v_{i\ell} \beta_i \leq p_\ell \quad \forall\, i \in [n], \ell \in [k] \\ 
        & p_\ell \leq d_{j l} \alpha_j \quad \forall\, j \in [m], \ell \in [k].  
    \end{aligned}
    \tag{Dual EG}
    \label{program:EG-Dual}
\end{equation}


The following lemma verifies that~(\ref{program:EG-Dual}) captures all CE of the linear good-chore Fisher market as its KKT points. 
\begin{restatable}{lemma}{lemDualKKT}
\label{lem: Dual-KKT}
    There is a one-to-one correspondence between the KKT points of~(\ref{program:EG-Dual}) and competitive equilibria of the linear labor market. 
\end{restatable}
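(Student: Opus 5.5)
The plan is to write down the Lagrangian of (\ref{program:EG-Dual}), read off its KKT conditions, and match them term by term with the three CE conditions: user optimality, worker optimality, and market clearing. Here $d_{j\ell}$ in the second constraint is the unit disutility $c_{j\ell}$.

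\textbf{KKT system.} Introduce multipliers $x_{i\ell}\ge 0$ for $v_{i\ell}\beta_i\le p_\ell$ and $y_{j\ell}\ge 0$ for $p_\ell\le c_{j\ell}\alpha_j$. The Lagrangian is
\[
\sum_i B_i\log\beta_i-\sum_j E_j\log\alpha_j+\sum_{i,\ell}x_{i\ell}(p_\ell-v_{i\ell}\beta_i)+\sum_{j,\ell}y_{j\ell}(c_{j\ell}\alpha_j-p_\ell).
\]
A KKT point requires $\beta_i,\alpha_j>0$, since otherwise the objective is $-\infty$ or undefined. It then requires the following conditions.
\begin{itemize}
\item Stationarity in $\beta_i$: $B_i/\beta_i=\sum_\ell v_{i\ell}x_{i\ell}$.
\item Stationarity in $\alpha_j$: $E_j/\alpha_j=\sum_\ell c_{j\ell}y_{j\ell}$.
\item Stationarity in $p_\ell$: $\sum_i x_{i\ell}-\sum_j y_{j\ell}=0$ whenever $p_\ell>0$, and $\le 0$ if $p_\ell=0$.
\item Complementary slackness: $x_{i\ell}>0\Rightarrow p_\ell=v_{i\ell}\beta_i$ and $y_{j\ell}>0\Rightarrow p_\ell=c_{j\ell}\alpha_j$.
\end{itemize}

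\textbf{KKT $\Rightarrow$ CE.} Take a KKT point $(\p,\bm\beta,\bm\alpha,\x,\y)$.
\begin{enumerate}
\item Prices are positive. For each $\ell$, \Cref{assumption: nontrivial} gives some $i$ with $v_{i\ell}>0$, so $p_\ell\ge v_{i\ell}\beta_i>0$. Stationarity in $p_\ell$ then gives exact market clearing.
\item Users are optimal. Primal feasibility gives $v_{i\ell}/p_\ell\le 1/\beta_i$ for all $\ell$. Complementary slackness says $\x_i$ is supported only on tasks attaining this maximum bang per buck $1/\beta_i$. Multiplying the $\beta_i$-stationarity by $\beta_i$ and using $v_{i\ell}\beta_i=p_\ell$ on the support gives $\sum_\ell p_\ell x_{i\ell}=B_i$. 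A bundle spending the full budget only on MBB tasks is in $\mathcal{D}_i^*(\p)$, by the standard linear-utility argument behind \Cref{prop: mbb/mpb}.
\item Workers are optimal, symmetrically. $c_{j\ell}/p_\ell\ge 1/\alpha_j$ holds everywhere with equality on the support, and $\sum_\ell p_\ell y_{j\ell}=E_j$. Earning exactly $E_j$ only from minimum pain-per-buck tasks minimizes disutility subject to earning at least $E_j$, because disutility is at least $(1/\alpha_j)\cdot$earnings for any bundle.
\end{enumerate}

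\textbf{CE $\Rightarrow$ KKT.} Conversely, take a CE $(\p,\x,\y)$. Set $\beta_i=\min_\ell p_\ell/v_{i\ell}$, the inverse MBB, and $\alpha_j=\max_\ell p_\ell/c_{j\ell}$, the inverse MPB. These are positive and finite, using $\p>0$ from the earlier proposition and the assumption that every user has some positive $v_{i\ell}$. The constraints of (\ref{program:EG-Dual}) hold by construction. Complementary slackness follows from \Cref{prop: mbb/mpb}. Stationarity in $\beta_i$ and $\alpha_j$ follows from budgets and earnings being exhausted exactly, via $u_i(\x_i)=B_i/\beta_i$ and $d_j(\y_j)=E_j/\alpha_j$. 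Stationarity in $p_\ell$ is market clearing.

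\textbf{Bijectivity.} The maps are mutually inverse, so the correspondence is one-to-one. In the KKT $\to$ CE direction, the multipliers themselves are the allocation. In the other direction, $(\bm\beta,\bm\alpha)$ is uniquely determined by $\p$ at a KKT point: each user and worker has positive allocation somewhere, so the tight constraints fix $\beta_i$ and $\alpha_j$, and these coincide with the inverse MBB and MPB.

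\textbf{Main obstacle.} The delicate step is justifying that KKT conditions are the right notion even though the program is nonconvex, with an open domain ($\log$ at $0$). It must be ruled out that some $\beta_i$ or $\alpha_j$ is $0$ at a KKT point. My plan is to define KKT points only on the domain where the logs are finite, and to note that the constraints are linear, so a constraint qualification holds automatically and no degeneracy arises. Beyond this, the argument is bookkeeping.
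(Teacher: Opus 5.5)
Your proposal is correct and follows essentially the same route as the paper. The multipliers $(\x,\y)$ of the two constraint families are read as allocations, positivity of prices comes from $\beta_i>0$ and \Cref{assumption: nontrivial}, complementary slackness plus stationarity in $\beta_i,\alpha_j$ give MBB/MPB support with exhausted budgets and earnings, and the converse sets $\beta_i=\min_\ell p_\ell/v_{i\ell}$ and $\alpha_j=\max_\ell p_\ell/c_{j\ell}$. Your check that $(\beta,\alpha)$ is determined by $\p$ at any KKT point makes the ``one-to-one'' claim more precise than the paper does. The constraint-qualification remark is unnecessary, since the lemma concerns points satisfying the KKT conditions rather than local optima, but it is harmless.
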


Still, (\ref{program:EG-Dual}) falls into the regime of nonconvex optimization and thus is generally difficult to solve. 
Nevertheless, we observe that~(\ref{program:EG-Dual}) possesses many useful characteristics. 
For example, the feasible set of $(\mathbf{p}, \mathbf{\bm \beta}, \mathbf{\bm \alpha})$ is a \emph{polyhedral cone} excluding the origin (because of the implicit domain constraint). 
Also, the objective function is $0$-homogeneous: 
letting $f(\mathbf{p}, \mathbf{\bm \beta}, \mathbf{\bm \alpha}) = \sum_{i = 1}^n B_i \log{\beta_i} - \sum_{j = 1}^m E_j \log{\alpha_j}$, 
we have $f(k \mathbf{p}, k \mathbf{\bm \beta}, k \mathbf{\bm \alpha}) = f(\mathbf{p}, \mathbf{\bm \beta}, \mathbf{\bm \alpha})$ for any $k > 0$. 
This structure implies that if $(\mathbf{p}, \mathbf{\bm \beta}, \mathbf{\bm \alpha})$ is a KKT point of~(\ref{program:EG-Dual}), then so is $(c \mathbf{p}, c \mathbf{\bm \beta}, c \mathbf{\bm \alpha})$ for any scalar $c > 0$. 
Given this scaling invariance, we can, without loss of generality, 
fix the scale of $(\mathbf{p}, \mathbf{\bm \beta}, \mathbf{\bm \alpha})$ (e.g., $\sum_j p_j = 1$). 

%% file: sec-combinatorial-algorithm.tex
\section{Walrasian Algorithm for CE in Labor Market}
\label{sec:walrasian-algo}

In this section, we present a combinatorial, Walrasian algorithm that computes an exact CE for our labor market model in polynomial time. Without loss of generality, we assume all budgets $(B_i)_{i\in[n]}$ and earning requirements $(E_j)_{j\in[m]}$ are integral. We first introduce the graph structure the algorithm is based on.

\paragraph{The MBB/MPB Graph for Spending and Earning.} Our algorithm will work on a directed tripartite graph $\mathcal{G}(\p)$, namely \textit{MBB/MPB graph}, whose vertices correspond to the set of \firms on the left, tasks in the middle, and workers on the right. Edges are in the direction of \firms to tasks, or tasks to workers. Given price $\mathbf{p}$, an edge $\arc{i\ell}$ will present if and only if task $\ell$ is a MBB task for \firm $i$, and edge $\arc{\ell j}$ will present if and only if task $\ell$ is a MPB task for worker $j$. Denote the set of edges as $\mathcal{E}(\p)$.


The spending and earning in a competitive equilibrium can be modeled as a flow on the network. Formally, this will require setting the capacity of edges in $\mathcal{G}(\p)$ to $\infty$, and introducing a source/terminal node connected with each \firmBworker with edge capacity being the corresponding budget/earning requirement. Under this construction, the flow rate $\sumi B_i = \sumj E_j$. For general $\p$, any rate on the flow is upper-bounded by $\sumi B_i$, and $\p$ will be a equilibrium price if the maximum flow rate on $\mathcal{G}(\p)$ is $\sumi B_i$ (which implies market clearance together with flow conservation; optimality is ensured by the MBB/MPB edges).



\begin{claim}
\label{claim: spending/earning-to-allocation}
    Given prices $\p$, spending $(b_{i\ell})_{i\in[n], \ell \in[k]}$, and earning $(e_{j \ell })_{\ell \in[k], j \in[m]}$, then $(\p, \x, \y)$ gives a competitive equilibrium if
    \begin{itemize}
        \item $x_{i\ell } = b_{i\ell}/p_{\ell}, \ y_{j\ell } = e_{j \ell }/p_\ell$;
        \item \textnormal{\textbf{Flow conservation.}} Each \firm depletes her budget, each worker meets her earning requirement, and each task has zero surplus:
        \begin{equation*}
            \suml b_{i\ell} =  B_i \ \ \ \anyi,  \  \ \ \suml e_{j\ell} = E_j \ \ \ \anyj , \ \ \ \Delta_\ell  =0,\ \ \ \anyl.  
        \end{equation*}
        \item \textnormal{\textbf{Optimality.}} All non-negative money flow is supported only on MBB/MPB edges:
        \begin{equation*}
            e_{i\ell} >0 \implies \arc{i\ell} \in \mathcal{E}(\p), \  \ e_{j \ell } >0 \implies \arc{\ell j} \in \mathcal{E}(\p), \ \ \anyi, \anyj, \anyl. 
        \end{equation*}
    \end{itemize}
\end{claim}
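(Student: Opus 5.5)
We verify the three defining conditions of a competitive equilibrium directly: market clearing, \firm optimality, and worker optimality. Throughout we assume $\p>0$, which holds at any candidate price where the claim is used, since $x_{i\ell}=b_{i\ell}/p_\ell$ must be well defined. We also read the first optimality condition as $b_{i\ell}>0 \implies \arc{i\ell}\in\mathcal{E}(\p)$, which corrects an evident typo in the statement.

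\textbf{Market clearing.} Since $x_{i\ell}=b_{i\ell}/p_\ell$ and $y_{j\ell}=e_{j\ell}/p_\ell$, we can write $\Delta_\ell = \sum_i b_{i\ell}-\sum_j e_{j\ell} = p_\ell\big(\sum_i x_{i\ell}-\sum_j y_{j\ell}\big)$. Hence $\Delta_\ell=0$ together with $p_\ell>0$ gives $\sum_i x_{i\ell}=\sum_j y_{j\ell}$ for every $\ell$.

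\textbf{\Firm optimality.} Fix $i$ and let $\gamma_i=\max_{\ell} v_{i\ell}/p_\ell$ be her MBB ratio. For any feasible $\x'_i$ with $\sum_\ell p_\ell x'_{i\ell}\le B_i$,
\[
u_i(\x'_i)=\sum_\ell \frac{v_{i\ell}}{p_\ell}\, p_\ell x'_{i\ell}\le \gamma_i B_i .
\]
Now consider $\x_i$ itself. Every $\ell$ with $b_{i\ell}>0$ is an MBB edge, so $v_{i\ell}=\gamma_i p_\ell$ there. Therefore $u_i(\x_i)=\sum_\ell \gamma_i b_{i\ell}=\gamma_i B_i$, using $\sum_\ell b_{i\ell}=B_i$. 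The same identity shows $\x_i$ is budget-feasible, so $\x_i\in\mathcal{D}_i^*(\p)$.

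\textbf{Worker optimality.} Fix $j$ and let $\delta_j=\min_\ell c_{j\ell}/p_\ell>0$ be her MPB ratio. For any $\y'_j$ with $\sum_\ell p_\ell y'_{j\ell}\ge E_j$,
\[
d_j(\y'_j)=\sum_\ell \frac{c_{j\ell}}{p_\ell}\, p_\ell y'_{j\ell}\ge \delta_j E_j .
\]
Because $e_{j\ell}>0$ only on MPB edges, $d_j(\y_j)=\delta_j\sum_\ell e_{j\ell}=\delta_j E_j$. Also $\y_j$ earns exactly $E_j$, so it is feasible and $\y_j\in\mathcal{S}_j^*(\p)$.

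The argument is a direct verification, and we do not expect any step to be a real obstacle. The only delicate points are using $\p>0$ to convert between spending and allocation, and correctly interpreting the subscripts in the optimality condition. Combining the three parts shows that $(\p,\x,\y)$ is a competitive equilibrium.
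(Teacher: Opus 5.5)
Your proof is correct and follows essentially the same route as the paper, which justifies this claim only informally: zero surplus together with budget and earning exhaustion gives market clearing, and support on MBB/MPB edges gives optimality. Your bang-per-buck bounds $u_i(\x'_i)\le \gamma_i B_i$ and $d_j(\y'_j)\ge \delta_j E_j$ make that optimality step explicit, and your readings of the statement's typos ($b_{i\ell}>0$ in place of $e_{i\ell}>0$, and $\p>0$) are the intended ones.
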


\begin{algorithm}[t]
\caption{Algorithm for Computing CE in Labor Markets}
\label{algorithm: combinatorial}
\SetKwProg{Fn}{Function}{:}{}
\textbf{Initialization:} prices $\p =1^k$, threshold $\varepsilon >0$, lower bound parameter $\lambda>0$, $b_{i\ell} = B_i\cdot 1\{\ell =\min (\arg\max v_{i\ell} )\} , \anyi $,\ \  $e_{j\ell} = E_j\cdot 1\{\ell =\min (\arg\min c_{j\ell} )\}, \anyj$. 

\While{there exists $\ell \in [k]$ such that $|\Delta_{\ell}| > \varepsilon$}{

{\small \texttt{\color{blue} $\triangleright$ Definition of the large surplus set $S$}}

Set $S \gets \arg \max_{\ell\in [k]} \Delta_\ell$;
\While{$\max_{\ell \notin S} \Delta_{\ell} \geq (1-1/k) \min_{\ell \in S} \Delta_{\ell}$}{ $S \gets S \cup \{\ell\}$, where $\ell \in \argmax_{\ell \notin S} \Delta_{\ell}$;}

Set $\bar \Delta_{\ell} = \Delta_{\ell}$;

{\small \texttt{\color{blue} $\triangleright$ Spending and earning update}}

\While{there exists $i \in [n]$, $\ell \in S$, $\ell' \notin S$ such that $b_{i\ell} >0$ and $\arc{i\ell'} \in \mathcal{E}$}{
Decrease $b_{i\ell}$ and increase $b_{i \ell'}$ by the same amount until $b_{i\ell} = 0$ or $\Delta_{\ell} = \lb$:
\begin{equation*}
    \hat{b} = \min\left\{ b_{i\ell}, \Delta_\ell - \lb \right\}, \ \ b_{i\ell} \gets b_{i\ell} -\hat{b}, \ \ b_{i\ell'} \gets b_{i\ell'} + \hat{b}, \ \Delta_\ell \gets \Delta_\ell -\hat{b}, \ \Delta_{\ell^\prime} \gets \Delta_{\ell^\prime} + \hat{b}; 
\end{equation*}
\textbf{If} $\Delta_{\ell} = \lb$, \textbf{then} \textbf{Goto} Step 1\;}

\While{there exists $j \in [m]$, $\ell \in S$, $\ell' \notin S$ such that $e_{j\ell'} >0$ and $\arc{\ell j} \in \mathcal{E}$}{
    Decrease $e_{j\ell'}$ and increase $e_{j \ell}$ by the same amount until $e_{j\ell'} = 0$ or $\Delta_{\ell}^t = \lb$:
    \begin{equation*}
        \hat{e} = \min\left\{ e_{j \ell'}, \Delta_\ell - \lb\right\}, \ \ e_{j \ell'} \gets e_{j \ell'} -\hat{e}, \ \ e_{j \ell} \gets e_{j \ell} + \hat{e}, \ \Delta_\ell \gets \Delta_\ell -\hat{e}, \ \Delta_{\ell^\prime} \gets \Delta_{\ell^\prime} + \hat{e}; 
    \end{equation*}
    \textbf{If} $\Delta_{\ell} = \lb$, \textbf{then} \textbf{Goto} Step 1\;}

{\small \texttt{\color{blue} $\triangleright$ Price update}}

 Increase prices of all tasks in $S$ in proportion until a new MBB/MPB edge appears;

 Update the MBB/MPB edges $\mathcal{E}(\p)$;

\textbf{Goto} Spending and Earning Update (Step 7);

}
\Return{$\p$, $\mathbf{b}, \mathbf{e}$};
\end{algorithm}

Our Walrasian algorithm (\Cref{algorithm: combinatorial}) is based on this graph structure, which efficiently computes prices $\p$ and amounts of ``money flows'' $(b_{i\ell}), (e_{j\ell})$ such that 1) all positive spending and earnings are supported on MBB/MPB edges, 2) all budgets and earning requirements are cleared, i.e., $\sumi B_i = \sumj E_j = \sumi \suml b_{i\ell} = \sumj \suml e_{j\ell}$, 3) task surpluses are approximately zero, i.e., $|\Delta_\ell| < \varepsilon$ for all tasks $\ell$. We later show that an exact competitive equilibrium can be extracted when $\varepsilon$ is sufficiently small. 

The main idea of \Cref{algorithm: combinatorial} is to iteratively push the task surpluses to zero while preserving optimality and budget/earning requirement clearance; surpluses over all tasks will then sum up to zero always since the entire market's total budgets are equal to the total earning requirements. In each \textit{iteration} of the algorithm (a loop in Step 1), we pick a large surplus task subset $S \subseteq [k]$ by repeatedly adding the tasks to $S$ in the decreasing order of surplus until a multiplicative gap of $(1-1/k)$ appears; $S$ will be a nonempty proper subset of $[k]$ unless all surpluses are zero. Within an iteration, it alternates between two phases, \textit{spending and earning update} and \textit{price update}, until some task in $S$ has its surplus decreased ``sufficiently'', i.e., hits a lower bound of $\lambda \in (0,1)$ fraction of the surplus value at the beginning of the iteration (denoted as $\bar{\Delta}_\ell$ and recorded by the algorithm in Step 6). We also remark that \Cref{algorithm: combinatorial} can be initialized at any positive prices $\p$ and money flows on MBB/MPB edges; for analysis we assume a naive initialization $\p = 1^k$ and all \firms and workers spend/earn entirely from a single MBB/MPB task (lexicographical tie-breaking). 

\paragraph{Spending and Earning Update.} To decrease the surplus on some task $\ell \in S$, each \firm $i$ checks if there is an MBB edge $\arc{i\ell^\prime } \in \mathcal{E}(\p)$ to a task $\ell^\prime$ that is not is $S$. If so, then she can deviate her spending by decreasing $b_{i\ell}$ and increasing $b_{i\ell^\prime}$ by the same amount. Similarly, each worker $j$ checks if she can increase $e_{j\ell}$ and decrease $e_{j\ell^\prime}$ by the same amount for some $\arc{\ell^\prime j}\in \mathcal{E}(\p)$. We repeat these updates until the surplus $\Delta_\ell$ some $\ell \in S$ hits the lower bound $\lb$, and move on to the next iteration. However, this does not always happen: Since the updates are restricted to MBB/MPB edges only, it is possible that the lower bound cannot be reached even after all possible updates are performed on both sides. For this case, we proceed to the price update phase, which will provide more space for spending and earning updates. 

\paragraph{Price Update.} In the price update phase, we increase the price of all tasks in $S$ in proportion, i.e., scale them up with the same multiplier, until a new MBB/MPB edge appears in the graph. Intuitively, this update makes the subset $S$ less attractive to the \firms and more attractive to the workers, relative to the complement. This creates potentially more MBB/MPB edges for the spending and earning update phase to operate on. 

\paragraph{Correctness and Termination.} For correctness, we ensure that the budget/earning requirement clearance and optimality properties are indeed preserved throughout the two phases. For the spending and earning update phase, both properties are obvious since each unit operation increases and decreases money flows on MBB/MPB edges by the same value. For the price update phase, notice that $(b_{i\ell}), (e_{j\ell})$ are invariant so we only need to argue that all nonzero flows are still carried only by MBB/MPB edges. To see this, we show the following lemma for the price update phase.

\begin{restatable}{lemma}{lemPriceUpdate}
     \label{lem:price-update}
     During each price update, if an edge disappears from $\mathcal{E}(\p)$, then it carries zero flow. 
\end{restatable}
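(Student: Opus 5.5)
The price update multiplies $p_\ell$ by a common factor $t\ge 1$ for every $\ell\in S$ and leaves the prices outside $S$ unchanged, stopping as soon as a new MBB/MPB edge appears. I will check how each bang-per-buck and pain-per-buck ratio moves under this scaling and identify exactly which edges can disappear. Then I will show that, at the moment a price update is invoked, none of those edges carries money.

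\textbf{Edges that can disappear.} For a \firm $i$, tasks in $S$ have ratio $v_{i\ell}/p_\ell$ divided by $t$, while tasks outside $S$ keep their ratio. Relative ratios within $S$, or within $[k]\setminus S$, are unchanged. So an MBB edge $\arc{i\ell}$ can vanish only in two cases:
\begin{itemize}
\item $\ell\in S$ and \firm $i$ also has an MBB edge to some $\ell'\notin S$ (her maximum is then attained outside $S$);
\item all of $i$'s MBB tasks lie in $S$; then they are all scaled together and remain MBB among themselves.
\end{itemize}
In the second case, a task outside $S$ may join the MBB set when the step ends, but this only adds an edge. Hence the only MBB edges that can disappear are $\arc{i\ell}$ with $\ell\in S$ for a \firm $i$ that already has an MBB edge $\arc{i\ell'}$ with $\ell'\notin S$.

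Symmetrically, for a worker $j$ the ratios $c_{j\ell}/p_\ell$ decrease for $\ell\in S$. The only MPB edges that can disappear are $\arc{\ell' j}$ with $\ell'\notin S$ for a worker $j$ that also has an MPB edge $\arc{\ell j}$ with $\ell\in S$.

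\textbf{These edges carry zero flow.} The algorithm reaches the price update only after both spending/earning while-loops have terminated without a \textbf{Goto} Step 1. Termination of the first loop means there is no triple $(i,\ell,\ell')$ with $b_{i\ell}>0$, $\ell\in S$, $\ell'\notin S$ and $\arc{i\ell'}\in\mathcal{E}$. This is exactly the statement that every disappearing MBB edge has $b_{i\ell}=0$. Termination of the second loop gives $e_{j\ell'}=0$ for every disappearing MPB edge.

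\textbf{Remaining points to check.} I need to confirm that the edge set in the loop conditions is the current $\mathcal{E}(\p)$ at the start of the price update. This holds because the edges are recomputed after each price update, and the spending and earning loops do not change prices. I also need that the second loop does not re-enable the first loop's condition; it does not, since the earning updates do not change any $b$. Finally, since the price increase stops at the first new edge, no edge is lost at an interior point of the scaling by some other mechanism: within one continuous step the ordering changes only at the moment of equality, which is precisely when new edges appear or pre-identified edges vanish. Prices and money flows $b,e$ are otherwise invariant during the price update, so optimality is preserved.

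The main obstacle is the careful case analysis showing that edges lying entirely within $S$ or entirely outside $S$ never vanish. Once that is settled, the loop-termination conditions finish the argument.
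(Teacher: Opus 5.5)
Your argument is correct and is essentially the paper's: scaling the prices in $S$ proportionally means an MBB edge can vanish only if it goes into $S$ from a user who also has an MBB edge outside $S$ (and symmetrically for workers' MPB edges), and termination of the two spending/earning loops forces exactly those edges to carry zero flow. The paper organizes the same reasoning as a case split on agents (users spending positively on $S$, workers earning positively outside $S$, versus the rest) rather than on edges. One small fix: your second bullet is listed as a case where an edge can vanish, but it is actually a non-vanishing case, which your next sentence already says.
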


For termination, we first show that each iteration contains at most polynomially many phases of updates; each can be performed in polynomial time. Combined with the subsequent convergence rate analysis, the termination of the entire algorithm will be clear.

\begin{restatable}{lemma}{lemAllocationUpdate}
    \label{lem:allocation-update}
    An iteration has at most $n+m+1$ spending and earning update phases, and at most $n+m$ price update phases. Each of these phases terminates, and can be performed with polynomial (in $m$ and $n$; independent of $(B_i)_{i\in[n]}, (E_j)_{j\in[m]},\mathbf{v}$ and $\mathbf{c}$) number of arithmetic operations.
\end{restatable}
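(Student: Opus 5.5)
The plan is to track, within a single iteration (fixed set $S$), a monotone combinatorial quantity that strictly improves with every phase that fails to end the iteration. The natural candidate is the set of \firms and workers \emph{reachable} in the MBB/MPB graph in a sense relevant to $S$.

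Let $U_S$ be the set of \firms $i$ having an MBB edge $\arc{i\ell'}\in\mathcal{E}(\p)$ to some $\ell'\notin S$. Let $W_S$ be the set of workers $j$ having an MPB edge $\arc{\ell j}\in\mathcal{E}(\p)$ from some $\ell\in S$. The proof then has four steps.

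\textbf{Step 1: what a spending and earning phase achieves.} I first argue that a spending and earning update phase either makes some $\Delta_\ell$ with $\ell\in S$ reach $\lb$, which ends the iteration, or terminates with every \firm in $U_S$ having $b_{i\ell}=0$ for all $\ell\in S$ and every worker in $W_S$ having $e_{j\ell'}=0$ for all $\ell'\notin S$. Each inner transfer either zeroes the flow on one edge or hits the lower bound. Since flow is only moved between an existing MBB/MPB edge into $S$ and one out of $S$, a phase needs at most $nk+mk$ transfers, each costing $O(1)$ arithmetic plus a scan over edges. This bounds each such phase by a polynomial in $n,m,k$, independent of the magnitudes of $B,E,\mathbf v,\mathbf c$.

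\textbf{Step 2: prices in $S$ rise and flows stay valid.} In a price update, the prices in $S$ are scaled by a factor $t>1$. MBB edges from a \firm to $S$ can only disappear, and MPB edges from $S$ to a worker can only appear or persist. New edges appear as \firm edges to tasks outside $S$ or worker edges from tasks in $S$. By \Cref{lem:price-update}, the invariants survive.

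\textbf{Step 3: the price update terminates and grows $U_S$ or $W_S$.} I need the scaling to stop at a finite $t$, which I would compute explicitly. The quantity
\[
t=\min\Big\{\min_{i:\,\text{MBB in }S}\frac{\max_{\ell\in S}v_{i\ell}/p_\ell}{\max_{\ell'\notin S}v_{i\ell'}/p_{\ell'}},\ \min_{j}\frac{\min_{\ell'\notin S}c_{j\ell'}/p_{\ell'}}{\min_{\ell\in S}c_{j\ell}/p_\ell}\Big\}
\]
is a ratio computed in $O((n+m)k)$ operations. Finiteness holds because $S$ is a proper subset and the surpluses sum to zero, so the set of relevant \firms or workers is nonempty. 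If it were empty, no flow could have left $S$, contradicting $\Delta_\ell>\lb>0$ for $\ell\in S$ together with conservation of money. When the new edge appears, it puts a new \firm into $U_S$ (an edge to $\ell'\notin S$) or a new worker into $W_S$ (an edge from $S$). Edges are only lost by \firms losing edges into $S$, which does not remove them from $U_S$, and by workers losing edges to outside $S$, which does not remove them from $W_S$. Since prices in $S$ only increase relative to the complement within the iteration, $|U_S|+|W_S|$ strictly increases with each price update and never decreases.

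\textbf{Step 4: counting phases.} It follows that there are at most $n+m$ price updates, and spending and earning phases interleave with them, giving at most $n+m+1$ such phases.

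The main obstacle is Step 3, specifically showing that the price scaling is finite. If no \firm outside $U_S$ spends on $S$ and every worker in $S$'s MPB neighbourhood is already in $W_S$, the scaling could in principle run forever. I will rule this out by a money-conservation argument over $S$. After a failed spending and earning phase, the total spending on $S$ comes only from \firms outside $U_S$, and the earnings from $S$ go only to workers in $W_S$. If neither set could gain an edge, then $\sum_{\ell\in S}\Delta_\ell$ would equal the total budget of \firms with edges only into $S$ minus the total earnings of workers in $W_S$. I expect to show this quantity is positive, forcing the existence of an unsaturated such \firm or worker whose ratio makes $t$ finite.
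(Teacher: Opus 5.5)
Your proposal is correct and is essentially the paper's argument. The potential $|U_S|+|W_S|$ repackages the paper's observation that each agent has a single event multiplier $a^*$ per iteration. Your finiteness argument is the paper's money-conservation argument: after a failed spending/earning phase, $\sum_{\ell\in S}\Delta_\ell=\sum_{i\notin U_S}B_i-\sum_{j\in W_S}E_j$, and this is automatically positive because each $\Delta_\ell>\lambda\bar{\Delta}_\ell>0$, so some worker lies outside $W_S$ (nothing remains to be ``expected'' there).

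One slip to fix in your explicit formula for $t$. The worker threshold is inverted; it should be $\frac{\min_{\ell\in S}c_{j\ell}/p_\ell}{\min_{\ell'\notin S}c_{j\ell'}/p_{\ell'}}$. Also, the minima must range only over $j\notin W_S$ and $i\notin U_S$, since otherwise the formula can return $t\le 1$.
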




\subsection{Convergence Analysis} 

We next show how \Cref{algorithm: combinatorial} converges to zero surpluses in terms of $t$, the number of iterations. We measure the convergence by the sum of squared surpluses. Concretely, let $\bar\Delta_\ell^t$ denote the surplus of task $\ell$ at the beginning of the $t$'th iteration. Define the potential value of the iteration $t$ as $\phi^t = \suml (\bar\Delta_\ell^t)^2$. We will show that each iteration decreases the potential value multiplicatively:
\begin{lemma}
    \label{lem:convergence}
    When $\lambda = 1-0.09k^{-4}$, it holds for every iteration that $\phi^{t+1} \leq (1-0.01 k^{-7})\phi^t$.
\end{lemma}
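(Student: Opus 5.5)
The plan is to track how the surplus vector changes over one iteration and bound the change in $\phi$ by a first-order term. Three facts drive the argument.

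\begin{enumerate}
\item \emph{Only money moves change surpluses.} A price update leaves $\mathbf{b},\mathbf{e}$ unchanged, and $\Delta_\ell=\sum_i b_{i\ell}-\sum_j e_{j\ell}$ depends only on the money flows. So surpluses change only in the spending and earning updates.
\item \emph{Money only leaves $S$.} Every elementary move decreases some $\Delta_\ell$ with $\ell\in S$ and increases some $\Delta_{\ell'}$ with $\ell'\notin S$ by the same amount $\hat b$ or $\hat e$.
\item \emph{The iteration ends on a hit.} By \Cref{lem:allocation-update} the iteration terminates. By the Goto rule it ends exactly when some $\ell^*\in S$ reaches $\Delta_{\ell^*}=\lambda\bar\Delta_{\ell^*}$, and no $\ell\in S$ goes below $\lambda\bar\Delta_\ell$ before that.
\end{enumerate}

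Write $d_\ell\ge 0$ for the decrease of $\ell\in S$ and $g_\ell\ge0$ for the increase of $\ell\notin S$. Then
\[
\sum_{\ell\in S}d_\ell=\sum_{\ell\notin S}g_\ell=:D,\qquad d_\ell\le(1-\lambda)\bar\Delta_\ell,\qquad D\ge d_{\ell^*}=(1-\lambda)\bar\Delta_{\ell^*}.
\]

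Next I extract the structure of $S$. Since $\sum_\ell\bar\Delta_\ell=0$ and not all surpluses are zero, $M:=\max_\ell\bar\Delta_\ell>0$. Let $\mu:=\min_{\ell\in S}\bar\Delta_\ell$. By construction consecutive elements of $S$ (in decreasing order) are within a factor $(1-1/k)$ of each other, so $\mu\ge(1-1/k)^{k}M\ge M/4$ for $k\ge2$; the case $k=1$ is trivial. The stopping rule gives $\bar\Delta_{\ell'}<(1-1/k)\mu$ for every $\ell'\notin S$, so $S$ is a proper subset and all its elements are positive. Consequently
\[
D\le\sum_{\ell\in S}(1-\lambda)\bar\Delta_\ell\le k(1-\lambda)M\le 4k(1-\lambda)\mu,\qquad D\ge(1-\lambda)\mu .
\]

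Now expand the change in $\phi$:
\[
\phi^{t+1}-\phi^t=\sum_{\ell\in S}\bigl(-2\bar\Delta_\ell d_\ell+d_\ell^2\bigr)+\sum_{\ell\notin S}\bigl(2\bar\Delta_\ell g_\ell+g_\ell^2\bigr).
\]
\begin{itemize}
\item For the $S$ terms, use $d_\ell^2\le(1-\lambda)\bar\Delta_\ell d_\ell$. This bounds their sum by $-(1+\lambda)\mu D$.
\item For the complement terms, use $\bar\Delta_\ell<(1-1/k)\mu$; negative surpluses only help. Also use $\sum g_\ell^2\le D^2\le 4k(1-\lambda)\mu D$. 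This bounds their sum by $\bigl(2-2/k+4k(1-\lambda)\bigr)\mu D$.
\end{itemize}
Adding the two bounds gives
\[
\phi^{t+1}-\phi^t\le \mu D\bigl[(1-\lambda)(1+4k)-2/k\bigr].
\]
With $1-\lambda=0.09k^{-4}$ the bracket is at most $-1/k$. Hence
\[
\phi^{t+1}-\phi^t\le-\mu D/k\le-(1-\lambda)\mu^2/k .
\]

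Finally, $\phi^t\le kM^2\le16k\mu^2$, so the decrease is at least
\[
\frac{0.09k^{-4}}{16k^2}\,\phi^t\;\ge\;0.01k^{-7}\phi^t,
\]
which is the claimed bound.

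The main obstacle is the quadratic gain $\sum_{\ell\notin S}g_\ell^2$ on the complement. A single outside task could absorb all of $D$ and overshoot. This is exactly why $\lambda$ must be so close to $1$: it makes $D$ a tiny multiple of $\mu$, so the squared term is dominated by the $2\mu D/k$ slack created by the multiplicative gap $(1-1/k)$ between $S$ and its complement. The constant bookkeeping ($\mu\ge M/4$) should be checked carefully but is not conceptually delicate.
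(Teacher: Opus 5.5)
Up to $\phi^{t+1}-\phi^t\le-(1-\lambda)\mu^2/k$ your argument is correct and follows the paper's proof: the first-order term $-\tfrac{2}{k}\mu D$ comes from the $(1-1/k)$ gap, and $D\ge(1-\lambda)\mu$ comes from the hitting task. The one change is bookkeeping. You charge the quadratic terms against $\mu D$, while the paper bounds $\sum_\ell\delta_\ell^2\le(k+1)(1-\lambda)^2\phi^t$ separately.

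\textbf{The gap is in your last line.} The bound $\phi^t\le kM^2$ is false. $M=\max_\ell\bar\Delta_\ell$ controls only the positive surpluses, and a negative surplus can be much larger in magnitude. For example, $\bar\Delta=(1,\dots,1,-(k-1))$ has $M=1$ but $\phi^t=k(k-1)$. What is true, and what the paper uses in its step (d), is $\sum_\ell\bar\Delta_\ell^2\le k(k-1)M^2$ for a zero-sum vector. The reason is that the negative entries have total magnitude $\sum_{\bar\Delta_\ell>0}\bar\Delta_\ell\le(k-1)M$.

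\textbf{The stated rate does not follow with your constants.} Using the correct bound together with $\mu\ge M/4$ and a bracket of $-1/k$, the decrease is only $\frac{0.09}{16}\cdot\frac{1}{k^6(k-1)}\phi^t$. Its constant relative to $k^{-7}$ is $\frac{0.09}{16}\cdot\frac{k}{k-1}$, which is below $0.01$ once $k\ge3$.

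\textbf{Either of two pieces of discarded slack repairs it.}
\begin{itemize}
\item Use $\mu\ge(1-1/k)^{k-1}M\ge M/e$ instead of $M/4$, as the paper does. This gives the constant $0.09/e^2\approx0.0122$.
\item Keep the bracket as $-2/k+0.45k^{-3}\le-1.88/k$ for $k\ge2$ instead of rounding it to $-1/k$. This gives a constant of about $0.0106$.
\end{itemize}
With either change, $\phi^{t+1}\le(1-0.01k^{-7})\phi^t$ goes through.
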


\begin{proof}
The multiplicative gap in the definition of $S$ and the lower bound $\lb$ give us useful properties for the convergence analysis. 
    \begin{claim} 
    \label{claim:surplus_in_S}
    Let $S$ be the set constructed at each iteration in~\Cref{algorithm: combinatorial}. 
    Then, 
    \begin{itemize}
        \item For any two tasks $\ell, \ell' \in S$, we have $\bar \Delta^t_{\ell} \geq \bar{\Delta}^t_{\ell'}/e$; 
        \item For any two tasks $\ell \in S, \ell' \notin S$, we have $\bar \Delta^t_{\ell} > \bar \Delta^t_{\ell'} / (1 - \frac{1}{k})$. 
    \end{itemize} 
\end{claim}
To see the first statement, notice that $(1-1/k)^{k-1}$ is decreasing in $k$ and lower-bounded by $1/e$. Then for any $\ell, \ell^\prime \in S$, we have $\bar \Delta_\ell \geq \left( 1-1/k \right)^{|S|} \bar\Delta_{\ell'} \geq \left( 1- 1/k \right)^{k-1} \bar \Delta_{\ell'} \geq \bar \Delta_{\ell'}/e.$ Next, following from the definition and termination condition of the price update, we have
\begin{claim}
    \label{claim:bounds_on_delta_ell}
      Let $\delta_\ell^t = |\bar\Delta_\ell^t - \bar{\Delta}_\ell^{t+1}|$. Then
      \begin{itemize}
          \item For $\ell \in [k]\backslash S$ we have $\delta_\ell^t =\bar\Delta_\ell^{t+1} -\bar{\Delta}_\ell^t$. For $\ell \in S$ we have $\delta_\ell^t =\bar{\Delta}_\ell^t-\bar\Delta_\ell^{t+1}.$
          \item $\sum_{\ell \in S} \delta_\ell^t = \sum_{\ell \not \in S} \delta_\ell^t$, 
          \item For all $\ell \in S$ we have  $\delta_{\ell}^t \leq (1-\lambda)\bar \Delta^t_{\ell} $. There exists at least one $\ell \in S$, such that $\delta_{\ell}^t = (1-\lambda)\bar \Delta^t_{\ell} $.
      \end{itemize}
\end{claim}

Now we can bound the potential value after an iteration as follows. 
    \begin{align}
         \phi^{t+1} 
         =\sum_{\ell = 1}^k(\bar \Delta^{t+1}_{\ell})^2 
         \nonumber &=\sum_{\ell \in S}(\bar \Delta^t_{\ell} -\delta_{\ell}^t)^2 + \sum_{\ell \notin S}(\bar \Delta^t_{\ell} +\delta_{\ell}^t)^2 \\
         \nonumber&=\sum_{\ell = 1}^k (\bar \Delta^t_{\ell})^2 + \sum_{\ell = 1}^k (\delta_{\ell}^t)^2 -2\sum_{\ell \in S} \bar \Delta^t_{\ell} \delta_{\ell}^t + 2\sum_{\ell \notin S} \bar \Delta^t_{\ell} \delta_{\ell}^t \\
         \nonumber&\leq \sum_{\ell = 1}^k (\bar \Delta^t_{\ell})^2 + \sum_{\ell = 1}^k (\delta_{\ell}^t)^2 -2\min_{\ell \in S} \bar \Delta^t_{\ell}\sum_{\ell \in S}  \delta_{\ell}^t + 2\max_{\ell \notin S} \bar \Delta^t_{\ell} \sum_{\ell \notin S}  \delta_{\ell}^t \\
         \nonumber&\overset{\text{(a)}}{=}\sum_{\ell = 1}^k (\bar \Delta^t_{\ell})^2 + \sum_{\ell = 1}^k (\delta_{\ell}^t)^2 -2(\min_{\ell \in S} \bar \Delta^t_{\ell} - \max_{\ell \notin S} \bar \Delta^t_{\ell}) \sum_{\ell \in S} \delta_\ell^t \\
         &\overset{\text{(b)}}{\leq} \sum_{\ell = 1}^k (\bar \Delta^t_{\ell})^2 + \underbrace{\sum_{\ell = 1}^k (\delta_{\ell}^t)^2}_{\text{(I)}}- \underbrace{\frac{2}{k}(\min_{\ell \in S} \bar \Delta^t_{\ell} ) \sum_{\ell \in S}\delta_\ell^t}_{\text{(II)}},
         \label{eq: convergence-proof-1}
    \end{align}
    where step (a) is by the second statement of \Cref{claim:surplus_in_S} 
    and step (b) follows from 
    second statement of \Cref{claim:surplus_in_S}. Next, we bound the last two terms (I) and (II).
    
    \paragraph{Upper-bounding (I)} Since $\sum_{\ell \in S} \delta_\ell^t = \sum_{\ell \not \in  S} \delta_\ell^t$, by the equivalence between the $\ell_1$ and $\ell_2$ norms, 
    \begin{equation}
    \label{eq: convergence-proof-norm}
        \sum_{\ell =1}^k(\delta_\ell^t)^2 \leq \sum_{\ell  \in  S} (\delta_\ell^t)^2+  \sum_{\ell \not \in  S} (\delta_\ell^t)^2 \leq (k+1) \sum_{\ell  \in  S} (\delta_\ell^t)^2.
    \end{equation}
    Applying the third statement of \Cref{claim:bounds_on_delta_ell} to upper-bound $\delta_\ell^t$ for each $\ell\in S$, we get
    \begin{equation}
    \label{eq: convergence-proof-term-I}
        \text{(I)} = \sum_{\ell\in S} (\delta_{\ell}^t)^2 + \sum_{\ell\notin S} (\delta_{\ell}^t)^2  \leq (k+1) \cdot \sum_{\ell\in S} (\delta_{\ell}^t)^2 \leq (k+1)(1-\lambda)^2 \sum_{\ell\in S}(\Bar{\Delta}_\ell^t)^2 \leq (k+1)(1-\lambda)^2 \sum_{\ell=1}^k(\Bar{\Delta}_\ell^t)^2.
    \end{equation}
    
    \paragraph{Lower-bounding (II)} As in the third statement of \Cref{claim:bounds_on_delta_ell}, we know that there exists $\ell' \in S$ such that the surplus hits the lower bound. By the non-negativity of each $\delta_\ell^t$, we have $\sum_{\ell \in S} \delta_\ell^t \geq  \delta_{\ell'}^t = (1-\lambda) \Delta_{\ell'}^t$. We then have
    \begin{equation}
        \text{(II)} \geq \frac{2(1-\lambda)}{k}(\min_{\ell \in S} \bar{\Delta}_\ell^t) \cdot \bar{\Delta}_{\ell^\prime}^t \geq \frac{2(1-\lambda)}{k}(\min_{\ell \in S} \bar{\Delta}_{\ell}^t )^2 \overset{\text{(c)}}{\geq} \frac{2(1-\lambda)}{e^2k}(\max_{\ell \in [k]} \bar{\Delta}_{\ell}^t )^2 \overset{\text{(d)}}{\geq} \frac{2(1-\lambda)}{e^2 k^2(k-1)} \sum_{\ell = 1}^k (\Delta_\ell^t)^2.
        \label{eq: convergence-proof-term-II}
    \end{equation}
    The above step (c) is by the first statement in \Cref{claim:surplus_in_S}, and the fact that the maximum surplus value is always included in $S$. The last step (d) is by the following lemma; the proof is elementary. 
    
    \begin{lemma}
        For any $k \geq 2$ real numbers $y_1, \ldots, y_k$ such that $\sum_{\ell=1}^k y_\ell = 0$, it holds that $\sum_{\ell = 1}^k (y_\ell)^2 \leq k(k-1) (\max_{\ell\in[k]}y_\ell)^2$. 
    \end{lemma}
    
    \paragraph{Putting together.} Now we can bring~\cref{eq: convergence-proof-term-I} and~\cref{eq: convergence-proof-term-II} back to~\cref{eq: convergence-proof-1}. Recall by definition $\phi^t = \sum_{\ell=1}^k(\bar{\Delta}_\ell^t)^2$, then  
    \begin{equation*}
        \phi^{t+1} \leq \left(1+(k+1)(1-\lambda)^2- \frac{2(1-\lambda)}{e^2k^2(k-1)}\right) \phi^{t}.  
    \end{equation*}
    Setting $\lambda = 1-0.09 k^{-4}$ gives us the desired bound $\phi^{t+1} \leq (1-0.01 k^{-7}) \phi^t$. 
\end{proof}

\subsection{Extracting an Exact Equilibrium}

Finally, we show that when all surpluses are less than $1/k$ in absolute value, the output prices of \Cref{algorithm: combinatorial} will be equilibrium prices. 
\begin{lemma}
\label{lemma: combinatoria-extraction}
    When $\varepsilon<1/k$, the output prices of \Cref{algorithm: combinatorial} are competitive equilibrium prices.
\end{lemma}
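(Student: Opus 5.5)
Throughout the algorithm the final prices $\p$ and money flows $(\mathbf b,\mathbf e)$ satisfy three invariants: every positive flow lies on an MBB/MPB edge of $\mathcal{G}(\p)$, every budget is fully spent, and every earning requirement is exactly met. These follow from the correctness discussion and \Cref{lem:price-update}. At termination we also have $|\Delta_\ell|\le\varepsilon<1/k$ for all $\ell$. By \Cref{claim: spending/earning-to-allocation}, it suffices to show there exist flows $(\mathbf b',\mathbf e')$ supported on the \emph{same} edge set $\mathcal{E}(\p)$ that clear budgets and earnings exactly and give $\Delta_\ell=0$ for every task.

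I would phrase this as a max-flow question on the network described at the start of \Cref{sec:walrasian-algo}:
\begin{itemize}
\item source $\to$ user $i$ with capacity $B_i$;
\item user $\to$ task and task $\to$ worker edges restricted to $\mathcal{E}(\p)$, with infinite capacity;
\item worker $j\to$ sink with capacity $E_j$.
\end{itemize}
The prices are an equilibrium iff the max flow equals $\sum_i B_i$. The current $(\mathbf b,\mathbf e)$ is not a flow, because conservation fails at task nodes by $\Delta_\ell$. Still, it is a near-flow. Cancel the imbalance greedily: at each task with $\Delta_\ell>0$, reduce incoming user flow by $\Delta_\ell$, and at each task with $\Delta_\ell<0$, reduce outgoing worker flow by $|\Delta_\ell|$. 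This yields a feasible flow of value at least $\sum_i B_i-\sum_{\ell:\Delta_\ell>0}\Delta_\ell>\sum_i B_i-k\varepsilon>\sum_i B_i-1$.

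The key step is integrality. Since all $B_i,E_j$ are integral (assumed WLOG at the start of the section) and the internal edges have infinite capacity, every min cut has integral capacity. Hence the max-flow value is an integer. It lies in $(\sum_iB_i-1,\sum_iB_i]$ and $\sum_iB_i$ is an integer, so it equals $\sum_i B_i$. An integral max flow then saturates all source and sink edges, giving spending and earning flows with exact conservation at tasks, i.e.\ $\Delta_\ell=0$, supported only on MBB/MPB edges. Setting $x_{i\ell}=b'_{i\ell}/p_\ell$ and $y_{j\ell}=e'_{j\ell}/p_\ell$ and invoking \Cref{claim: spending/earning-to-allocation} finishes the proof. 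Note that prices are positive throughout, since they start at $1$ and only increase.

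The main obstacle is making sure the ``rounding'' is legitimate. One must check that any min cut's capacity is a sum of some $B_i$'s and $E_j$'s, with no infinite edge crossing it, so that it is integral. One must also check carefully that the truncated near-flow really has value exceeding $\sum_iB_i-1$. Because $\sum_\ell\Delta_\ell=0$, the positive surpluses sum to at most $k\varepsilon/2$ in the worst split, so the bound $\varepsilon<1/k$ comfortably suffices.
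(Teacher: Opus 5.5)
Your proposal is correct and follows essentially the same route as the paper. Both arguments trim the output flows (user spending on tasks with $\Delta_\ell>0$, worker earnings on tasks with $\Delta_\ell<0$) to get a genuine flow on $\mathcal{G}(\p)$ of value exceeding $\sum_i B_i-1$, then use integrality of the max-flow value (you via integral min-cut capacities, the paper via total unimodularity) to conclude the max flow is exactly $\sum_i B_i$ and apply \Cref{claim: spending/earning-to-allocation}; your accounting of the lost flow as $\sum_{\ell:\Delta_\ell>0}\Delta_\ell=\tfrac12\sum_\ell|\Delta_\ell|$ is slightly tighter than the paper's $\sum_\ell|\Delta_\ell|$, though either suffices.
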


\begin{proof}
    Let $\hat{\p}$ be the output prices, and $(\hat{b}_{i\ell}), (\hat{e}_{j\ell})$ be the output spending and earning from \Cref{algorithm: combinatorial}. We show that the max-flow of the MBB/MPB graph $\mathcal{G}(\hat{\p})$ is $\sumi B_i$, or equivalently $\sumj E_j$. Then $\hat{\p}$ and the max-flow will lead to a CE by \Cref{claim: spending/earning-to-allocation}.

    By total unimodularity we know that the max-flow rate of $\mathcal{G}(\hat{\p})$ is integral, so it suffices to show that there exists a flow with rate strictly larger than $\sumi B_i -1$. In fact, such flow can be constructed from the output by decreasing the \firmsS spending on $\Delta_\ell$ tasks and decreasing the workers' earning on $\Delta_\ell$ tasks until surpluses are zero. A concrete construction is  
    \begin{equation*}
        b_{i\ell}^* = \begin{cases}
            \hat{b}_{i\ell}\cdot  \left(1-\frac{\Delta_\ell}{\sum_{i^\prime} \hat{b}_{i^\prime\ell}}\right), & \text{($\Delta_\ell >0$)}\\
            \hat{b}_{i\ell}, & (o.w.)
        \end{cases}, \ 
        e_{j\ell}^* = \begin{cases}
            \hat{e}_{j\ell}\cdot  \left(1+\frac{\Delta_\ell}{\sum_{i^\prime} \hat{e}_{j^\prime\ell}}\right), & \text{($\Delta_\ell <0$)}\\
            \hat{e}_{j\ell}, & (o.w.)
        \end{cases}.
    \end{equation*}
    The flow rate is $\sumi B_i - \suml|\Delta_\ell| > \sumi B_i -1$ when $|\Delta_\ell|< \varepsilon <1/k$. 
\end{proof}

Combined with the convergence rate in iterations from \Cref{lem:convergence}, we know that an exact equilibrium price vector $\p$ can be computed by polynomial iterations of \Cref{algorithm: combinatorial}, each of them taking polynomial number of operations. An equilibrium allocation can then be computed efficiently by finding a max-flow on $\mathcal{G}(\p)$. 
\begin{theorem}
    When $\lambda = 1 - 0.09 k^{-4}$, after $t>100 k^{7}(\log \phi^1 + 4 \log k)$ iterations, the output prices of \Cref{algorithm: combinatorial} are equilibrium prices, where $\phi_1 \leq (\sumi B_i)^2+(\sumj E_j)^2.$ The total number of arithmetic operations is $\mathcal{O}(\operatorname{poly}(m, n , k, \log \max_i B_i, \log \max_j  E_j)).$
\end{theorem}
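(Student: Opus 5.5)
The plan is to chain three earlier results: the geometric decrease of \Cref{lem:convergence}, the extraction criterion of \Cref{lemma: combinatoria-extraction}, and the per-iteration cost bound of \Cref{lem:allocation-update}.

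\textbf{Step 1: the initial potential.} At initialization every user spends its whole budget on one task and every worker earns its whole requirement from one task. Hence each $\Delta_\ell = \sum_i b_{i\ell} - \sum_j e_{j\ell}$ lies in $[-\sum_j E_j, \sum_i B_i]$. I will then bound $\sum_\ell (\Delta_\ell)^2 \le \big(\sum_\ell \sum_i b_{i\ell}\big)^2 + \big(\sum_\ell \sum_j e_{j\ell}\big)^2$. This uses $(a-b)^2\le a^2+b^2$ for $a,b\ge0$ together with the fact that the sum of squares of nonnegative numbers is at most the square of their sum. The result is $\phi^1 \le (\sum_i B_i)^2 + (\sum_j E_j)^2$.

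\textbf{Step 2: number of iterations.} By \Cref{lem:convergence}, $\phi^{t} \le (1-0.01k^{-7})^{t-1}\phi^1 \le \exp(-0.01k^{-7}(t-1))\phi^1$. We need a stopping threshold $\varepsilon<1/k$, so it suffices that $\phi^t < k^{-2}$, which forces $\max_\ell |\Delta_\ell| < 1/k$. Take $\varepsilon$ just below $1/k$. Then $t-1 > 100k^7(\log\phi^1 + 2\log k)$ suffices, and the stated bound with $4\log k$ gives extra slack. The slack covers the off-by-one and the choice of $\varepsilon$ strictly below $1/k$, for instance $\phi^t<k^{-4}$ giving $|\Delta_\ell|<k^{-2}<1/k$.

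\textbf{Step 3: the output is exact.} Once the loop exits with $|\Delta_\ell|\le\varepsilon<1/k$, \Cref{lemma: combinatoria-extraction} says the output prices are CE prices. It uses integrality of $B_i,E_j$ and total unimodularity of the max-flow on $\mathcal{G}(\hat{\p})$. An equilibrium allocation is then recovered by one max-flow computation and \Cref{claim: spending/earning-to-allocation}.

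\textbf{Step 4: operation count.} By \Cref{lem:allocation-update}, each iteration has $O(n+m)$ phases, and each phase uses $\mathrm{poly}(n,m,k)$ arithmetic operations independent of $\mathbf{v},\mathbf{c},B,E$. The iteration count is $O(k^7(\log\phi^1+\log k))$ with $\log\phi^1 = O(\log n + \log m + \log\max_i B_i+\log\max_j E_j)$. The product is $\mathrm{poly}(m,n,k,\log\max_i B_i,\log\max_j E_j)$, and the final max-flow is strongly polynomial.

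\textbf{Main obstacle.} Steps 1 and 3 are routine. The subtle point is the iteration count in Step 2: the potential is defined only at the start of each iteration, while the stopping test is on the current surpluses. I will handle this by applying the test at iteration starts. Since $\bar\Delta^t$ equals the surpluses at that moment, the loop condition at Step 1 uses exactly these values, so no mid-iteration control is needed.
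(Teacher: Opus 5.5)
Your proposal is correct and takes the same route as the paper. The paper's proof simply chains \Cref{lem:convergence} (geometric decay of $\phi^t$), \Cref{lemma: combinatoria-extraction} (surpluses below $1/k$ give exact equilibrium prices via integrality of the max-flow), and \Cref{lem:allocation-update} (polynomially many phases per iteration, each with a polynomial number of operations). Your explicit bound $\phi^1 \le (\sum_i B_i)^2 + (\sum_j E_j)^2$ and the iteration arithmetic fill in details the paper leaves implicit; note also that after $t$ completed iterations the relevant potential is $\phi^{t+1} \le (1-0.01k^{-7})^{t}\phi^1 < k^{-4}$, so there is in fact no off-by-one to absorb.
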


\paragraph{Strong Polynomial Result for CEEEI} 

Next, we consider the special case of computing CEEEI prices when all endowments ($B_i$ and $E_j$) are equal. By the conic structure of equilibrium prices \Cref{thm: convexity-of-prices}, we can set all $B_i = E_j = 1$ for all \firms and workers while preserving the set of equilibrium prices. This removes the $\log \max_i B_i, \log \max_j E_j$ dependence in the complexity. Further, as shown in \Cref{lem:allocation-update}, the number of arithmetic operations of the algorithm does not depends on the size of input $\mathbf{v}$ and $\mathbf{c}$. This gives a strongly polynomial time algorithm. 

\begin{theorem}
   When all endowments $B_i$'s and $E_j$'s are equal, by setting $\lambda = 1 - 0.09 k^{-4}$, \Cref{algorithm: combinatorial} computes a set of exact equilibrium prices in strong polynomial time, with $\mathcal{O}(\operatorname{poly}(m, n , k))$ arithmetic operations (independent of the bit-length of the input). 
\end{theorem}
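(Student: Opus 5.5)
The plan is to reduce the CEEEI case to the general running-time bound of the preceding theorem, and then check that every remaining quantity in the complexity depends only on $n,m,k$.

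\emph{Step 1 (normalization).} If all endowments equal a common value $c>0$, rescale every $B_i$ and $E_j$ to $1$. By the conic structure of the equilibrium price set (\Cref{thm: convexity-of-prices}), this preserves the set of equilibrium prices up to a positive scalar. Scaling all prices by $c$ leaves every MBB/MPB relation unchanged, since all ratios $v_{i\ell}/p_\ell$ and $c_{j\ell}/p_\ell$ scale by the same factor. So any exact CE for the normalized instance converts into an exact CE for the original instance by one multiplication. Integrality of the endowments, which the extraction step needs, holds trivially after normalization.

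\emph{Step 2 (iteration count).} With $B_i=E_j=1$, the initial potential satisfies $\phi^1\le n^2+m^2$. By \Cref{lem:convergence} with $\lambda=1-0.09k^{-4}$, every iteration multiplies $\phi^t$ by at most $(1-0.01k^{-7})$. After $T=O(k^7(\log(n^2+m^2)+\log k))$ iterations we therefore have $\max_\ell|\bar\Delta_\ell|\le\sqrt{\phi^T}<\varepsilon$ for a fixed $\varepsilon<1/k$. Then \Cref{lemma: combinatoria-extraction} certifies that the output prices are exact equilibrium prices. The key point is that this iteration bound involves only $n,m,k$.

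\emph{Step 3 (cost per iteration).} By \Cref{lem:allocation-update}, each iteration has at most $n+m+1$ spending/earning phases and $n+m$ price phases. Each phase uses a number of arithmetic operations polynomial in $n,m,k$ and independent of $\mathbf{v},\mathbf{c}$.
\begin{itemize}
\item A price update computes the smallest multiplier at which a new MBB/MPB edge appears. This is a minimum over $O((n+m)k^2)$ candidate ratios.
\item Building $S$ requires sorting surpluses.
\item The final allocation is one max-flow computation on $\mathcal{G}(\p)$. Strongly polynomial max-flow algorithms exist for this.
\end{itemize}
Multiplying the per-iteration cost by $T$ gives $\mathcal{O}(\operatorname{poly}(m,n,k))$ arithmetic operations in total.

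\emph{Main obstacle.} The real concern for strong polynomiality is that intermediate numbers keep polynomial bit-length, so that each arithmetic operation is itself polynomial time. The price multipliers are ratios of input entries. Over many iterations the prices become products of such ratios, and there are at most polynomially many factors. I would argue this by noting that every price is, up to a common scale, determined by a spanning structure of tight MBB/MPB edges, so its bit-length is bounded by that of a product of at most $k$ input ratios. The flows are obtained by additions and subtractions of bounded rationals, governed by the same bound. Given the paper's framing, I would state the result in the arithmetic-operation model, which follows directly from Steps 1 through 3.
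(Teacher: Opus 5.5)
Your Steps 1--3 follow the paper's own argument. Normalize to $B_i=E_j=1$ using the conic structure of \Cref{thm: convexity-of-prices}. Then $\phi^1\le n^2+m^2$, and the iteration bound from \Cref{lem:convergence} depends only on $n,m,k$. Exactness follows from the extraction lemma with integral endowments. \Cref{lem:allocation-update} gives a per-iteration operation count that is independent of $\mathbf v$ and $\mathbf c$.

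Your bit-length paragraph goes beyond the paper, which also stays in the arithmetic-operation model, and its justification does not hold as stated. The MBB/MPB graph maintained by \Cref{algorithm: combinatorial} is generally disconnected, already at $\p=1^k$. Zero-flow edges also disappear during price updates. So relative prices across components come from the run's history, not from a tight spanning structure. Your fallback conclusion is nonetheless exactly what the paper claims.
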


%% file: sec-LP-algorithm.tex
\section{Linear Program Characterization} 
\label{sec:LP}

Several convex programming formulations are well-known for the goods-only Fisher market. Motivated by the positive computational results in Section \ref{sec:walrasian-algo}, we attempt to obtain a convex program for the Labor market; indeed in Appendix \ref{app:prop-convex} we show that the CE prices form a convex set. 
In this section we show that, in fact we can obtain a {\em linear programing (LP)} formulation for the Labor market; this is surprising because the lack of or the need for an LP formulation even for the goods-only Fisher market is well-known \citep{devanur2002market,adsul2010simplex,garg2013simplex}. Our LP is obtained through taking the dual of the EG-like program~(\ref{prog:primal-EG-general}) and a particular change of variables! Although this LP involves irrational coefficients, we design {\em truncation} and {\em extraction} procedures to obtain a polynomial-time algorithm to find a CE via this LP.  

\subsection{Linear Programs in the Logarithm Space}
\label{subsec:LP-formulation}

First, we show how the dual EG program (\textnormal{\ref{program:EG-Dual}}) can be converted into an LP after taking the logarithm on both sides of the constraints. 
When taking the logarithm, we ignore those constraints with $v_{i\ell} = 0$ because they are redundant with $p_\ell \geq 0$. 
This transformation leads to the following LP. 
\begin{equation}
    \begin{aligned}
        \max_{p, \beta, \alpha \geq 0} \quad & \sum_{i = 1}^n B_i \log{\beta_i} - \sum_{j = 1}^m E_j \log{\alpha_j} \\ 
        \textnormal{s.t.} \quad & \log{v_{i \ell}} + \log{\beta_i} \leq \log{p_\ell} \quad \forall\, i \in [n], \ell \in L(i) \\ 
        & \log{p_\ell} \leq \log{c_{j \ell}} + \log{\alpha_j} \quad \forall\, j \in [m], \ell \in [k], 
    \end{aligned}
\end{equation}
where $L(i) = \{ \ell \in [k] \mid v_{i \ell} > 0 \}$ for all $i \in [n]$. 
Recall that all $c_{j\ell}$ are strictly positive, therefore all $\log(c_{j\ell})$ terms are finite. 
Note that this LP does involve irrational numbers in its constraints, thus its exact solution cannot be directly found in polynomial time. 

We further consider the dual of the above LP. 
Interestingly, this leads to the following LP which is alike to the well-known Shmyrev program~\citep{shmyrev2009algorithm}. 
Similar to the Shmyrev program, 
the primal variables $\mathbf{b}$ and $\mathbf{e}$ can be interpreted as the spending and earning vectors, respectively. 
\begin{equation}
    \begin{aligned}
        \min_{b, e \geq 0} \quad & \sum_{i} \sum_{\ell \in L(i)} (-\log{v_{i\ell}}) b_{i\ell} + \sum_{j}\sum_{\ell} (\log{c_{j\ell}}) e_{j\ell} \\ 
        \textnormal{s.t.} \quad 
        & \sum_{\ell \in L(i)} b_{i\ell} = B_i \quad \forall\, i \in [n] \\ 
        & \sum_{\ell} e_{j\ell} = E_j \quad \forall\, j \in [m] \\ 
        & \sum_{i: \ell \in L(i)} b_{i\ell} = \sum_j e_{j\ell} \quad \forall\, \ell \in [k].  
    \end{aligned}
    \tag{Labor Market LP}
    \label{program:EG Dual log Dual}
\end{equation}

Note that the KKT conditions do not preserve under the logarithm operation, 
which requires us to 
separately prove that the optimal solution set of~(\ref{program:EG Dual log Dual}) indeed corresponds to the set of competitive equilibria. 
In particular, any optimal primal solution yields an optimal spending and earning allocation, while the dual solution provides an equilibrium price vector (after taking the exponential). 
The proof of the following theorem is in~\cref{app:subsec:proofs-LP}.

\begin{restatable}{theorem}{thmLPEquivalence}
\label{thm: LP-Equivalence}
    There is a one-to-one correspondence between the optimal solutions of \eqref{program:EG Dual log Dual} and the competitive equilibria of the linear labor market.
\end{restatable}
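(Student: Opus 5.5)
The plan is to use LP duality between the log-space program (variables $\tilde p_\ell=\log p_\ell$, $\tilde\beta_i=\log\beta_i$, $\tilde\alpha_j=\log\alpha_j$) and \eqref{program:EG Dual log Dual}, and to read both directions of the correspondence off complementary slackness. A pair $(\mathbf{b},\mathbf{e})$ is optimal for \eqref{program:EG Dual log Dual} if and only if there is a feasible dual $(\tilde{\p},\tilde{\bm\beta},\tilde{\bm\alpha})$ with the following properties: $b_{i\ell}>0$ implies $\tilde p_\ell=\tilde\beta_i+\log v_{i\ell}$, and $e_{j\ell}>0$ implies $\tilde p_\ell=\tilde\alpha_j+\log c_{j\ell}$. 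The correspondence to be established is between such primal–dual optimal pairs and CE triples $(\p,\x,\y)$, via $p_\ell=e^{\tilde p_\ell}$, $x_{i\ell}=b_{i\ell}/p_\ell$ and $y_{j\ell}=e_{j\ell}/p_\ell$.

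\emph{Optimal solutions give a CE.} Fix an optimal pair and set $p_\ell=e^{\tilde p_\ell}>0$, $\beta_i=e^{\tilde\beta_i}$ and $\alpha_j=e^{\tilde\alpha_j}$. Dual feasibility gives $v_{i\ell}/p_\ell\le 1/\beta_i$ for every $\ell$, with equality wherever $b_{i\ell}>0$. Hence spending is supported only on MBB edges, and the same argument shows earning is supported only on MPB edges. The primal constraints give $\sum_\ell b_{i\ell}=B_i$, $\sum_\ell e_{j\ell}=E_j$ and $\Delta_\ell=0$. \Cref{claim: spending/earning-to-allocation} then shows that $(\p,\x,\y)$ is a CE.

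\emph{A CE gives optimal solutions.} Start from a CE $(\p,\x,\y)$; by the proposition in \Cref{sec:model-pre}, $\p>0$ and budgets and earning requirements clear exactly. Define $b_{i\ell}=p_\ell x_{i\ell}$, $e_{j\ell}=p_\ell y_{j\ell}$, $\tilde p_\ell=\log p_\ell$, $\tilde\beta_i=-\log\max_\ell (v_{i\ell}/p_\ell)$ and $\tilde\alpha_j=-\log\min_\ell (c_{j\ell}/p_\ell)$. These quantities are finite by \Cref{assumption: nontrivial}. The dual is feasible by the definitions of the max and min. Primal feasibility follows from budget clearing, earning clearing and market clearing. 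Complementary slackness is exactly \Cref{prop: mbb/mpb}: a user spends only on MBB tasks and a worker earns only on MPB tasks. So the pair is primal–dual optimal.

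Finally, I would check that the two maps are mutually inverse. One point needs care. The dual optimum is determined only up to adding a common constant to $(\tilde{\p},\tilde{\bm\beta},\tilde{\bm\alpha})$, since the objective shift is $c(\sum_i B_i-\sum_j E_j)=0$. In addition, different dual optima can support the same primal optimum. So the correspondence should be stated between optimal primal–dual pairs, modulo this scaling, and CE triples; this matches the scaling invariance noted after \Cref{lem: Dual-KKT}. I expect the main obstacle to be this bookkeeping: making precise what ``one-to-one'' means given the non-uniqueness, and confirming that the $\tilde\beta_i,\tilde\alpha_j$ recovered from a dual optimum equal the inverse MBB/MPB values. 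For users with positive spending the latter follows from tightness. For a user with no spending, recovery would fail, but budget clearing $B_i>0$ forces some $b_{i\ell}>0$, and likewise $E_j>0$ forces some $e_{j\ell}>0$ for each worker, so both $\tilde\beta_i$ and $\tilde\alpha_j$ are pinned down.
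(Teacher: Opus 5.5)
Your proposal is correct and follows essentially the same route as the paper. The paper writes the KKT conditions of \eqref{program:EG Dual log Dual} with multipliers $(\gamma_i,\lambda_j,\mu_\ell)$, which are your $-\tilde\beta_i$, $\tilde\alpha_j$, $\tilde p_\ell$; it sets $p_\ell=e^{\mu_\ell}$ to read off MBB/MPB support and clearing, and conversely takes $\gamma_i=\max_\ell\log(v_{i\ell}/p_\ell)$, $\lambda_j=\max_\ell\log(p_\ell/c_{j\ell})$, $\mu_\ell=\log p_\ell$. Your bookkeeping is more careful than the paper's, but no quotient by scaling is needed: the primal--dual pair carries $\tilde{\p}$ and the CE triple carries $\p=e^{\tilde{\p}}$, and you already showed that $B_i,E_j>0$ pin $\tilde\beta_i,\tilde\alpha_j$ to the inverse MBB/MPB values, so the two maps are exact inverses between optimal primal--dual pairs and CE triples (quotienting only the LP side would actually break the bijection).
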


The above LP also connects the labor market to a network flow problem: finding a CE of the labor market is then equivalent to solving a minimum-cost flow problem with irrational cost coefficients; see detailed discussions in \Cref{appendix: discussion-LP}.

\subsection{Polynomial time algorithm}

In~\cref{subsec:LP-formulation}, we demonstrate that one can compute a CE in the labor market by solving an LP, or equivalently a minimum-cost flow problem, with irrational coefficients. 
However, in principle, all standard polynomial‐time algorithms presume rational (finite‐encoded) data. 
That said, a finite Turing machine cannot exactly encode with irrational input data. 
See, for example,~\citet{karloff2008linear}. 
As our program has irrational coefficients in its objective\footnote{One might consider truncating the coefficients in the objective function to a certain precision, such that the resulting problem give the same optimal solution as the original. 
However, in~\cref{app:subsec:truncated-LP}, we point out a fundamental challenge for this approach. 
Therefore, it is unclear whether there is a way to efficiently find a ``good'' truncated problem. }, 
it is impossible to find an exact CE by simply solving \textnormal{(\ref{program:EG Dual log Dual})}. 
Similarly, a combinatorial algorithm cannot be exactly implemented if the input data cannot even be finitely represented.

In this section, we propose another polynomial-time algorithm to find a CE of the labor market. 
This algorithm involves solving a minimum-cost flow problem with rounded coefficients, and a non-trivial extraction procedure. 
The main idea is as follows: 
after rounding the objective coefficients in \textnormal{(\ref{program:EG Dual log Dual})} up to certain decimal bits, 
one can leverage a polynomial-time algorithm to solve a minimum-cost flow problem and attain a CE $(\tilde{\mathbf{p}}, \tilde{\mathbf{b}}, \tilde{\mathbf{e}})$ for the ``rounded instance''. 
From this solution, 
we can find an equilibrium price $\mathbf{p}$ of the original instance and preserve $(\tilde{\mathbf{b}}, \tilde{\mathbf{e}})$ as a valid equilibrium spending and earning flow.
Therefore, $(\mathbf{p}, \tilde{\mathbf{b}}, \tilde{\mathbf{e}})$ constitutes a CE of the original instance. 
In the following, we explain our procedure in order and summarize it in the end.
All proofs are deferred to~\cref{app:subsec:proofs-LP}.

We denote  
$B = \max_i B_i, E = \max_j E_j, U = \max_{i,\ell} v_{i \ell}, D = \max_{j, \ell} c_{j \ell}$ 
and define 
\begin{equation*}
    L := k^2\log_2{(U D)} + \log_2{U} + \log_2{D} + 2\log_2{k} + 8 \geq k^2 + 8. 
\end{equation*}
The first step is rounding the coefficients of \textnormal{(\ref{program:EG Dual log Dual})} up to $L$ decimal bits. 
We show that, 
the ``rounded instance'' is close to the original one in terms of utility and disutility values (utility data). 

\begin{restatable}{claim}{claimMultApprox}
    \label{claim:mult-approx}
    Let $(\tilde{v}_{i\ell})_{i \in [n], \ell \in [k]}, (\tilde{c}_{j\ell})_{j \in [m], \ell \in [k]}$ denote the utility and disutility values in the rounded instance. 
    Then, we have 
    $v_{i \ell} \geq \tilde{v}_{i \ell} \geq (1 + 2^{1-L})^{-1} v_{i \ell}$ and $c_{j \ell} \geq \tilde{c}_{j \ell} \geq (1 + 2^{1-L})^{-1} c_{j \ell}$. 
\end{restatable}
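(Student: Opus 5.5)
The plan is to read the statement directly through the rounding rule and the definition of the rounded instance. The rounded instance is the one whose LP coefficients are the truncated logarithms: we replace $\log v_{i\ell}$ by a value $\widetilde{\log v_{i\ell}}$ and $\log c_{j\ell}$ by $\widetilde{\log c_{j\ell}}$, each obtained by keeping $L$ bits after the binary point. We then set $\tilde v_{i\ell} = 2^{\widetilde{\log_2 v_{i\ell}}}$ and $\tilde c_{j\ell} = 2^{\widetilde{\log_2 c_{j\ell}}}$, working in base $2$ without loss of generality; the natural logarithm only rescales the objective by a positive constant and does not change the optimal set. The direction of rounding should be fixed as rounding \emph{down} (truncation), consistent with the claimed inequality $\tilde v \le v$.

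\textbf{Step 1: the additive error in log space.} Truncation to $L$ fractional bits gives
\[
\log_2 v_{i\ell} - 2^{-L} < \widetilde{\log_2 v_{i\ell}} \le \log_2 v_{i\ell},
\]
and the same bound holds for $c_{j\ell}$. Exponentiating, we obtain $v_{i\ell}\ge \tilde v_{i\ell} \ge 2^{-2^{-L}} v_{i\ell}$.

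\textbf{Step 2: convert to the stated multiplicative factor.} It suffices to show $2^{2^{-L}} \le 1+2^{1-L}$. Write $2^{x} = e^{x\ln 2}$ with $x = 2^{-L}\le 1$. Convexity of $x \mapsto 2^x$ on $[0,1]$ gives $2^x \le 1 + x$, since the chord from $(0,1)$ to $(1,2)$ has slope $1$. Hence $2^{2^{-L}} \le 1+2^{-L} \le 1+2^{1-L}$. The factor of $2$ leaves slack for the case where the rounding is to nearest or is applied to the natural log. In that case the error is at most $2^{-L}$ in $\ln$, and we use $e^{x}\le 1+2x$ for $x\le 1$, which again follows from convexity since $e-1\le 2$. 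Either convention therefore yields $v_{i\ell} \ge \tilde v_{i\ell} \ge (1+2^{1-L})^{-1} v_{i\ell}$, and identically for $c_{j\ell}$.

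\textbf{Main obstacle.} The only real subtlety is pinning down the rounding convention so that $\tilde v\le v$ and $\tilde c\le c$ both hold; this requires rounding the logs downward, not the negated coefficients $-\log v_{i\ell}$. If instead the objective coefficient $-\log v_{i\ell}$ were rounded down, $\tilde v$ would exceed $v$, and the claim as written would then need $v$ and $\tilde v$ interchanged. I would therefore state explicitly that $\log v_{i\ell}$ and $\log c_{j\ell}$ are each truncated toward $-\infty$. The zero entries $v_{i\ell}=0$ are excluded via $L(i)$ and are kept as $\tilde v_{i\ell}=0$, so the inequality holds trivially for them.
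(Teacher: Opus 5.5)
Your proof is correct and is essentially the paper's argument. The paper also rounds the logarithms down, which gives $\tilde v_{i\ell}\le v_{i\ell}$ and $\tilde c_{j\ell}\le c_{j\ell}$; it bounds the additive log error by $2^{-L}$ and exponentiates using $e^{-x}\ge (1+2x)^{-1}$ on $[0,1]$, which is your $e^{x}\le 1+2x$. Your aside about rounding to nearest only preserves the lower bound, not $\tilde v_{i\ell}\le v_{i\ell}$, but this is harmless because you fix truncation toward $-\infty$.
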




By solving the rounded LP, we obtain an equilibrium price of the rounded market instance 
with rational utility data $(\tilde{v}_{i \ell})_{i \in [n], \ell \in [k]}$ and $(\tilde{c}_{j \ell})_{j \in [m], \ell \in [k]}$. 
The second step 
is to \emph{adjust} the output price and construct a \emph{connected} MBB/MPB graph defined in~\cref{sec:walrasian-algo}. 
We denote the adjusted price as $\tilde{\mathbf{p}}$, 
the resulting graph as $\tilde{\mathcal{G}}(\tilde{\mathbf{p}})$ and its set of edges as $\tilde{\mathcal{E}}(\tilde{\mathbf{p}})$. 
The following lemma verifies this step. 
\begin{restatable}{lemma}{lemAdjusted}
\label{lem: graph-of-adjusted-prices}
    The output price can be adjusted to a price $\tilde{\mathbf{p}}$ such that $(1)$ $\tilde{\mathcal{G}}(\tilde{\mathbf{p}})$ is connected, and $(2)$ $\sum_{\ell \in [k]} \tilde{p}_{\ell} = 1$, in at most $k$ price scaling phases. 
\end{restatable}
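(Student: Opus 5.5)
Starting from the equilibrium price $\mathbf{p}$ of the rounded instance (obtained from the dual of the rounded LP), I will run a connectivity-repair procedure in the spirit of the price update of \Cref{algorithm: combinatorial}. Write $\tilde{\mathcal{G}}(\mathbf{p})$ for the MBB/MPB graph of the rounded instance, viewed as an undirected graph on users, tasks and workers. The natural invariants are the following. Every positive flow $(\tilde{\mathbf{b}},\tilde{\mathbf{e}})$ stays on MBB/MPB edges, so the rounded equilibrium is preserved. In addition, each task lies in a connected component that also contains at least one user and at least one worker: every task has positive price, it clears its demand and supply, and by \Cref{assumption: nontrivial} all $c_{j\ell}>0$, so every task has some MPB worker. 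Each user and each worker is likewise adjacent to at least one task.

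The procedure proceeds in phases, where each phase reduces the number of connected components. Suppose the graph is disconnected, and let $C$ be one component, for instance the one containing task $1$. I scale up, by a common factor $t\ge 1$, all prices of tasks in $C$ (equivalently, scale down all tasks outside $C$), and I scale the bang-per-buck and pain-per-buck levels of the agents in $C$ by the same factor. Under this change every MBB ratio $\tilde v_{i\ell}/p_\ell$ and every MPB ratio $\tilde c_{j\ell}/p_\ell$ within $C$ changes uniformly, as does every ratio outside $C$. Consequently no existing edge inside $C$ or inside its complement disappears. The key check mirrors \Cref{lem:price-update}: since no equilibrium flow crosses between components, no edge that carries flow is lost. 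Meanwhile, users in $C$ see outside tasks become relatively more attractive, and workers outside $C$ see $C$'s tasks become relatively more attractive.

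I increase $t$ until the first tie occurs. This happens when some user $i\in C$ acquires an MBB edge to a task $\ell'\notin C$, or some worker $j\notin C$ acquires an MPB edge to a task $\ell\in C$. The cross ratios are continuous and strictly monotone in $t$, and the relevant cross quantities, $\max_{\ell'\notin C}\tilde v_{i\ell'}/p_{\ell'}$ versus user $i$'s MBB value and $\min_{\ell\in C}\tilde c_{j\ell}/p_\ell$ versus worker $j$'s MPB value, move toward each other. For a user, the gap closes as long as some $\tilde v_{i\ell'}>0$ outside $C$. For a worker, it closes always, because all $\tilde c_{j\ell}>0$. So the event occurs at a finite $t$ computable by a ratio test. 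I should double-check that the new edge joins $C$ to a different component rather than only adding an edge inside $C$; by construction the new edge has exactly one endpoint in $C$, so it does. I must also confirm that no flow-carrying edge is destroyed at this threshold: within each side the ratios move uniformly, so the only new ties are cross ties, which create edges and destroy none. At least one component merge happens per phase, and since every component contains a task, there are at most $k$ components. Hence at most $k-1\le k$ scaling phases are needed.

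Finally, once the graph is connected, I normalize the prices so that $\sum_\ell \tilde p_\ell=1$. Multiplying all prices by a common positive scalar leaves every MBB/MPB comparison, and therefore the edge set $\tilde{\mathcal{E}}$, unchanged; this gives property (2). The step I expect to be the main obstacle is showing rigorously that a scaling phase never deletes a flow-carrying edge while still guaranteeing a cross edge appears. This requires handling users with $\tilde v_{i\ell'}=0$ for all tasks outside $C$. In that case I rely on the worker side, where positivity of all $\tilde c_{j\ell}$ ensures that the worker-side tie always occurs in finite time.
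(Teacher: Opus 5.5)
Your argument is the paper's: scale up the prices of one component until a user in it gains an MBB edge to an outside task, or an outside worker gains an MPB edge into it. No edge disappears, so $(\tilde{\mathbf b},\tilde{\mathbf e})$ stays supported on optimal edges; you then merge, repeat at most $k-1$ times, and normalize. Your case analysis of why no edge disappears is correct. You also go further than the paper, which never argues that the stopping event is reached at a finite scaling factor. However, the invariant you use to justify that step is false.

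\textbf{The false step.} It is not true that every task has an MPB worker, nor that every component contains a worker. Positivity of the $c_{j\ell}$ says nothing about which tasks attain $\min_\ell c_{j\ell}/p_\ell$. Take one user with $v=(2,1)$, one worker with $c=(1,1)$, $B=E=1$, and $\mathbf p=(3/2,1)$. Task $1$ is the unique MBB and MPB task, and $x_{11}=y_{11}=2/3$ clears the market. Task $2$ is neither MBB nor MPB for anyone, so it is an isolated vertex, and the component of task $1$ contains every worker. When the component $C$ you scale contains all workers, your fallback that ``the worker-side tie always occurs'' is vacuous. Your user-side clause then needs some user in $C$ to value an outside task positively, and you have not shown this.

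\textbf{First repair.} Suppose $C$ contains all workers. Then every user lies in $C$: a user $i$ in another component spends $B_i>0$ on an MBB task there. By market clearing, that task is supplied along an MPB edge, so a worker lies in that component, a contradiction. Hence every other component is a single isolated task $\ell'$. By \Cref{assumption: nontrivial} and \Cref{claim:mult-approx}, some user has $\tilde v_{i\ell'}>0$, and that user is necessarily in $C$. So the user-side tie occurs at a finite factor.

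\textbf{Second repair.} Alternatively, always scale a component that misses at least one worker. If some component contains all workers, scale any other one. Then a worker $j\notin C$ has a positive, unchanged MPB value, while $\tilde c_{j\ell}/(t\,p_\ell)\to 0$ for $\ell\in C$. The worker-side tie alone then suffices.
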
 

Next, by leveraging the structure of $\tilde{\mathcal{G}}(\tilde{\mathbf{p}})$, we are able to efficiently compute a price vector $\mathbf{p}$ such that the set of edges of the MBB/MPB graph $\mathcal{G}(\p)$ constructed by $(v_{i \ell})_{i \in [n], \ell \in [k]}$, $(c_{j \ell})_{j \in [m], \ell \in [k]}$ (the utility data in the original instance) and $\mathbf{p}$ is a superset of that of $\tilde{\mathcal{G}}(\tilde{\mathbf{p}})$. 
\begin{restatable}{lemma}{lemFindAnEqPriceInPolytime}
\label{lem:find-an-eq-price-in-polytime}
    We can find a price vector $\mathbf{p}$ satisfying $\mathcal{E}(\mathbf{p}) \supseteq \tilde{\mathcal{E}}(\tilde{\mathbf{p}})$ in polynomial time.
\end{restatable}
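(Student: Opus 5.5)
Since $\tilde{\mathcal{G}}(\tilde{\mathbf{p}})$ is connected (Lemma~\ref{lem: graph-of-adjusted-prices}), I will choose a spanning tree $T \subseteq \tilde{\mathcal{E}}(\tilde{\mathbf{p}})$ and impose equality on each tree edge, using the \emph{original} utility data. An edge $\arc{i\ell}$ says $\ell$ is MBB for user $i$, i.e.\ $p_\ell = v_{i\ell}/\mu_i$ with $\mu_i$ the bang-per-buck of $i$. An edge $\arc{\ell j}$ says $\ell$ is MPB for worker $j$, i.e.\ $p_\ell = c_{j\ell}/\rho_j$.

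Write $\pi_\ell = \log p_\ell$, $\log \mu_i$ and $\log \rho_j$. The tree equations become $\pi_\ell - (-\log\mu_i) = \log v_{i\ell}$ and $\pi_\ell - \log\rho_j = \log c_{j\ell}$. These are potential differences along the edges of a spanning tree on $n+m+k$ nodes, so after fixing one normalization they determine all potentials uniquely. Back in the multiplicative domain, each $p_\ell$ is a product and ratio of original $v$'s and $c$'s along the tree path from a root task $\ell_0$, times $p_{\ell_0}$. Computing this takes a BFS over the tree and $O(k+n+m)$ rational multiplications/divisions. Since each path has length $O(n+m+k)$, the bit lengths stay polynomial. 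I finally rescale so that $\sum_\ell p_\ell = 1$. The tree edges of $\tilde{\mathcal{E}}(\tilde{\mathbf{p}})$ then lie in $\mathcal{E}(\mathbf{p})$ provided the MBB/MPB inequalities hold; this is the content of the remaining step.

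\textbf{Main step:} I must show that every edge of $\tilde{\mathcal{E}}(\tilde{\mathbf{p}})$, and not only the tree edges, is tight under $\mathbf{p}$, and that no inequality is violated. That is, I need $v_{i\ell'}/p_{\ell'} \le v_{i\ell}/p_\ell$ for all $\ell'$, and similarly for the $c$'s.

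For any non-tree edge $e$ of $\tilde{\mathcal{G}}$, and for any non-edge pair, consider the fundamental cycle or path in $T$. Along it, the rounded data satisfy $\tilde{\mathbf{p}}$-equalities on the tree edges. So the quantity
\[
\prod_{\text{edges on cycle}} (\text{utility ratio})^{\pm 1}
\]
equals $1$ for $e \in \tilde{\mathcal{E}}$ in the rounded data, and is $\le 1$ for a non-edge pair. By Claim~\ref{claim:mult-approx}, the same product in the original data differs from the rounded one by a factor of at most $(1+2^{1-L})^{2(n+m+k)}$. It would be tempting to bound the path length by $2k$ only, since at most $k$ tasks lie on it.

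The key separation argument is the following. In the original data, each such product is a ratio of two products of at most $k$ input numbers. If those inputs are rationals with bit length bounded in terms of $U,D$ (I will assume integer data or use the encoding implicit in $L$), then a product that is not exactly $1$ differs from $1$ by at least $(UD)^{-k}$-ish. The choice $L \ge k^2\log(UD)+\dots$ makes the rounding error $2^{-L}\cdot\mathrm{poly}(k)$ strictly smaller than this gap. Hence, if the rounded cycle product equals $1$, the original product must also equal exactly $1$, so $e$ is tight. Likewise, a strict inequality in the rounded data cannot flip into a violation in the original data: an original product exceeding $1$ would exceed $1$ by more than the rounding error, contradicting the rounded inequality.

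I expect this gap bound to be the obstacle. It requires controlling the length of the relevant path to at most $k$ task nodes, which follows because paths alternate agents and tasks and a simple path visits each task at most once. It also requires a lower bound on $|\prod - 1|$ for products of integers bounded by $U$ and $D$; I would get this by clearing denominators, since a nonzero difference of two integers of size at most $(UD)^{k}$ is at least $1$. Putting these together gives $\mathcal{E}(\mathbf{p}) \supseteq \tilde{\mathcal{E}}(\tilde{\mathbf{p}})$ in polynomial time.
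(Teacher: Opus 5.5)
Your proposal is correct. Your construction is the paper's construction: the paper also fixes a root task, takes for each task $\ell$ a path from the root to $\ell$ in the connected graph $\tilde{\mathcal{G}}(\tilde{\mathbf{p}})$, and solves $p_\ell = t_\ell(\mathbf{u})\,p_k$, $\sum_\ell p_\ell = 1$ with the original data. Your spanning tree is one consistent choice of these paths.

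The verification is where you differ. The paper argues additively and globally. It bounds $|p^*_\ell - \tilde{p}_\ell| \le 24k\cdot 2^{-L}$ using the explicit solution $t_\ell/s$. It then uses the fact that all normalized prices share a common denominator of size at most $k(UD)^{k^2}$: any nonzero $v_{i\ell'}p^*_\ell - v_{i\ell}p^*_{\ell'}$ is therefore at least $k^{-1}(UD)^{-k^2}$. Finally it chains additive perturbations to contradict the rounded MBB/MPB condition.

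You argue multiplicatively and locally. Each MBB/MPB comparison under $\mathbf{p}$ is a ratio of two products of at most $k$ integer inputs: the tree path between two tasks contains at most $k-1$ agents, plus the one extra ratio such as $v_{i\ell'}/v_{i\ell}$. So a violation exceeds $1$ by at least $(UD)^{-k}$, while rounding moves the ratio by a factor of at most $(1+2^{1-L})^{k}$. Your route never touches the normalization and avoids both the coordinate-gap claim and the closeness claim. It also needs only $L$ of order $k\log_2(UD)+\log_2 k$, rather than the paper's $k^2\log_2(UD)$. The paper's route gives, as a by-product, that $\mathbf{p}^*$ is additively close to $\tilde{\mathbf{p}}$.

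Two small remarks. The worker tree equation should read $\pi_\ell + \log\rho_j = \log c_{j\ell}$; this sign slip is harmless. Replacing the crude factor $(1+2^{1-L})^{2(n+m+k)}$ by your final $O(k)$ path-length bound is genuinely needed, not optional, because $L$ does not depend on $n$ or $m$.
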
 
The above lemma verifies that, 
given $\p$,
both $\tilde{\mathbf{b}}$ and $\tilde{\mathbf{e}}$ induce money flows only along optimal edges.
Hence, $(\mathbf{p}, \tilde{\mathbf{b}}, \tilde{\mathbf{e}})$ constitutes a CE of the original instance.
We summarize our truncation-and-extraction procedure as follows: 
\begin{enumerate}
    \item Encode~(\ref{program:EG Dual log Dual}) up to $L = k^2\log_2{(U D)} + \log_2{U} + \log_2{D} + 2\log_2{k} + 8$ decimal bits; 
    \item Solve the minimum-cost flow problem corresponding to~(\ref{program:EG Dual log Dual}); 
    \item Adjust the output prices by multiplicatively increase a subset of the prices in each component of the MBB/MPB graph, until the graph is connected; 
    \item Normalize the prices such that $\sum_{\ell \in [k]} p_\ell = 1$; 
    \item Use the original utility data as the input, solve a polynomial-size system of linear equations
    associated with the connected MBB/MPB graph (after the price adjustment); 
    \item The obtained solution is an equilibrium price, coupled with the optimal spending and earning allocation output by~(\ref{program:EG Dual log Dual}). 
\end{enumerate}



\begin{theorem}
    We can compute an exact CE of the labor market via solving a rounded LP and a truncation-and-extraction procedure. 
    The procedure can be done in polynomial time. 
\end{theorem}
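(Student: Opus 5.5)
The plan is to chain together the results already stated in this section, and to check that each step runs in time polynomial in $n,m,k,\log B,\log E,\log U,\log D$.

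\textbf{Step 1: solve the rounded LP.} We round every coefficient $\log v_{i\ell}$ and $\log c_{j\ell}$ of (\ref{program:EG Dual log Dual}) to $L$ bits. Since $L=O(k^2\log(UD)+\log k)$, the result is a rational minimum-cost flow instance whose encoding length is polynomial in the input. The rounding is done so that the rounded coefficients are exactly $\log\tilde v_{i\ell}$ and $\log\tilde c_{j\ell}$ for the values in \Cref{claim:mult-approx}; equivalently, the rounded program is the Labor Market LP of a genuine labor-market instance $(\tilde{\mathbf v},\tilde{\mathbf c})$.

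We solve this program with a polynomial-time min-cost flow or LP algorithm, which yields an optimal flow $(\tilde{\mathbf b},\tilde{\mathbf e})$ together with optimal duals. By \Cref{thm: LP-Equivalence}, applied to the rounded instance, exponentiating the duals gives equilibrium prices for that instance. The resulting equilibrium has $\tilde{\mathbf b},\tilde{\mathbf e}$ supported on MBB/MPB edges, with budgets and earnings cleared. Exponentiation can produce irrational numbers, so instead of computing the prices directly we record only the tight edge set (the MBB/MPB graph). This set is read off from complementary slackness of the rational duals.

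\textbf{Step 2: adjust and extract prices.} We apply \Cref{lem: graph-of-adjusted-prices} to obtain prices $\tilde{\mathbf p}$ for the rounded instance whose MBB/MPB graph $\tilde{\mathcal G}(\tilde{\mathbf p})$ is connected. This takes at most $k$ price-scaling phases, and each phase is computable because the relevant ratios are rational in $\tilde v,\tilde c$.

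The scaling only adds edges: raising the prices of one component proportionally keeps all edges inside that component. By the argument of \Cref{lem:price-update}, any edge that disappears carries zero flow. Hence $(\tilde{\mathbf b},\tilde{\mathbf e})$ is still supported on $\tilde{\mathcal E}(\tilde{\mathbf p})$.

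Next we invoke \Cref{lem:find-an-eq-price-in-polytime} to obtain, in polynomial time, a price vector $\mathbf p$ with $\mathcal E(\mathbf p)\supseteq\tilde{\mathcal E}(\tilde{\mathbf p})$ for the original data. Concretely, fix a spanning tree of the connected graph, impose the equalities $p_\ell/p_{\ell'}=v_{i\ell}/v_{i\ell'}$ and $c_{j\ell}/c_{j\ell'}$ along it, and add the normalization. This is a linear system of polynomial size.

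\textbf{Step 3: conclude equilibrium.} With prices $\mathbf p$, the flows $\tilde{\mathbf b},\tilde{\mathbf e}$ use only MBB/MPB edges of the original instance. They deplete all budgets, meet all earning requirements, and give zero surplus on every task, because they were feasible for the LP constraints, and those constraints do not depend on $v$ or $c$. \Cref{claim: spending/earning-to-allocation} then shows that $(\mathbf p,\tilde{\mathbf b}/\mathbf p,\tilde{\mathbf e}/\mathbf p)$ is a CE of the original instance. Summing the costs gives the running time: polynomial LP solving on $L$-bit data, at most $k$ scaling phases, and one linear solve.

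The main obstacle is hidden inside \Cref{lem:find-an-eq-price-in-polytime}. The prices determined by the spanning-tree equalities for the true data must satisfy the MBB/MPB inequalities for every edge, including the non-tree edges of $\tilde{\mathcal E}$ and the strict non-edges. This is exactly why $L$ is chosen with the $k^2\log(UD)$ term. Along a path of length at most $2k$, the multiplicative error $(1+2^{1-L})^{O(k)}$ from \Cref{claim:mult-approx} must stay below the minimum possible gap between distinct products of at most $O(k)$ utility and disutility ratios. For integer data with bit-length $O(k^2\log(UD))$, that gap is at least $(UD)^{-O(k)}$. I would verify this separation bound carefully, since every other step is an assembly of the stated lemmas.
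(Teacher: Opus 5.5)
Your proposal is correct and follows the paper's own argument. You round the LP to $L$ bits, apply \Cref{thm: LP-Equivalence} to the rounded instance, make the MBB/MPB graph connected via \Cref{lem: graph-of-adjusted-prices}, solve the path-based linear system with the original data via \Cref{lem:find-an-eq-price-in-polytime}, and finish with \Cref{claim: spending/earning-to-allocation}; your remark that only the tight-edge set should be carried forward, because exponentiated duals are irrational, is a useful detail the paper leaves implicit.

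One small slip is in your closing paragraph. The multiplicative cycle-product gap you describe is $(UD)^{-O(k)}$, which would already be handled by $L$ of order $k\log(UD)+\log k$, so it does not explain the $k^2\log(UD)$ term. That term comes from the paper's additive route instead: there $\mathbf p^*$ has a common denominator of size about $k(UD)^{k^2}$ (\Cref{claim: p-coordinate-bound}), whose inverse must dominate $|p^*_\ell-\tilde p_\ell|\le 24k\,2^{-L}$. Since a larger $L$ only helps, nothing in your argument breaks.
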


%% file: appendix.tex
\appendix

\section{Properties of Equilibria} 

We prove the first and second welfare theorems, fairness properties on equal budgets/earnings, as well as the existence of CE for larger classes of utility and disutility functions beyond linear case. 
Every linear labor market satisfying~\cref{assumption: nontrivial} belongs to this class of labor markets. 
Moreover, for linear labor markets, we show that the set of equilibrium prices form a convex cone (without zero), and also provides a set of primal and dual Eisenberg-Gale programs to capture the competitive equilibria by KKT conditions. 


\subsection{Welfare Theorems and Fairness}
\label{app:prop-welfare-fairness}
\begin{definition}
    A function $f: \mathbb{R}^k \to \mathbb{R}$ is \emph{strongly increasing} if 
    $f(z') > f(z)$ whenever $z' \geq z$ and $z'_\ell > z_\ell$ for some $\ell$. 
\end{definition}
\begin{definition}
    A function $f: \mathbb{R}^k \to \mathbb{R}$ is \emph{locally non-satiated} if given any $z$, there exists a coordinate $\ell$ satisfying 
    $f(z') > f(z)$ if $z' \geq z$ and $z'_\ell > z_\ell$. 
\end{definition}
Obviously, any strongly increasing function is locally non-satiated. 
Intuitively, a \firm will use up her budget when they maximize their utility under budget constraint, if their utility function is locally non-satiated. 
Symmetrically, a worker will earn exactly her earning expectation when she minimizes her disutility with earning requirement, if her disutility function is locally non-satiated. 
We will repeatedly invoke this fact in proving the subsequent theorems. 

\begin{definition}
    A function $f: S \rightarrow \mathbb{R}$ is \emph{quasiconcave} if $f(\lambda x + (1 - \lambda)y) \geq \min\{ f(x), f(y) \}$ for any $x, y \in S$ and $\lambda \in [0, 1]$ and \emph{strictly quasiconcave} if $f(\lambda x + (1 - \lambda)y) > \min\{ f(x), f(y) \}$ for any $x, y \in S$ and $\lambda \in (0, 1)$. 
    Symmetrically, 
    a function $f: S \rightarrow \mathbb{R}$ is \emph{quasiconvex} if $f(\lambda x + (1 - \lambda)y) \leq \max\{ f(x), f(y) \}$ for any $x, y \in S$ and $\lambda \in [0, 1]$ and \emph{strictly quasiconvex} if $f(\lambda x + (1 - \lambda)y) < \max\{ f(x), f(y) \}$ for any $x, y \in S$ and $\lambda \in (0, 1)$. 
\end{definition}
Any concave (resp. convex) function is strictly quasiconcave (resp. strictly quasiconvex). 

\begin{assumption}
    We assume that: 
    \begin{itemize}
        \item All utility functions are continuous, locally non-satiated, and strictly quasiconcave; 
        \item All disutility functions are continuous, strongly increasing, and quasiconvex. 
    \end{itemize}
    \label{assumption:general-utility}
\end{assumption}

We call an allocation $(\x, \y)$ a \emph{feasible} allocation if the labor supply can cover all of the labor demand, i.e., $\sum_i x_{i\ell} \leq \sum_j y_{j\ell} <\infty$ for all $\ell\in[k]$. 
That is, all tasks demanded by the \firms must be completed by some of the workers; however, there may be excess labor supply for certain tasks---some workers may receive payment without actually performing part of their allocated work. (This is the case for Amazon Flex.)

\begin{definition}
    A feasible allocation \( (\x, \y) \) is said to be a \emph{Pareto optimal allocation} if there is no other feasible allocation \( (\x', \y') \) such that:
    \begin{enumerate}[label=\roman*.]
        \item \( u_i(x'_i) \geq u_i(x_i) \) for all \( i \in [n] \) and \( d_j(y'_j) \leq d_j(y_j) \) for all \( j \in [m] \), and 
        \item at least one of the above inequalities is strict, i.e., \( u_i(x'_i) > u_i(x_i) \) for some \( i \in [n] \) or \( d_j(y'_j) < d_j(y_j) \) for some \( j \in [m] \).
    \end{enumerate} 
\end{definition}

Note that there is a trivial Pareto optimal (PO) allocation $(0, 0)$---any nonzero allocation results in strictly greater disutility for the workers. 

Next, we prove the first welfare theorem for CE in labor markets. 
\begin{theorem}[First welfare theorem]
    If the labor market satisfies~\cref{assumption:general-utility}, then 
    for any competitive equilibrium $(\p, \x, \y)$, $(\x, \y)$ is a nonzero Pareto optimal allocation. 
\end{theorem}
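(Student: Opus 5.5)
We argue by contradiction, following the standard revealed-preference proof of the first welfare theorem, with an extra step for the chores side. Let $(\p,\x,\y)$ be a CE. We first check that $(\x,\y)$ is nonzero and feasible. Feasibility follows from market clearing, since $\sum_i x_{i\ell}=\sum_j y_{j\ell}$ is finite for every $\ell$. For nonzeroness, local non-satiation of $u_i$ forces every user to spend her whole budget, so $\sum_\ell p_\ell x_{i\ell}=B_i>0$ and hence $\x\neq 0$. Because every $d_j$ is strongly increasing, each worker earns exactly $E_j$. Indeed, if $\sum_\ell p_\ell y_{j\ell}>E_j$, we could scale $\y_j$ down slightly; by quasiconvexity and strong monotonicity this strictly lowers $d_j$, and the earning constraint would still hold. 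The same argument with $\sum_\ell p_\ell y_{j\ell}=E_j$ will be needed later.

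Now suppose a feasible $(\x',\y')$ Pareto dominates $(\x,\y)$. We derive two kinds of price inequalities.

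\emph{Users.} If $u_i(\x'_i)>u_i(\x_i)$, then $\x'_i$ is unaffordable, since $\x_i$ is optimal; so $\p\cdot\x'_i>B_i$. If $u_i(\x'_i)\ge u_i(\x_i)$, we claim $\p\cdot\x'_i\ge B_i$. Suppose instead $\p\cdot\x'_i<B_i$. By local non-satiation, a small increase of $\x'_i$ in some coordinate is still affordable and strictly better than $\x'_i$, hence strictly better than $\x_i$. That contradicts optimality of $\x_i$.

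\emph{Workers.} If $d_j(\y'_j)<d_j(\y_j)$, then $\y'_j$ must violate the earning constraint, so $\p\cdot\y'_j<E_j$. If $d_j(\y'_j)\le d_j(\y_j)$, we claim $\p\cdot\y'_j\le E_j$. Suppose $\p\cdot\y'_j>E_j$. Then $\y'_j\neq 0$, and since $\p>0$ some positive coordinate of $\y'_j$ can be reduced slightly while keeping earnings at least $E_j$. Strong increase gives strictly lower disutility than $d_j(\y'_j)\le d_j(\y_j)$, contradicting optimality of $\y_j$.

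Summing these inequalities, and using that at least one of them is strict, gives
\[
\sum_i \p\cdot\x'_i>\sum_i B_i=\sum_j E_j>\sum_j \p\cdot\y'_j .
\]
Feasibility says $\sum_i x'_{i\ell}\le\sum_j y'_{j\ell}$ for every $\ell$. Since $\p>0$ (by the earlier proposition, or directly: a zero price would make some user's demand unbounded under non-satiation), this yields $\sum_i\p\cdot\x'_i\le\sum_j\p\cdot\y'_j$, a contradiction.

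The main obstacle is justifying the weak-inequality cases, which depend on the exact form of local non-satiation and strong monotonicity assumed. For users, local non-satiation is stated only for moves in one particular coordinate, and that suffices. For workers, we need a downward perturbation of a strongly increasing function that keeps the earning constraint, which is where $\p>0$ and $\y'_j\neq0$ enter. Positivity of prices under \cref{assumption:general-utility} also has to be checked: if some $p_\ell=0$, an increase in coordinate $\ell$ is free, and a user non-satiated in that direction would have no optimal bundle. We will handle this with a short argument, or by restricting attention to tasks with positive price.
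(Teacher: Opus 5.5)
Your proposal is correct and follows the paper's proof. Optimality plus local non-satiation gives $\p\cdot\x'_i\ge B_i$, optimality plus strong monotonicity gives $\p\cdot\y'_j\le E_j$, and one of these is strict; summing against feasibility with $\sum_i B_i=\sum_j E_j$ then gives the contradiction. Your nonzeroness check via $\p\cdot\x_i=B_i>0$ is an equally valid substitute for the paper's check via the workers' earning requirement.

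The ``main obstacle'' you flag is not one. $\p\ge 0$ already suffices for the final summation. It also suffices for the worker step: if $\p\cdot\y'_j>E_j$ strictly, shrinking any positive coordinate of $\y'_j$ slightly keeps the earning constraint. So drop the appeal to $\p>0$. Its sketched justification is incomplete anyway, because local non-satiation only supplies \emph{some} improving coordinate, not necessarily the zero-priced one. Finally, in your displayed chain only one of the two inequalities is strict in general; that still gives the contradiction.
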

\begin{proof}
    We prove this by contradiction. 
    Suppose otherwise, i.e., there is a feasible allocation $(\x', \y')$ such that $(i)$ $u_i(x'_i) \geq u_i(x_i)$ for all $i \in [n]$, $(ii)$ $d_j(y'_j) \leq d_j(y_j)$ for all $j \in [m]$, and $(iii)$ at least one unit improves their utility/disutility over $(\x, \y)$. 
    First, note that $(i)$ imples that $\inp{\p}{x'_i} \geq B_i$ for all $i \in [n]$. 
    This can be proven as follows: Suppose otherwise, i.e., $\inp{\p}{x'_i} < B_i$ for some \firm $i$. Then, given $\p$, the \firm $i$ can strictly improve their utility by saturating their budget, as the utility function is locally non-satiated by~\cref{assumption:general-utility}. 
    This contradicts $x_i \in \mathcal{D}^*_i(\p)$. 
    Symmetrically, $(ii)$ implies that $\inp{\p}{y'_j} \leq E_j$ for all $j \in [m]$. 
    
    Then, we consider two cases. 
    First, if there is a \firm $\hat{i}$ improving their utility, i.e., $u_{\hat{i}}(x'_{\hat{i}}) > u_{\hat{i}}(x_{\hat{i}})$, then we have $\inp{\p}{x'_{\hat{i}}} > B_i$ because $x^*_{\hat{i}}$ maximizes $u_{\hat{i}}(\cdot)$ over $\{z \in \mathbb{R}^k_+ \mid \inp{p^*}{z} \leq B_{\hat{i}} \}$. This yields that 
    \begin{equation}
        \sum\nolimits_i \inp{\p}{x'_i} = \sum\nolimits_{i \neq \hat{i}} \inp{\p}{x'_i} + \inp{\p}{x'_{\hat{i}}} > \sum\nolimits_i B_i = \sum\nolimits_j E_j \geq \sum\nolimits_j \inp{\p}{y'_j}. 
        \label{eq:first-welfare-theorem-proof-contradiction}
    \end{equation}
    Equivalently, $\sum\nolimits_j \inp{\p}{y'_j} - \sum\nolimits_i \inp{\p}{x'_i} < 0$, which contradicts $\p \geq 0$ and $\sum_j y'_{j\ell} - \sum_i x'_{i\ell} \geq 0$ (recall that $(\x', \y')$ a feasible allocation). 
    Second, if there is a worker $\hat{j}$ improving their disutility, i.e., $d_{\hat{j}}(y'_{\hat{j}}) < d_{\hat{j}}(y_{\hat{j}})$, then we have $\inp{\p}{y'_{\hat{j}}} < E_j$ because $y_{\hat{i}}$ minimizes $d_{\hat{j}}(\cdot)$ over $\{y \in \mathbb{R}^k_+ \mid \inp{\p}{y} \geq E_{\hat{j}} \}$. This leads to a contradiction, similar to~\cref{eq:first-welfare-theorem-proof-contradiction}. 
    Therefore, $(\x, \y)$ is a Pareto optimal allocation. 
    The allocation 
    $(\x, \y)$ is clearly nonzero, because $\sum_j y^*_j = 0$ implies $\sum_j \inp{\p}{y^*_j} = 0 < \sum_j E_j$ which contradicts the worker optimality. 
\end{proof}
The first welfare theorem provides a powerful statement, as it shows that the outcome of the competitive economy yields an efficient allocation from a social perspective. 
This indeed confirms that CE is a desirable outcome in the labor market.

The second welfare theorem states that 
any Pareto optimal allocation is supported a competitive equilibrium, 
once we correctly set money endowments. 
Clearly, the zero PO allocation $(\x, \y) = (0, 0)$ can be supported by a CE of the labor market with $\sum_i B_i = \sum_j E_j = 0$. 
We next prove the non-trivial case, by showing that any nonzero Pareto-optimal allocation can be supported by a CE of the labor market, up to a redistribution of money endowments. 

\begin{theorem}[Second welfare theorem]
    If the labor market satisfies~\cref{assumption:general-utility}, 
    then for every nonzero Pareto optimal allocation $(\hat{\mathbf{x}}, \hat{\mathbf{y}})$, there exists a distribution of budgets and incomes $(\hat{\mathbf{B}}, \hat{\mathbf{E}})$ such that \textnormal{(1)} $\sum_i \hat{B}_i = \sum_j \hat{E}_j = \sum_i B_i$ and \textnormal{(2)} 
    there exists a price vector $\hat{\mathbf{p}} \in \mathbb{R}^k_+$ such that $(\hat{\mathbf{p}}, \hat{\mathbf{x}}, \hat{\mathbf{y}})$ constitutes a CE. 
\end{theorem}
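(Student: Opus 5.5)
The plan is to follow the classical separating-hyperplane proof of the second welfare theorem, adapted to the two-sided structure. Fix a nonzero Pareto optimal allocation $(\hat{\mathbf{x}},\hat{\mathbf{y}})$ and define the aggregate ``better-off'' set
\[
\mathcal{C} = \Big\{ \textstyle\sum_i x_i - \sum_j y_j \;:\; u_i(x_i) \geq u_i(\hat x_i)\ \forall i,\ d_j(y_j) \leq d_j(\hat y_j)\ \forall j,\ \text{at least one strict} \Big\}.
\]
By strict quasiconcavity of each $u_i$ and quasiconvexity of each $d_j$, the individual upper contour sets of users and lower contour sets of workers are convex. The first step is to show that $\mathcal{C}$, or a convex set sandwiched around it (for example the Minkowski sum of weakly preferred sets, each shifted by a tiny strict improvement), is convex and disjoint from $-\mathbb{R}^k_+$. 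The disjointness comes directly from Pareto optimality: a point of $\mathcal{C}$ in $-\mathbb{R}^k_+$ is a feasible allocation $(\x,\y)$, with $\sum_i x_{i\ell}\le\sum_j y_{j\ell}$, that Pareto dominates $(\hat{\mathbf{x}},\hat{\mathbf{y}})$.

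The second step applies the separating hyperplane theorem to get $\hat{\p}\neq 0$ with $\inp{\hat\p}{z}\ge 0$ for all $z\in\mathcal{C}$ and $\inp{\hat\p}{w}\le 0$ on $-\mathbb{R}^k_+$. The latter forces $\hat\p\ge 0$. Local non-satiation of the users, and the worker's option to reduce an allocated task (strongly increasing $d_j$), allow one to approach $\sum_i\hat x_i-\sum_j\hat y_j$ by points of $\mathcal{C}$. Combined with feasibility this gives $\inp{\hat\p}{\sum_i\hat x_i-\sum_j\hat y_j}=0$. From this I would also argue that any excess supply at $(\hat{\mathbf{x}},\hat{\mathbf{y}})$ carries zero price; alternatively, a PO allocation has no excess supply, since strongly increasing $d_j$ would let a worker shed it. Either way market clearing holds.

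Third, set $\hat B_i=\inp{\hat\p}{\hat x_i}$ and $\hat E_j=\inp{\hat\p}{\hat y_j}$. Then $\sum_i\hat B_i=\sum_j\hat E_j$, and the common value can be normalized to $\sum_i B_i$ by scaling $\hat\p$, provided it is positive. Positivity holds because the allocation is nonzero and clears, so some worker has $\hat y_j\neq 0$; strict monotonicity of $d_j$ then prevents all used tasks from having zero price, otherwise that worker could drop them. Next, separating agent by agent (varying one agent's bundle while holding the others fixed inside $\mathcal{C}$) gives quasi-optimality: $u_i(x_i)>u_i(\hat x_i)\Rightarrow\inp{\hat\p}{x_i}\ge\hat B_i$, and $d_j(y_j)<d_j(\hat y_j)\Rightarrow\inp{\hat\p}{y_j}\le\hat E_j$.

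The main obstacle is the standard upgrade from quasi-optimality (weak inequalities) to genuine demand/supply optimality, which requires a cheaper-point argument. For a user with $\hat B_i>0$, I would take any $x_i$ with $u_i(x_i)>u_i(\hat x_i)$ and $\inp{\hat\p}{x_i}=\hat B_i$, and mix it with the zero bundle, which costs $0<\hat B_i$. Continuity of $u_i$ then yields a strictly cheaper bundle that is still strictly better, a contradiction. Symmetrically, for a worker with $\hat E_j>0$, scaling $y_j$ up slightly gives a bundle that earns strictly more and, by continuity, is still strictly less painful. Agents with zero endowment are degenerate: a user with $\hat B_i=0$ needs its demand to be handled via zero prices, and this is where I would spend the most care. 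It may require showing that $\hat\p>0$ on tasks valued by anyone, using strong monotonicity of the workers' disutility together with Pareto optimality, so that zero-budget users indeed receive only bundles consistent with their demand.
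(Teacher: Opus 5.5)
Your architecture is the paper's: separate a convex set of Pareto-improving net trades from the feasible region, set $\hat B_i=\inp{\hat{\mathbf{p}}}{\hat x_i}$ and $\hat E_j=\inp{\hat{\mathbf{p}}}{\hat y_j}$, and upgrade quasi-optimality with a cheaper-point argument. In several places you are more careful than the paper: you derive market clearing from Pareto optimality plus strong monotonicity of the $d_j$, you treat the worker side by scaling up, and you never presuppose that an optimal bundle $x_i'$ exists. The proof is still incomplete, though. You leave the zero-endowment case open, and the mechanism you propose for it cannot work.

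Worker optimality only bounds a task's price from above, since a worker with $\hat E_j>0$ must not find that task too attractive. Strong monotonicity of the $d_j$, combined with the separation, shows only that no worker \emph{supplies} a zero-priced task. For a task that nobody supplies, nothing on the worker side rules out $\hat p_\ell=0$, and then a zero-budget user who values $\ell$ has no optimal bundle. The lower bound has to come from the users.

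The missing observation is that \cref{assumption:general-utility} makes every $u_i$ strictly increasing in every coordinate. Local non-satiation (in the paper's sense) at $z$, together with continuity, gives $u_i(z+t\mathbf{e}_\ell)\ge u_i(z)$ for all $\ell$ and $t\ge 0$. Strict quasiconcavity applied to $z$ and $z+2t\mathbf{e}_\ell$ then makes this inequality strict for $t>0$.

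Consequently every separating $\hat{\mathbf{p}}$ is strictly positive. Suppose $\hat p_\ell=0$, and pick a user $i'$ with $\hat B_{i'}>0$; one exists because $\sum_i\hat B_i=\sum_j\hat E_j>0$. For small $\epsilon>0$, the bundle $(1-\epsilon)\hat x_{i'}+\mathbf{e}_\ell$ is strictly better than $\hat x_{i'}$ by continuity. Yet it costs $(1-\epsilon)\hat B_{i'}<\hat B_{i'}$, which contradicts your quasi-optimality inequality. Once $\hat{\mathbf{p}}>0$, both degenerate cases are trivial:
\begin{itemize}
\item a user with $\hat B_i=0$ has $\hat x_i=0$ and budget set $\{0\}$;
\item a worker with $\hat E_j=0$ has $\hat y_j=0$, the unique minimizer of the strongly increasing $d_j$.
\end{itemize}

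Relatedly, your argument that $\sum_j\hat E_j>0$ should not rest on a worker simply dropping zero-priced tasks, because that is not a Pareto improvement. Instead, have the worker drop a sliver of a zero-priced task and pick up a sliver of a positively priced one. This keeps $d_j$ strictly below $d_j(\hat y_j)$ but raises earnings, which contradicts quasi-optimality.

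The paper's own proof has the same hole. Its cheaper-point step removes a bit of a positively priced task from $x_i'$, which is impossible when $\hat B_i=0$. You were right to flag this case, but your write-up does not yet close it.
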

\begin{proof} 
    We prove this lemma by using the Hyperplane Separation Theorem. 
    Given any PO allocation $(\hat{\mathbf{x}}, \hat{\mathbf{y}})$, we first construct the \firmS (strictly) better-than allocation set: 
    \begin{equation*}
        X_i = \{ x_i \geq 0: u_i(x_i) \ge u_i(\hat{x}_i), \forall i\in[n];\ u_f(x_f) > u_f(\hat{x}_f), \exists f\in[n]  \} \subset \mathbb{R}^k_+ \text{ and } X = \sum\nolimits_i X_i, 
    \end{equation*}
    where the summation (here and throughout) represents Minkowski addition. 
    Because all utility functions are quasiconcave, 
    $X$ is an nonempty open convex set.  
    Similarly, we define 
    \begin{equation*}
        \bar{Y}_j = \{ y_j \geq 0: d_j(y_j) \leq d_j(\hat{y}_j) \} \subset \mathbb{R}^k_+ \text{ and } \bar{Y} = \sum\nolimits_j \bar{Y}_j. 
    \end{equation*}
    Here, $\bar{Y}$ is a nonempty compact convex set, because all disutility functions are quasiconvex and strongly increasing. 
    Since $(\hat{\mathbf{x}}, \hat{\mathbf{y}})$ is a PO allocation, $X \cap \bar{Y} = \emptyset$. 
    By the Hyperplane Separation Theorem, there exists a $\bar{\mathbf{p}} \in \mathbb{R}^k \setminus \{ 0 \}$ and an $\bar{c} \in \mathbb{R}$ such that 
    \begin{equation}
        \inp*{\bar{\mathbf{p}}}{x} \geq \bar{c} \hspace{10pt} \mbox{ for all } x \in X 
        \hspace{30pt} \textnormal{ and } \hspace{30pt}
        \inp*{\bar{\mathbf{p}}}{y} \leq \bar{c} \hspace{10pt} \mbox{ for all } y \in \bar{Y}. 
        \label{separating-hyperplane-inequalities}
    \end{equation}
   
    Recall that $x, y \in \mathbb{R}^k_+$ and $\hat{\mathbf{x}} \in \mathbb{R}^{nk}_+, \hat{\mathbf{y}} \in \mathbb{R}^{mk}_+$. 
    Since $\mathbf{0}_k \in Y$, $\bar{c} \geq 0$. 
    For each $\ell \in [k]$, since $\hat{x} + t \mathbf{e}_\ell \in X$ for all $t > 0$, where $\mathbf{e}_\ell$ is the unit vector with $1$ in the $\ell$th coordinate and $0$ otherwise,
    we have $\bar{p}_\ell \geq 0$ because otherwise $\inp*{\bar{\mathbf{p}}}{\hat{x} + t \mathbf{e}_\ell} < 0$ as $t \rightarrow \infty$. 
    Moreover, letting $\bar{\ell} \in \textnormal{argmax}_\ell\  \bar{p}_\ell > 0$, 
    there is a $y' = \varepsilon \mathbf{e}_{\bar{\ell}} \in \bar{Y}$ for some $\varepsilon > 0$ (we can only distribute this task among those workers with $d_j(\hat{y}) > d_j(\mathbf{0}_k)$), which ensures $\bar{c} > 0$. 
    Also, as scaling $(\bar{\mathbf{p}}, \bar{c})$ does not change the separating hyperplane, we can scale $(\bar{\mathbf{p}},\bar{c})$ to  $(\hat{\mathbf{p}},\hat{c})$ such that $\hat{c} = \sum_i B_i = \sum_j E_j$ and $\hat{\mathbf{p}} \in \mathbb{R}^k_+ \setminus \{ 0 \}$. 

    Because $(\hat{\mathbf{x}}, \hat{\mathbf{y}})$ is a feasible allocation, we have $\sum_i \hat{x}_{i\ell} \leq \sum_j \hat{y}_{j\ell}$ for all $\ell \in [k]$. 
    By~\cref{separating-hyperplane-inequalities}, it holds that 
    \begin{equation*}
        \sum\nolimits_i B_i \leq \inp{\hat{\mathbf{p}}}{\sum\nolimits_i \hat{x}_i} \leq \inp{\hat{\mathbf{p}}}{\sum\nolimits_j \hat{y}_j} \leq \sum\nolimits_j E_j = \sum\nolimits_i B_i. 
    \end{equation*}
    Therefore, $\sum_i \inp{\hat{\mathbf{p}}}{\hat{x}_i} = \sum_i B_i$ and $\sum_j \inp{\hat{\mathbf{p}}}{\hat{y}_j} = \sum_j E_j$. 
    Consequently, there is a redistribution of money endowments $\hat{B}_i := \inp{\hat{\mathbf{p}}}{\hat{x}_i}\;\forall\,i \in [n]$ and $\hat{E}_j := \inp{\hat{\mathbf{p}}}{\hat{y}_j}\;\forall\,j \in [m]$ such that $\sum_i \hat{B}_i = \sum_i B_i$ and $\sum_j \hat{E}_j = \sum_j E_j$. 

    Next, we show that $(\hat{\mathbf{p}}, \hat{\mathbf{x}}, \hat{\mathbf{y}})$ is a CE with the money endowments $(\hat{\mathbf{B}}, \hat{\mathbf{E}})$. 
    To prove the \firm and worker optimality, we need the following fact. 

    To the contrary, suppose $(\hat{\mathbf{x}}, \hat{\mathbf{y}})$ does not form an optimal bundle of \firms and workers at prices $\hat{p}$ with respect to endowments $(\hat{\mathbf{B}},\hat{\mathbf{E}})$. Instead, let $(\mathbf{x}',\mathbf{y}')$ be an optimal bundle. We have $u_i(x'_i) \ge u_i(\hat{x}_i), \forall i\in [n]$, as each \firm $i$ can afford $\hat{x}_i$ at prices $\hat{\mathbf{p}}$. Similarly, $d_j(y'_j) \le d_j(\hat{y}_j), \forall j \in [m]$. 
    Since, $(\hat{\mathbf{x}},\hat{\mathbf{y}})$ isn't an optimal bundle one of the \firm or worker is strictly better off at $(\mathbf{x}', \mathbf{y}')$. Wlog let this be \firm $i$, that is, $u_i(x'_i)>u_i(\hat{x}_i)$. 
    
    Since $x'_i$ is an optimal bundle of \firm $i$ at prices $\hat{\mathbf{p}}$, either $\langle \hat{\mathbf{p}}, x'_i\rangle < \hat{B}_i$, or there must a task $\ell\in[k]$ such that $p_\ell>0$ and $i$ is non-satiated for $\ell$ at $x'$. In the former case, construct a new allocation $\tilde{\mathbf{x}}$ such that $\tilde{x}_i=x''_i$, and $\tilde{x}_f=\hat{x}_f$ for all $f\neq i$. In the latter case, $i$ can consume $\ell$ by a tiny bit less while keeping utility higher than $u_i(\hat{x}_i)$. That is, $\exists x''_i$ such that $u_i(x'_i) > u_i(x''_i) > u_i(\hat{x}_i)$ and $\langle \hat{\mathbf{p}},x''_i\rangle < \hat{B}_i$. Now construct allocation $\tilde{x}$ where $\tilde{x}_i=x''_i$, and $\tilde{x}_f=\hat{x}_f$ for all $f\neq i$. 
    
    In either case, $\tilde{\mathbf{x}}\in X$ while $\langle \hat{\mathbf{p}},\tilde{\mathbf{x}}\rangle =\sum_f \langle \hat{\mathbf{p}},\tilde{x}_f\rangle < \hat{B}_i +\sum_{f\neq i} \hat{B}_f = \hat{c}$, contradicting \eqref{separating-hyperplane-inequalities}.
\end{proof}

\paragraph{Fairness properties of CEEEI.}
We show that CEEEIs in our market model satisfies the fairness properties of envy-freeness and proportionality. \textit{Envy-freeness} requires that no agent prefers someone else's allocation to her own, and is implied by the fact that each \firm and worker gets an optimal bundle. Under concavity of utility functions (resp. convexity of disutility functions), this further implies \textit{proportionality}, which requires that each \firm gets at least $1/n$ of her total utility of the gross bundle of all available tasks, and each worker suffers at most $1/m$ of his disutility utility of the gross bundle of all requested tasks. 

\begin{theorem}
    \label{thm: fair}
    Suppose $B_1 = \cdots = B_n$ and $E_1 = \cdots= E_m$. Then for any utility function, any competitive equilibrium allocation $(\mathbf{x}, \mathbf{y})$ satisfies \em{envy-freeness}, i.e.,
    \begin{equation*}
        u_i(x_i) =\arg \max_{i^\prime\in [n] } u_i(x_{i^\prime})\;\text{ for all $i\in[n]$} \;\; \text{and} \;\; d_j(y_j) = \arg \min _{j^\prime \in [m]} d_j(y_{j^\prime})\;\text{ for all $j\in[m]$.}
    \end{equation*}
    Further, when each $u_i(\cdot)$ is concave and each $d_j(\cdot)$ is convex, and $u_i(0) = d_j(0) = 0$, any competitive equilibrium allocation $(\mathbf{x}, \mathbf{y})$ satisfies \em{proportionality}, i.e., 
    \begin{equation*}
        u_i(x_i) \geq \frac{1}{n} u_i\left(\sumj y_j\right) \;\text{ for all $i\in[n]$} \;\; \textnormal{and} \;\;d_j(y_j) \leq \frac{1}{m} d_j\left(\sumi x_i\right) \;\text{ for all $j\in[m]$}.
    \end{equation*}
\end{theorem}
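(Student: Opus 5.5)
The plan is to derive both parts directly from agent optimality at the equilibrium prices $\p$, using only the definition of a competitive equilibrium and the earlier fact that budgets and earning requirements are met exactly. I read the statement's ``$\arg\max$'' as $\max$ (and ``$\arg\min$'' as $\min$), which is the intended meaning.

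\emph{Envy-freeness.} Fix a user $i$ and any other user $i'$. Because $B_{i'} = B_i$ and $\langle \p, x_{i'}\rangle \le B_{i'}$, the bundle $x_{i'}$ is affordable for user $i$ at $\p$. Since $x_i \in \mathcal{D}_i^*(\p)$, this gives $u_i(x_i) \ge u_i(x_{i'})$. Equality holds for $i'=i$, so $u_i(x_i) = \max_{i'} u_i(x_{i'})$.

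For a worker $j$ and any other worker $j'$, I will use that $\langle \p, y_{j'}\rangle \ge E_{j'} = E_j$. So $y_{j'}$ is feasible in worker $j$'s problem, and $y_j \in \mathcal{S}_j^*(\p)$ yields $d_j(y_j) \le d_j(y_{j'})$. This argument needs no assumption on the utility functions, which matches the statement.

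\emph{Proportionality for users.} Let $X = \sum_j y_j$. By market clearing, $X = \sum_{i'} x_{i'}$. I will show that $x_i$ beats the bundle $X/n$.
\begin{itemize}
\item \emph{Affordability.} $\langle \p, X/n\rangle = \frac1n \sum_{i'} \langle \p, x_{i'}\rangle \le \frac1n \sum_{i'} B_{i'} = B_i$, so $X/n$ is affordable for $i$. Hence $u_i(x_i) \ge u_i(X/n)$.
\item \emph{Concavity step.} With $u_i(0)=0$, concavity gives $u_i(X/n) = u_i\bigl(\tfrac1n X + (1-\tfrac1n) 0\bigr) \ge \tfrac1n u_i(X) + (1-\tfrac1n) u_i(0) = \tfrac1n u_i(X)$.
\end{itemize}
Combining the two bounds gives the claim.

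\emph{Proportionality for workers.} Set $Y = \sum_i x_i = \sum_{j'} y_{j'}$, again by market clearing. Then $\langle \p, Y/m\rangle = \frac1m \sum_{j'} \langle \p, y_{j'}\rangle \ge \frac1m \sum_{j'} E_{j'} = E_j$, so $Y/m$ is feasible for worker $j$. Hence $d_j(y_j) \le d_j(Y/m)$. Convexity with $d_j(0)=0$ gives $d_j(Y/m) \le \tfrac1m d_j(Y)$.

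\emph{Main obstacle.} The one point needing care is the direction of the affordability inequalities, which rests on $\p \ge 0$. We must use the aggregate budget identity $\sum_{i'} \langle \p, x_{i'}\rangle \le nB_i$ and the aggregate earning bound $\sum_{j'} \langle \p, y_{j'}\rangle \ge mE_j$. These follow from each agent's feasibility, so we do not even need the exact-clearing proposition. Everything else is immediate.
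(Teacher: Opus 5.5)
Your proof is correct and follows the paper's argument. Envy-freeness comes from the other agent's bundle being feasible under equal endowments. Proportionality comes from comparing $x_i$ (resp.\ $y_j$) with the averaged bundle $\frac{1}{n}\sum_{i'}x_{i'}$ (resp.\ $\frac{1}{m}\sum_{i}x_i$), then applying concavity (resp.\ convexity) with $u_i(0)=d_j(0)=0$ and market clearing. Using the feasibility inequalities $\langle \p, x_{i'}\rangle \le B_{i'}$ and $\langle \p, y_{j'}\rangle \ge E_{j'}$ instead of the paper's ``costs exactly $B_i$'' is a harmless, slightly more general refinement. That step uses only linearity of $\langle \p,\cdot\rangle$, not $\p\ge 0$.
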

\begin{proof}
    With equal budgets, envy-freeness follows directly from the fact that $x_i\in \mathcal{D}_i^*(\p)$ for each \firm $i$ and $y_j \in \mathcal{S}_j^*(\p)$ for each worker $j$. For proportionality of the \firms, we have
    \begin{equation*}
        u_i(x_i) \geq  u_i\left(\frac{1}{n} \sum_{i^\prime=1}^n x_{i^\prime}\right) \geq \frac{1}{n} u_i\left(\sum_{i^\prime=1}^n x_{i^\prime}\right) =\frac{1}{n} u_i\left(\sum_{j^\prime=1}^m y_{j^\prime}\right),
    \end{equation*}
    where the first inequality is by \firm $i$'s individual optimality (notice that the bundle $(1/n) \sum_{i^\prime=1}^n x_{i^\prime}$ will cost \firm $i$ exactly $B_i$ under equal incomes), the second inequality is by concavity, and the final equality is by market clearance. The proportionality of workers can be shown similarly. 
\end{proof}



\subsection{Existence of Competitive Equilibrium}\label{app:prop-existence}

We prove the existence of CE for a fairly large class of utility and disutility functions. 
The proof is based on Kakutani's fixed point theorem~\citep{kakutani1941generalization}. 
To prove the existence, we need an additional nonsatiation-style assumptions. 
Intuitively, we need to ensure that given any allocation $x$, 
for each task $\ell$, there is a \firm whose utility increases once allocated more task $\ell$. 
\begin{assumption}
    For any $\ell \in [k]$, there is at least one $i \in [n]$ such that $u_i(x'_i) > u_i(x_i)$ whenever $x'_i \geq x_i$ and $x'_{i\ell} > x_{i\ell}$. 
    \label{assumption:at-least-one-firm-demand}
\end{assumption}

We remark that every linear labor market satisfying~\cref{assumption: nontrivial} satisfies~\cref{assumption:general-utility,assumption:at-least-one-firm-demand}.

\begin{theorem}[Existence of CE]
    \label{thm: existence}
    If the labor market satisfies~\cref{assumption:general-utility,assumption:at-least-one-firm-demand}, then there exists a competitive equilibrium of the labor market. 
\end{theorem}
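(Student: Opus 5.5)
We prove \Cref{thm: existence} with Kakutani's fixed point theorem applied to a compactified economy, in the style of the Arrow--Debreu existence proof. First, normalize prices to the simplex $\Delta_k = \{\p \ge 0 : \sum_\ell p_\ell = 1\}$; this is harmless because utilities and disutilities do not depend on the price level. Then truncate consumption and production sets to a large compact cube $[0,M]^k$. Choose $M$ strictly larger than any feasible aggregate quantity relevant at candidate equilibria. Since only finitely many agents are involved and prices are bounded away from the degenerate region (see below), a fixed bound will do.

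On this truncated domain define the correspondences. Each user's demand $\mathcal{D}_i^M(\p)$ maximizes $u_i$ over $\{x\in[0,M]^k : \inp{\p}{x}\le B_i\}$. Each worker's supply $\mathcal{S}_j^M(\p)$ minimizes $d_j$ over $\{y\in[0,M]^k : \inp{\p}{y}\ge E_j\}$. For the price player, take $\mu(z)=\argmax_{\p\in\Delta_k}\inp{\p}{z}$, where $z=\sum_i x_i-\sum_j y_j$ is excess demand.

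We then check the hypotheses of Kakutani. Strict quasiconcavity of $u_i$ makes $\mathcal{D}_i^M$ single-valued, hence convex-valued. Quasiconvexity of $d_j$ makes $\mathcal{S}_j^M$ convex-valued, and $\mu$ is convex-valued. Upper hemicontinuity follows from Berge's maximum theorem once the budget and earning correspondences are continuous.

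The main obstacle is the earning set $\{y\in[0,M]^k:\inp{\p}{y}\ge E_j\}$. It is empty, or fails lower hemicontinuity, when $\p$ puts weight only where $M\cdot$(price) is too small. In particular, at $\p$ near a vertex this set degenerates. I would handle this by working on a shrunken simplex $\Delta_k^\eta=\{\p\in\Delta_k: p_\ell\ge\eta\}$ with $M\eta > \max_j E_j$. On $\Delta_k^\eta$ the earning set has nonempty interior (e.g. $y=M\mathbf 1$ is strictly feasible), so it is continuous. The budget set also has a Slater point $0$, so Berge applies.

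Apply Kakutani to $(\p,\x,\y)\mapsto \mu^\eta(z)\times\prod_i\mathcal{D}_i^M\times\prod_j\mathcal{S}_j^M$ to obtain a fixed point. Now use Walras' law. Local non-satiation gives $\inp{\p}{x_i}=B_i$, at least if the truncation is not binding. Strong monotonicity of $d_j$ gives $\inp{\p}{y_j}=E_j$. Together with $\sum_iB_i=\sum_jE_j$ this yields $\inp{\p}{z}=0$, so $\inp{\p'}{z}\le0$ for all $\p'\in\Delta_k^\eta$.

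Next remove the artificial bounds. I would take $\eta\to0$ and extract convergent subsequences, or else argue directly that prices stay bounded away from zero. Suppose some $p_\ell$ were small. By \Cref{assumption:at-least-one-firm-demand}, the non-satiated user $i$ for task $\ell$ would demand a large amount of $\ell$, up to the cap $M$, since $\ell$ is cheap. Meanwhile supply of $\ell$ by workers stays bounded, because each worker can get its earnings more cheaply elsewhere given strong monotonicity. So $z_\ell$ would be large and positive, contradicting $\inp{\p'}{z}\le 0$ with $\p'$ concentrated on $\ell$. This yields a uniform lower bound on prices and on $M$-independent quantities.

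From there, $z\le0$ holds coordinatewise. Combined with $\inp{\p}{z}=0$ and $\p>0$, this gives $z=0$, which is market clearing. Finally, optimality in the truncated problem lifts to the untruncated one by the standard convexity argument. Suppose a better bundle existed outside the cube. Its convex combination with the fixed-point bundle lies inside the cube, is still affordable (respectively, still meets the earning requirement), and by quasiconcavity (respectively quasiconvexity) is at least as good, contradicting optimality there. The delicate step throughout is the boundary behavior of the earning correspondence and the positivity of prices, which is exactly where \Cref{assumption:at-least-one-firm-demand} is used.
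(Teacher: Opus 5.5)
Your architecture (Kakutani on a truncated economy, a price player maximizing $\langle \p, z\rangle$ over the simplex, positivity of prices from \Cref{assumption:at-least-one-firm-demand}, then removing the truncation) matches the paper's. But you skip the one step that makes this model nonstandard. Supply is endogenous, so the set of feasible allocations is unbounded. Hence ``choose $M$ larger than any feasible aggregate quantity'' has no content. You never obtain a bound on equilibrium quantities that is independent of $M$, and a lower bound on prices that itself depends on $M$ is circular.

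Without such a bound you cannot show that the users' cap is slack at the fixed point. You then get neither budget exhaustion (so neither $\langle \p, z\rangle = 0$ nor $z=0$) nor the lift to untruncated demand. The paper supplies the bound as follows. For $\p$ in the simplex some $p_{\hat\ell}\ge 1/k$, so $kE_j\mathbf{e}_{\hat\ell}$ meets worker $j$'s earning requirement. Therefore every optimal $y_j$ lies in $\{y\ge 0: d_j(y)\le \max_\ell d_j(kE_j\mathbf{e}_\ell)\}$. This set is bounded: a nonzero recession direction $r\ge 0$, applied at a point attaining the level, would contradict strong monotonicity. This gives a bound $\bar S$ independent of $\p$ and of the cap. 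With the users' cap above $m\bar S$, $z\le 0$ forces every $x_{i\ell}$ strictly below the cap, and the rest goes through. The paper in fact never truncates supply at all.

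Two further steps fail as written.

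First, your ``main obstacle'' does not exist. For every $\p\in\Delta_k$ we have $\langle \p, M\mathbf{1}\rangle = M$, so the earning set has a Slater point as soon as $M>\max_j E_j$. The shrunken simplex is therefore unnecessary, and it also hurts, because the assumption only has force at a price exactly zero. Your direct argument, that a small $p_\ell$ makes the non-satiated user buy up to $M$, is false at positive prices. She trades off against other tasks, and at $p_\ell=\eta$ the $M$ units cost $M\eta>\max_j E_j$, possibly more than $B_i$. Instead, work on the full simplex as the paper does, choosing $M>\sum_i B_i$. If $p_\ell=0$, the non-satiated user takes the whole cap of $\ell$ and no worker supplies it. Since $\p$ maximizes $\langle \p', z\rangle$, every positively priced task then has $z_{\ell'}=\max_{\ell''} z_{\ell''}\ge M$, so users would spend more than $\sum_i B_i$, a contradiction. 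The alternative is to carry out the $\eta\to 0$ limit properly using closed graphs.

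Second, in the lifting step a convex combination is only \emph{weakly} better. For users you must invoke strict quasiconcavity to get a strict improvement. For workers, who are merely quasiconvex, the argument yields no contradiction at all. Use the bounded sublevel set above instead: it shows all untruncated minimizers lie in $[0,\bar S]^k$, so truncated and untruncated supply coincide once the cap exceeds $\bar S$.
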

\begin{proof}
    First, we construct a \emph{nonempty compact and convex} (NCC) set. 
    To this end, we consider the price vectors in the unit simplex, i.e., $\mathbf{p} \in P := \{ \mathbf{p} \in \mathbb{R}^k_+ \mid \sum_{\ell=1}^k p_\ell = 1 \}$. 
    For any $\mathbf{p} \in P$, $\mathcal{S}^*_j(\mathbf{p})$ is NCC. 
    This is because we can always pick $\hat{\ell} \in \textnormal{argmax}_{\ell}  p_\ell$, (note that $p_{\hat{\ell}} \geq \frac{1}{k}$), and then let $\hat{y}_j = k E_j \mathbf{e}_{\hat{\ell}} \geq \frac{E_j}{p_{\hat{\ell}}} \mathbf{e}_{\hat{\ell}}$. 
    Since $\hat{y}_j$ is feasible to the disutility minimization problem, 
    $\mathcal{S}^*_j(\mathbf{p}) \subset \{ y_j \mid d_j(y_j) \leq d_j(k E_j \mathbf{e}_{\hat{\ell}}) \}$. 
    We denotes $\bar{S} := \max_{j \in [m]} \max\{ \norm*{y_j}_\infty \mid d_j(y_j) \leq d_j(k E_j \mathbf{e}_{\hat{\ell}}) \} < \infty$. 
    However, it is possible that the demand correspondence is unbounded if some price goes to zero. 
    To handle this, we define ``truncated'' demand correspondences 
    \begin{equation*}
        \bar{\mathcal{D}}_i^*(\mathbf{p}) := \textnormal{argmax}_{x'_i \in \mathbb{R}^k_+} \Big\{ u_i(x'_i) \;\; \textnormal{s.t.} \; \sum\nolimits_{\ell = 1}^k p_\ell x'_{i \ell} \leq B_i \text{ and } \norm*{x_i'}_\infty \leq \bar{D} \Big\} \quad \forall\, i \in [n], 
    \end{equation*}
    where $\bar{D} = \max\{ 2\bar{S}, 2\sum_i B_i \}$. 
    Because all utility functions are continuous and quasiconcave by~\cref{assumption:general-utility}, given any $\mathbf{p} \in P$, each $\bar{\mathcal{D}}_i^*(\mathbf{p})$ is NCC. 
    In particular, any such correspondence maps $\mathbf{p} \in P$ to $\mathcal{B}_\infty(\bar{D})$, where $\mathcal{B}_\infty(r)$ denotes the $\ell_\infty$ ball centered at $0$ with radius $r$. 
    Then, we define the ``truncated'' excess demand correspondence 
    \begin{equation*}
        \bar{\mathcal{Z}}^*(\mathbf{p}) := \left\{ z \in \mathbb{R}^k \left\vert\, z_\ell = \sum\nolimits_i x^*_{i\ell} - \sum\nolimits_j y^*_{j\ell} \; \forall\, \ell \textnormal{ where } x^*_i \in \bar{\mathcal{D}}_i^*(\mathbf{p}) \; \forall\, i \textnormal{ and } y^*_j \in \mathcal{S}_j^*(\mathbf{p}) \; \forall\, j \right. \right\}. 
    \end{equation*}
    As the set $\bar{\mathcal{Z}}^*(\mathbf{p})$ is the Minkowski summation of $n + m$ NCC sets, it is therefore NCC as well. 
    In particular, $\bar{\mathcal{Z}}^*$ maps $\mathbf{p} \in P$ to $Z := \mathcal{B}_\infty(\max\{n\bar{D}, m\bar{S}\})$. 
    
    Next, we define a best-response price correspondence $\Phi: Z \rightrightarrows P$: 
    \begin{equation*}
        \Phi(z) = {\textnormal{argmax}}_{\mathbf{p}' \in P} \inp*{\mathbf{p}'}{z}. 
    \end{equation*} 
    It is clear that $\Phi(z)$ is a convex set for any $z \in Z$. 
    Also, note that all correspondences $\bar{\mathcal{D}}_i^*, \mathcal{S}^*_j$ are the set of maximizers of a continuous function over a feasible set $C(\mathbf{p})$ where $C$ is a continuous, compact-valued correspondence. 
    By Berge's Maximum theorem~\citet{Ber63}, all correspondences $\bar{\mathcal{D}}_i^*, \mathcal{S}^*_j$ are upper hemi-continuous. 
    Similarly, $\Phi$ is also upper hemi-continuous. 
    We now arrive at our target correspondence $\Psi: P \times Z \rightrightarrows P \times Z$ defined as 
    \begin{equation*}
        \Psi(\mathbf{p}, z) = \left( \Phi(z), \bar{\mathcal{Z}}^*(\mathbf{p}) \right). 
    \end{equation*}
    Since (1) $P \times Z$ is NCC; 
    (2) $\Psi$ is upper hemi-continuous everywhere on $P \times Z$; 
    (3) $\Psi(\mathbf{p}, z)$ is a convex set for all $(\mathbf{p}, z) \in P \times Z$. 
    By Kakutani's fixed point theorem, there exists a pair of $(p^*, z^*)$ satisfying $(\mathbf{p}^*, z^*) \in \Psi(\mathbf{p}^*, z^*)$, i.e., $\mathbf{p}^* \in \Phi(z^*)$ and $z^* \in \bar{\mathcal{Z}}^*(\mathbf{p}^*)$. 

    Next, we show that $(\mathbf{p}^*, z^*)$ corresponds to a CE of the ``untruncated'' market. 
    Consider a fixed point $(\mathbf{p}^*, z^*)$. 
    We first show that $p^*_\ell > 0 \;\forall\, \ell \in [k]$. 
    Suppose otherwise, there is a set of tasks $L$ such that $p^*_\ell = 0 \;\forall\, \ell \in L$. 
    As $z^* \in \bar{\mathcal{Z}}^*(\mathbf{p}^*)$, 
    at least one \firm demands those task as much as possible because of~\cref{assumption:at-least-one-firm-demand}, and no worker wants to take those tasks since all disutility functions are strongly increasing. 
    Hence, $z^*_\ell \geq \bar{D}$ for all $\ell \in L$. 
    As $\mathbf{p}^* \in \Phi(z^*)$, because $\Phi$ is a best-response correspondence, $p^*_\ell > 0$ only if $z^*_\ell = \max_{\ell'} z^*_{\ell'} \geq \bar{D}$. 
    Consider any $\{ \{ x^*_{i\ell} \}_{i \in [n]}, \{ y^*_{j\ell} \}_{j \in [m]} \}_{\ell \in [k]}$ supporting $z^*$. 
    Note that $p^*_\ell > 0$ imples $\sum_i x^*_{i\ell} = z^*_\ell + \sum_j y^*_{j\ell} \geq z^*_\ell = \max_{\ell'} z^*_{\ell'} \geq \bar{D}$, 
    which leads to 
    \begin{equation*}
        \sum\nolimits_\ell p^*_\ell \sum\nolimits_i x^*_{i\ell} = \sum\nolimits_{\ell: p^*_\ell > 0} p^*_\ell \sum\nolimits_i x^*_{i\ell} = \bar{D} \cdot 1 > \sum\nolimits_i B_i, 
    \end{equation*} 
    which contradicts $x^*_i \in \bar{\mathcal{D}}^*_i(\mathbf{p}^*), \ \forall\, i \in [n]$ implied by $z^* \in \bar{\mathcal{Z}}^*(\mathbf{p}^*)$. 
    Note that, this contradiction means that $x^*_i \in \bar{\mathcal{D}}^*_i(\mathbf{p}^*) \;\forall\, i \in [n]$ cannot hold for \emph{any} $\{ x^*_{i\ell} \}_{i \in [n]}$ supporting $z^*$. 
    Therefore, by contradiction, $p^*_\ell > 0 \;\forall\, \ell \in [k]$. 

    For each $\ell \in [k]$, let $\{ x^*_{i\ell} \}_{i \in [n]}, \{ y^*_{j\ell} \}_{j \in [m]}$ be a pair of allocations supporting $z^*_\ell$. 
    Note that, because $\Phi$ is a best-response correspondence and $\mathbf{p}^* \in \Phi(z^*)$, 
    $p^*_\ell > 0 $ implies $\sum\nolimits_i x^*_{i\ell} - \sum\nolimits_j y^*_{j\ell} = c$ for some $c$ for all $\ell \in [k]$. 
    It follows that 
    \begin{equation}
        \sum\nolimits_\ell p^*_\ell \left( \sum\nolimits_i x^*_{i\ell} - \sum\nolimits_j y^*_{j\ell} \right) = c \sum\nolimits_\ell p^*_\ell = c. 
        \label{eq:existence-1}
    \end{equation}
    One the other hand, by $x^*_i \in \bar{\mathcal{D}}^*_i(\mathbf{p}^*)$, $y^*_j \in \mathcal{S}^*_j(\mathbf{p}^*)$ and the strong monotonicity of the disutility functions, we have 
    \begin{equation}
        \sum\nolimits_\ell p^*_\ell \left( \sum\nolimits_i x^*_{i\ell} - \sum\nolimits_j y^*_{j\ell} \right) = \sum\nolimits_i \sum\nolimits_\ell p^*_\ell x^*_{i\ell} - \sum\nolimits_j E_j  = \sum\nolimits_i \sum\nolimits_\ell p^*_\ell x^*_{i\ell} - \sum\nolimits_i B_i \leq 0, 
        \label{eq:existence-2}
    \end{equation}
    where we use $\sum_i B_i = \sum_j E_j$ in the last equality.

    Combining~\cref{eq:existence-1,eq:existence-2}, we have $\sum\nolimits_i x^*_{i\ell} - \sum\nolimits_j y^*_{j\ell} \leq 0$ for all $\ell \in [k]$. 
    With this in hand, we can show that 
    $z^*$ is also an excess demand under $p^*$ for the original ``untruncated'' market, i.e., $x^*_i = (x^*_{i1}, \ldots, x^*_{im}) \in \mathcal{D}^*_i(\mathbf{p}^*)$ for each $i$. 
    Note that $\sum_j y^*_{j\ell} \in \mathcal{B}_\infty(\bar{S})$, 
    thus we have $\sum_i x^*_{i\ell} \in \mathcal{B}_\infty(\bar{S})$. 
    Since $\bar{D} \geq 2\bar{S}$, 
    it follows that all $x^*_{i\ell} < \bar{D}$ for all $i, \ell$. 
    Next, for each $i$, we show that $x^*_i = (x^*_{i1}, \ldots, x^*_{im}) \in \mathcal{D}^*_i(\mathbf{p}^*)$ by contradiction. 
    Suppose that there is a $x'_i$ satisfying $\max_\ell x'_{i\ell} > \bar{D}$ and $u_i(x'_i) > u_i(x^*_i)$. 
    Then, one can take $(1 - \lambda) x^*_i + \lambda x'_i \in \mathcal{B}_\infty(\bar{D})$ with a sufficiently small $\lambda > 0$. 
    By the strict quasiconcavity of $u_i(\cdot)$, we have $u_i((1 - \lambda) x^*_i + \lambda x'_i) > \min\{ u_i(x^*_i), u_i(x'_i) \} = u_i(x^*_i)$, which contradicts $x^*_i \in \bar{\mathcal{D}}^*_i(\mathbf{p}^*)$. 
    
    It remains to show market clearing. 
    Since $x^*_i \in \mathcal{D}^*_i(\mathbf{p}^*)$ for each $i$ and all utility functions are locally non-satiated, the inequality in~\cref{eq:existence-2} holds with equality. 
    Market clearing then follows from $p^*_\ell > 0$ and $\sum\nolimits_i x^*_{i\ell} - \sum\nolimits_j y^*_{j\ell} \leq 0$ for all $\ell \in [k]$. 
    This completes the proof. 
\end{proof}

Note that the standard proof technique used in other economic models does not directly apply here, primarily due to the \emph{unboundedness} of the set of the feasible allocations, which undermines the compactness assumptions typically required for standard arguments. 

A direct corollary of the above theorem, together with the first welfare theorem, is the existence of a Pareto optimal allocation under the same assumptions. 
\begin{corollary}
    If the labor market satisfies~\cref{assumption:general-utility,assumption:at-least-one-firm-demand}, then there exists a nonzero Pareto optimal allocation in the labor market. 
\end{corollary}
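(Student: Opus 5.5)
The corollary follows by chaining \Cref{thm: existence} with the first welfare theorem.

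First, under \cref{assumption:general-utility,assumption:at-least-one-firm-demand}, \Cref{thm: existence} gives a competitive equilibrium $(\p^*,\x^*,\y^*)$.

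Second, the first welfare theorem only needs \cref{assumption:general-utility}, which holds here. It states that for any competitive equilibrium the allocation $(\x^*,\y^*)$ is a nonzero Pareto optimal allocation. In particular, $(\x^*,\y^*)$ is feasible, since market clearing gives $\sum_i x^*_{i\ell}=\sum_j y^*_{j\ell}<\infty$ for every $\ell$.

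Nonzeroness is already part of the first welfare theorem's conclusion. It comes from worker optimality: $\sum_j\inp{\p^*}{y^*_j}\ge \sum_j E_j>0$ rules out $\y^*=0$. This uses the model's standing assumption $E_j>0$; I will note that the corollary relies on it.

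I do not expect any real obstacle. The only care needed is to check that the hypotheses of the two cited results are met by the assumptions stated in the corollary, which they are.
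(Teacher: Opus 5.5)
Your proposal is correct and matches the paper's argument: the paper obtains this corollary directly by combining \Cref{thm: existence} with the first welfare theorem, exactly as you do. Your extra checks on feasibility via market clearing and on nonzeroness via worker optimality with $E_j>0$ are consistent with how the first welfare theorem is proved.
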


\subsection{Convexity of Equilibrium Prices}
\label{app:prop-convex}

We next show that all equilibrium prices forms a convex cone in $\mathbb{R}_+^k$, excluding the trivial point $\p = 0$. Notice that in the one-sided setting, convexity of prices is only true for the goods case~\citep{cole2017convex}, but not for the chores~\citep{chaudhury2024competitive}. Also, in both one-sided settings the set of equilibrium prices is \textit{bounded} and does not have a conic structure like ours. A key reason is that, in our market, both supply and demand are endogenous, and scaling the equilibrium prices uniformly will cause the supply and demand to scale uniformly in the other direction, while still maintaining an equilibrium. 

\begin{theorem}
\label{thm: convexity-of-prices}
    For any input, the set of all competitive equilibrium prices is a convex cone in $\mathbb{R}_+^k$ (without zero). 
\end{theorem}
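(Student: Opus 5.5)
The plan is to avoid working with the non-convex \eqref{program:EG-Dual} directly. Instead I will show that the set of equilibrium prices can be written as the set of strictly positive solutions of a finite system of \emph{homogeneous linear} inequalities in $\p$. Such a set is automatically a convex cone not containing zero.

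The cone part is immediate. If $(\p,\x,\y)$ is a CE and $t>0$, then $(t\p,\x/t,\y/t)$ is also a CE. The bang-per-buck and pain-per-buck ratios all scale by the same factor $1/t$, so the MBB/MPB edges $\mathcal{E}(\p)=\mathcal{E}(t\p)$ are unchanged. Spending and earning are unchanged, since $b_{i\ell}=p_\ell x_{i\ell}$ is invariant. Market clearing is preserved because both sides scale by $1/t$. By the proposition on positivity, every equilibrium price is strictly positive, so zero is excluded.

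For convexity, I fix one equilibrium $(\p^*,\x^*,\y^*)$ with money flows $b^*_{i\ell}=p^*_\ell x^*_{i\ell}$ and $e^*_{j\ell}=p^*_\ell y^*_{j\ell}$. By \Cref{thm: LP-Equivalence}, $(\mathbf{b}^*,\mathbf{e}^*)$ is an optimal solution of \eqref{program:EG Dual log Dual}. Likewise, every CE price $\p$ yields an optimal dual solution $(\log\p,\log\bm\beta,\log\bm\alpha)$ of the log-space LP, with $\beta_i=\min_\ell p_\ell/v_{i\ell}$ and $\alpha_j=\max_\ell p_\ell/c_{j\ell}$. By LP complementary slackness, \emph{every} optimal primal solution is complementary to \emph{every} optimal dual solution. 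Hence $b^*_{i\ell}>0$ forces $\log v_{i\ell}+\log\beta_i=\log p_\ell$, so $\ell$ is an MBB task for $i$ at $\p$. Symmetrically, $e^*_{j\ell}>0$ forces $\ell$ to be an MPB task for $j$ at $\p$. Conversely, if $\p>0$ makes every edge in the support of $(\mathbf{b}^*,\mathbf{e}^*)$ an MBB/MPB edge, then \Cref{claim: spending/earning-to-allocation} applied to $(\p,\mathbf{b}^*,\mathbf{e}^*)$ shows that $\p$ is a CE price. Flow conservation is inherited from $(\mathbf{b}^*,\mathbf{e}^*)$, and the allocations are $x_{i\ell}=b^*_{i\ell}/p_\ell$ and $y_{j\ell}=e^*_{j\ell}/p_\ell$. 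Therefore the set of CE prices equals
\[
\Big\{\p>0 \;:\; v_{i\ell}\,p_{\ell'}\ge v_{i\ell'}\,p_\ell \ \ \forall\, \ell'\ \text{whenever } b^*_{i\ell}>0,\quad c_{j\ell}\,p_{\ell'}\le c_{j\ell'}\,p_\ell \ \ \forall\, \ell'\ \text{whenever } e^*_{j\ell}>0\Big\}.
\]
This is an intersection of homogeneous linear half-spaces with the open positive orthant, hence a convex cone without zero.

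The main obstacle is the complementary-slackness step. I must check that the correspondence in \Cref{thm: LP-Equivalence} really identifies CE prices with the \emph{entire} optimal dual face, and not just with some optimal dual solutions. I must also handle the omitted constraints with $v_{i\ell}=0$: these edges carry no money in any CE, and they only contribute the trivially satisfied inequality $0\le v_{i\ell'}p_\ell$. If invoking the LP theorem becomes delicate, I will fall back on a direct exchange argument. Given two equilibria $\p,\p'$, I compare the aggregate quantity $\sum_{i,\ell} b_{i\ell}\log(v_{i\ell}/p'_\ell)$ with the analogous worker term, and use optimality at $\p'$ to show that the flows of $\p$ must lie on MBB/MPB edges of $\p'$. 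This is exactly the weak-duality computation underlying complementary slackness.
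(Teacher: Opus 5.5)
Your proof is correct, and it takes a genuinely different route from the paper. The paper's proof never uses the LP. Given two equilibrium prices $\p,\p'$, it groups tasks by the ratio $p'_\ell/p_\ell$. An MBB/MPB exchange argument then shows that every user and every worker transacts entirely within one ratio class, and does so identically in both equilibria. The paper then checks by hand that $\p(\lambda)=(1-\lambda)\p+\lambda\p'$ is an equilibrium price, using the rescaled allocations $x_{i\ell}p_\ell/p_\ell(\lambda)$ (that is, the same money flows as at $\p$). The check compares ratio classes above and below each agent's class.

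You instead use \Cref{thm: LP-Equivalence} together with LP complementary slackness. This is not circular, because that theorem is proved directly from KKT conditions. It gives a product structure: one fixed equilibrium flow $(\mathbf{b}^*,\mathbf{e}^*)$ is supported on MBB/MPB edges at every equilibrium price. Hence the CE price set is exactly the strictly positive part of an explicit polyhedral cone cut out by homogeneous linear inequalities. Given the LP machinery, this is shorter. It also proves more: the price set is polyhedral with an explicit description, and equilibrium flows and prices are interchangeable. The paper's argument is elementary and independent of the later LP section, and it exposes the ratio-class decomposition of the market, but it only establishes convexity along segments.

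The obstacle you flag is not a real one. You only need that each CE price yields some optimal solution of the log-space LP, not the whole optimal face. This follows from the converse half of the proof of \Cref{thm: LP-Equivalence}, or from a direct check of feasibility plus complementary slackness against that CE's own flow. Sufficiency comes from \Cref{claim: spending/earning-to-allocation} with no LP at all. Your fallback exchange argument would also work: it is exactly the weak-duality sandwich between the two equilibria, which forces equality and hence the support condition.
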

\begin{proof}
    We first prove convexity. Given two different equilibrium prices $\mathbf{p}, \mathbf{p}^\prime >0$, we consider the set of ratios on task prices $\{\rho : \exists \ \ell \in [k], p_\ell^\prime / p_\ell = \rho\}.$ Suppose there are $r$ different ratios, and sort all these ratios in decreasing order $\rho^{(1)}> \rho^{(2)} > \cdots> \rho ^{(r)}$. We can then partition the tasks into groups according their ratio $p_\ell^\prime/p_\ell$. Define $T^{(s)}$ to be the group of tasks whose ratio $p_\ell^\prime/p_\ell$ is the $s$'th largest ratio $\rho^{(s)}$,
    \begin{equation*}
        T^{(s)}:= \{\ell\in[k]: p_\ell^\prime /p_\ell = \rho^{(s)}\} . 
    \end{equation*}

    Let $(\mathbf{x}, \mathbf{y}), (\mathbf{x}^\prime, \mathbf{y}^\prime)$ be any equilibirium allocatiosn under the equilibrium prices $\mathbf{p}$ and $\mathbf{p}^\prime$, respectively. For \firm $i$ and $1\leq s\leq r$, let $b_{i,(s)} = \sum_{\ell \in T^{(s)}}p_\ell x_{i\ell}, b_{i, (s)}^\prime  = \sum_{\ell \in T^{(s)}}p_\ell^\prime x_{i\ell}^\prime$ be the corresponding spending of the \firm to the $s$'th group. We claim that for each \firm $i$, she either spends entire budget $B_i$ on $T^{(1)}$ at equilibrium $(\mathbf{p}, \mathbf{x}, \mathbf{y})$, or spends $0$ on $T^{(1)}$ at equilibrium $(\mathbf{p}^\prime, \mathbf{x}^\prime, \mathbf{y}^\prime)$, i.e.,
    \begin{equation}
    \label{eq: convexity-claim}
        (B_i-b_{i, (1)})\cdot b_{i,(1)}^\prime = 0.
    \end{equation}
    
     To see \eqref{eq: convexity-claim}, assume it does not hold for the sake of contradiction, then we have $b_{i, (1)} < B_i$ and $b_{i, (1)}^\prime >0$. Notice that $b_{i, (1)}<B_i$ implies that \firm $i$'s budget is spent also on some group ${s}\neq 1$ where $b_{i, (s)}>0$. Then there exists an MBB task $\ell^{(s)} \in T^{(s)}$ such that $\ell^{(s)} \in \arg \max_{\ell\in [k]} v_{i\ell}/p_\ell$. On the other hand, $b_{i, (1)}^\prime >0$ implies that there exists an MBB task $\ell^{(1)} \in T^{(1)}$ such that $\ell^{(1)} \in \arg \max_{\ell\in [k]} v_{i\ell}/p_\ell^\prime$. We then have $v_{i\ell^{(s)}}/p_{\ell^{(s)}} \geq v_{i\ell^{(1)}}/p_{\ell^{(1)}}$ and $v_{i\ell^{(1)}}p_{\ell^{(1)}}^\prime\geq v_{i\ell^{(s)}}/p_{\ell^{(s)}}^\prime$, which together imply $\rho^{(1)}  =p_{i\ell^{(1)}}^\prime / p_{i\ell^{(1)}}\leq p_{i\ell^{(s)}}^\prime / p_{i\ell^{(s)}} = \rho^{(s)}$. This contradicts the order $\rho^{(1)}>\rho^{(s)}$.

     Similar to \eqref{eq: convexity-claim}, let $b_{j,(s)}, b_{j,(s)}^\prime$ be the total earning of worker $j$ from task group $T^{(s)}$ in the two equilibria. We have
     \begin{equation}
         \label{eq: convexity-claim-1}
         (E_j-b_{j, (1)}^\prime)\cdot b_{j,(1)} = 0.
     \end{equation}
    Notice that \eqref{eq: convexity-claim} implies $b_{i,(1)} \geq b_{i, (1)}^\prime$ for all $i$ since the total spending on any group at the equilibrium is in $[0,B_i]$. Similarly, \eqref{eq: convexity-claim-1} implies $b_{j, (1)}\leq b_{j,(1)}^\prime$ for all $j$. By market clearance, at any equilibrium the task group $T^{(1)}$ should have the amount of spending (from the users) and earning (to the workers). We have
    \begin{equation*}
        \sumi b_{i, (1)} = \sumj b_{j, (1)} \leq \sumj  b_{j, (1)}^\prime = \sumi b_{i, (1)}^\prime  \leq \sumi b_{i, (1)} .
    \end{equation*}
    We then see that all inequalities here have to be tight, which gives us $b_{i, (1)} = b_{i, (1)}^\prime , b_{j, (1)} = b_{j,(1)}^\prime$ for any $i, j, \ell$. Combined with \eqref{eq: convexity-claim}, this tells us that any \firm $i$ spends the same amount on task group $T^{(1)}$, and this amount is either $0$ or $B_i$. Similarly, any worker $j$ earns the same amount from task group $T^{(1)}$, and this amount is either $0$ or $E_j$. For those \firms that spends the entire budgets on $T^{(1)}$ and those buyers that earns that entire requirement from $T^{(1)}$, we can remove them from the market along with group $T^{(1)}$ without affecting the rest of the equilibrium. We can then apply the same analysis to $T^{(2)}$, then inductively till $T^{(r)}$. 
    \begin{claim}
        \label{claim: convexity}
        Let $\Gamma_\mathrm{f}^{(s)}:=\{i\in[n]:b_{i,(s)}>0\}$ be the subset of \firms that spend positive budgets on $T^{(s)}$. We have $\{\Gamma_\mathrm{f}^{(s)}\}_{s=1}^r$ is a partition of the \firms and $b_{i,(s)} = b_{i, (s)}^\prime = B_i$. Similarly, let $\Gamma_\mathrm{w}^{(s)}:=\{j\in[m]:b_{j,(s)}>0\}$ be the subset of workers that have positive earning from $T^{(s)}$. We have $\{\Gamma_\mathrm{w}^{(s)}\}_{s=1}^r$ is a partition of the workers and $b_{j,(s)} = b_{j, (s)}^\prime = E_j$.
    \end{claim}

    Consider the convex combination of the given equilibrium prices, $\mathbf{p}(\lambda) = (1-\lambda)\mathbf{p} + \lambda \mathbf{p}^\prime, \lambda \in [0,1]$, then $\mathbf{p} = \mathbf{p}(0)$, $\mathbf{p}^\prime = \mathbf{p}(1)$. For task $\ell \in T^{(s)}$, we have $p_{\ell}(\lambda) = p_\ell(0) \cdot (\lambda \rho^{(s)} + 1-\lambda)$. 
    We construct allocation $\mathbf{x}(\lambda), \mathbf{y}(\lambda)$ as
    \begin{equation*}
        x_{i\ell}(\lambda) = x_{i\ell}p_{\ell}(0)/p_\ell(\lambda), \ \ y_{j\ell}(\lambda) = y_{j\ell} p_{\ell}(0)/ p_\ell(\lambda).
    \end{equation*}
    It is easy to check that $(\mathbf{p}(\lambda), \mathbf{x}(\lambda), \mathbf{y}(\lambda))$ clears the market. To show that it is an equilibrium, we show \firm optimality; the proof of worker optimality will be similar. Consider any $1\leq s\leq r$ and \firm $i\in \Gamma^{(s)}_\mathrm{f}$. Notice that prices of tasks within the group $T^{(s)}$ are scaled proportionally from $\mathbf{p}$ to $\mathbf{p}^\prime$, \Cref{claim: convexity} implies that for any $\ell^{(s)}\in T^{(s)}$ such that $b_{i\ell^{(s)}}>0$, we have
    \begin{equation*}
        \ell^{(s)} \in \arg \max_{\ell \in [k]} v_{i\ell}/p_\ell(0), \ \ \ell^{(s)} \in \arg \max_{\ell \in [k]} v_{i\ell}/p_\ell(1).
    \end{equation*}
    Now we argue that $\ell^{(s)} \in \arg \max_{\ell \in [k]} v_{i\ell}/p_\ell(\lambda)$. We show that $v_{i\ell^{(s)}}/p_{\ell^{(s)}}(\lambda) \geq v_{i\ell^{(t)}}/p_{\ell^{(t)}}(\lambda)$ for any $\ell^{(t)}\in T^{(t)}, t\in[r]$. Consider two cases.
    \begin{itemize}
        \item $t\leq  s$. Since $\ell^{(s)} \in \arg \max_{\ell \in [k]} v_{i\ell}/p_\ell(0)$ we have $v_{i\ell^{(s)}}/p_{\ell^{(s)}}(0) \geq v_{i\ell^{(t)}}/p_{\ell^{(t)}}(0)$. By the order $\rho^{(t)}\geq  \rho^{(s)}$ we have
        \begin{equation*}
            \frac{v_{i\ell^{(s)}}}{p_{\ell^{(s)}}(\lambda)} =  \frac{v_{i\ell^{(s)}}}{p_{\ell^{(s)}}(0)\cdot (\lambda \rho^{(s)}+1-\lambda)}\geq \frac{v_{i\ell^{(t)}}}{p_{\ell^{(t)}}(0)\cdot (\lambda \rho^{(t)}+1-\lambda)} =  \frac{v_{i\ell^{(t)}}}{p_{\ell^{(t)}}(\lambda)}.
        \end{equation*}
        \item $t\geq  s$. Since $\ell^{(s)} \in \arg \max_{\ell \in [k]} v_{i\ell}/p_\ell(1)$ we have $v_{i\ell^{(s)}}/p_{\ell^{(s)}}(1) \geq v_{i\ell^{(t)}}/p_{\ell^{(t)}}(1)$. By the order $\rho^{(t)}\geq  \rho^{(s)}$ we have
        \begin{equation*}
            \frac{v_{i\ell^{(s)}}}{p_{\ell^{(s)}}(\lambda)} =  \frac{v_{i\ell^{(s)}}}{p_{\ell^{(s)}}(1)\cdot ((1-\lambda)/ \rho^{(s)}+\lambda)}\geq \frac{v_{i\ell^{(t)}}}{p_{\ell^{(t)}}(1)\cdot ((1-\lambda)/ \rho^{(t)}+\lambda)} =  \frac{v_{i\ell^{(t)}}}{p_{\ell^{(t)}}(\lambda)}.
        \end{equation*}
    \end{itemize}
    This shows that $(\mathbf{p}(\lambda), \mathbf{x}(\lambda), \mathbf{y}(\lambda))$ is a competitive equilibrium and thereby $\mathbf{p}(\lambda)$ is a equilibrium price for any $\lambda \in [0,1]$. To see the conic structure, notice that $(\mu \cdot\mathbf{p}, \mathbf{x}/\mu, \mathbf{y}/\mu)$ will be a competitive equilibrium for any $\mu>0$ if $(\mathbf{p}, \mathbf{x}, \mathbf{y})$ is a competitive equilibrium. 
\end{proof}

\section{An Alternative Endogenous Demand-Supply Model}
\label{app-alternate-model}

An alternative market model with endogenous demands and supplies can be described as follows. 
The behavior of the \firms is the same as in our model, while workers' objective is to maximize their earnings subject to a pre-specified disutility threshold. 
This leads to the following supply function of a price vector $p$: 
\begin{equation*}
    \mathcal{S}_j^*(\mathbf{p}) = \underset{\mathbf{y}'_j \in \mathbb{R}^k_+}{\textnormal{argmax}} \left\{ \sum\nolimits_{\ell=1}^k p_\ell y'_{j\ell} \; \textnormal{ s.t. } d_j(\mathbf{y}'_j) \leq D_j \right\}, \hspace{20pt} \forall j \in [m].
\end{equation*}
where $D_j > 0$ denote the $i$-th worker's disutility threshold. Recall that $d_j(\mathbf{y}_j) = \sum\nolimits_{\ell=1}^k c_{j\ell} y_{j\ell}$. 

In this section, we show that the CE in this setting can be captured by the following primal and dual EG-type convex programs: 
\begin{equation}
    \begin{aligned}
        \max_{\x, \y, \mathbf{u} \geq 0} \quad & \sum_{i=1}^n B_i \log{u_i} \\ 
        \textnormal{s.t.} \quad & u_i \leq \sum\nolimits_{\ell=1}^k v_{i\ell} x_{i\ell} \;\;\; \forall\, i \in [n] & [\beta_i] \\ 
        & \sum\nolimits_{\ell = 1}^k c_{j\ell} y_{j\ell} \leq D_j \;\;\; \forall\, j \in [m] & [\alpha_j] \\ 
        & \sum\nolimits_{i=1}^n x_{i\ell} \leq \sum\nolimits_{j=1}^m y_{j\ell} \;\;\; \forall\, \ell \in [k]. & [p_\ell] 
    \end{aligned}
    \label{pgm:alter-primal}
\end{equation}
and 
\begin{equation}
    \begin{aligned}
        \min_{\mathbf{\bm \alpha}, \mathbf{\bm \beta}, \mathbf{p} \geq 0} \quad & - \sum_{i=1}^n B_i \log{\beta_i} + \sum_{j=1}^m D_j \alpha_j \\ 
        \textnormal{s.t.} \quad & v_{i\ell} \beta_i - p_\ell \leq 0 \;\;\; \forall\, i \in [n], \ell \in [k] & [x_{i\ell}] \\ 
        & - c_{j\ell} \alpha_j + p_\ell \leq 0 \;\;\; \forall\, j \in [m], \ell \in [k]. & [y_{j\ell}] 
    \end{aligned}
    \label{pgm:alter-dual}
\end{equation}


The KKT conditions of (\ref{pgm:alter-primal}) are 
\begin{enumerate}
    \item $- \frac{B_i}{u_i} + \beta_i \geq 0$, $u_i \geq 0$, and $u_i \left( - \frac{B_i}{u_i} + \beta_i \right) = 0$ for all $i \in [n]$; 
    \item $- u_i + \sum_\ell v_{i\ell} x_{i\ell} \geq 0$, $\beta_i \geq 0$, and $\beta_i \left( - u_i + \sum_\ell v_{i\ell} x_{i\ell} \right) = 0$ for all $i \in [n]$; 
    \item $v_{i\ell} \beta_i - p_\ell \leq 0$, $x_{i\ell} \geq 0$, and $x_{i\ell} \left( v_{i\ell} \beta_i - p_\ell \right) = 0$ for all $i \in [n], \ell \in [k]$; 
    \item $D_j - \sum_\ell c_{j\ell} y_{j\ell} \geq 0$, $\alpha_j \geq 0$, and $\alpha_j \left( D_j - \sum_\ell c_{j\ell} y_{j\ell} \right) = 0$ for all $j \in [m]$; 
    \item $- c_{j\ell} \alpha_j + p_\ell \leq 0$, $y_{j\ell} \geq 0$, and $y_{j\ell} \left( - c_{j\ell} \alpha_j + p_\ell \right) = 0$ for all $j \in [m], \ell \in [k]$; 
    \item $- \sum_i x_{i\ell} + \sum_j y_{j\ell} \geq 0$, $p_\ell \geq 0$, and $p_\ell \left( - \sum_i x_{i\ell} + \sum_j y_{j\ell} \right) = 0$ for all $\ell \in [k]$.
\end{enumerate}

Symmetrically, the KKT conditions of (\ref{pgm:alter-dual}) are the same as the above conditions except replacing the first two items with 
\begin{enumerate}
    \item $-\frac{B_i}{\beta_i} + \sum_\ell v_{i\ell} x_{i\ell} \geq 0$, $\beta_i \geq 0$, and $\beta_i \left( -\frac{B_i}{\beta_i} + \sum_\ell v_{i\ell} x_{i\ell} \right) = 0$ for all $i \in [n]$. 
\end{enumerate}

Next, we show that any pair of points $(\x, \y, \mathbf{u})$ and $(\mathbf{\bm \alpha}, \mathbf{\bm \beta}, \mathbf{p})$ that satisfying the KKT conditions of (\ref{pgm:alter-primal}) corresponds to a CE of the market model considered in this section. 

\begin{lemma}
    There is a one-to-one correspondence between the set of the KKT points of~\eqref{pgm:alter-primal} and competitive equilibria of the market model considered in this section. 
\end{lemma}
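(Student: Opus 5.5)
We prove the two directions separately, pairing the primal variables $(\x,\y,\mathbf{u})$ with the multipliers $(\bm\beta,\bm\alpha,\mathbf{p})$ shown next to the constraints of~\eqref{pgm:alter-primal}. Throughout we assume the analogue of \Cref{assumption: nontrivial}: every task has a user with $v_{i\ell}>0$, every user values some task, and all $c_{j\ell}>0$.

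\textbf{From KKT points to CE.} Take $(\x,\y,\mathbf{u})$ and $(\bm\beta,\bm\alpha,\mathbf{p})$ satisfying KKT conditions 1--6.

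First, $u_i>0$ for every $i$, since otherwise the objective term $B_i\log u_i$ is $-\infty$. Condition 1 then gives $\beta_i = B_i/u_i>0$. Condition 2 with $\beta_i>0$ gives $u_i=\sum_\ell v_{i\ell}x_{i\ell}$.

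Condition 3 says $p_\ell\ge v_{i\ell}\beta_i$, with equality whenever $x_{i\ell}>0$. So $\beta_i=\max_\ell v_{i\ell}/p_\ell$, and user $i$ buys only MBB tasks. Multiplying the complementary slackness in condition 3 by $x_{i\ell}$ and summing over $\ell$ yields
\[
\sum_\ell p_\ell x_{i\ell}=\beta_i u_i = B_i .
\]
Hence user $i$ spends exactly $B_i$ on MBB tasks, which is the standard characterization of $\x_i\in\mathcal{D}_i^*(\p)$ for linear utilities.

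Prices are positive. Every task $\ell$ has some $i$ with $v_{i\ell}>0$, so $p_\ell\ge v_{i\ell}\beta_i>0$. Condition 5 then forces $\alpha_j\ge p_\ell/c_{j\ell}>0$, and condition 4 gives $\sum_\ell c_{j\ell}y_{j\ell}=D_j$. Condition 5 also says $\alpha_j=\max_\ell p_\ell/c_{j\ell}$, with $y_{j\ell}>0$ only on tasks attaining this maximum. Worker $j$'s problem is the linear program of maximizing $\sum_\ell p_\ell y_{j\ell}$ subject to $\sum_\ell c_{j\ell}y_{j\ell}\le D_j$. Its optimal solutions are exactly the bundles that exhaust $D_j$ and use only tasks of maximum price-per-pain. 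Therefore $\y_j\in\mathcal{S}_j^*(\p)$.

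Finally, $p_\ell>0$ together with condition 6 gives $\sum_i x_{i\ell}=\sum_j y_{j\ell}$, which is market clearing.

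\textbf{From CE to KKT points.} Take a CE $(\p,\x,\y)$. By the argument above, $\p>0$. Define
\[
\beta_i=\max_\ell v_{i\ell}/p_\ell,\qquad u_i=\sum_\ell v_{i\ell}x_{i\ell},\qquad \alpha_j=\max_\ell p_\ell/c_{j\ell}.
\]

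The conditions are checked in turn.
\begin{itemize}
\item Conditions 3 and 5 hold by the MBB and max-price-per-pain properties of optimal bundles.
\item Budget exhaustion gives $B_i=\sum_\ell p_\ell x_{i\ell}=\beta_i u_i$, which is condition 1. Condition 2 holds with equality.
\item Local non-satiation of the worker's objective (the maximizing tasks have positive price) gives $\sum_\ell c_{j\ell}y_{j\ell}=D_j$, which is condition 4.
\item Market clearing gives condition 6.
\end{itemize}

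\textbf{One-to-one correspondence.} The KKT multipliers are determined by $(\x,\y)$ and $\p$ through the formulas above. So the two maps are mutually inverse on $(\p,\x,\y)$, with $\mathbf{u}$, $\bm\beta$ and $\bm\alpha$ determined by them.

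\textbf{Main obstacle.} The delicate step is showing $p_\ell>0$ and $u_i>0$ at a KKT point, since the correspondence breaks otherwise. Positivity of $\mathbf{u}$ comes from the log objective (equivalently, from the domain of condition 1). Positivity of $\mathbf{p}$ then comes from $\beta_i>0$ and the nontriviality assumption. Alternatively, since~\eqref{pgm:alter-primal} is convex with Slater points, KKT points are exactly the optima. Existence of a CE, and the fact that only prices with $\p>0$ arise, then also follow from the program.
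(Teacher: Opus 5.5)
Your route is the same as the paper's. You get positivity of $\mathbf{u}$, $\beta$, $\mathbf{p}$, $\alpha$ from conditions 1, 3 and 5. You get MBB plus budget exhaustion, and max price-per-pain plus threshold exhaustion, from complementary slackness. Condition 6 gives market clearing, and the converse writes down explicit multipliers. Your positivity step ($u_i>0$, so $\beta_i=B_i/u_i>0$, so $p_\ell\ge v_{i\ell}\beta_i>0$ for a user valuing $\ell$) is a more direct version of the paper's contradiction argument. It also correctly uses only users with $v_{i\ell}>0$.

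\textbf{The gap: your multiplier $\beta_i$ is inverted.} Condition 3 reads $v_{i\ell}\beta_i\le p_\ell$, with equality on the support of $\x_i$. So it says $\beta_i=\min_{\ell: v_{i\ell}>0} p_\ell/v_{i\ell}$. That is, $1/\beta_i=\max_\ell v_{i\ell}/p_\ell$ is the bang-per-buck, not $\beta_i$ itself.

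\emph{Forward direction.} Here the misstatement is harmless. Your MBB conclusion and the identity $\sum_\ell p_\ell x_{i\ell}=\beta_i u_i=B_i$ come directly from condition 3.

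\emph{Converse.} Here you \emph{define} $\beta_i=\max_\ell v_{i\ell}/p_\ell$, and both checks you claim then fail.
\begin{itemize}
\item Condition 3 would require $v_{i\ell}\max_{\ell'}(v_{i\ell'}/p_{\ell'})\le p_\ell$.
\item Condition 1 would give $\beta_i u_i=(\max_\ell v_{i\ell}/p_\ell)^2 B_i$ rather than $B_i$.
\end{itemize}
For example, with $v_{i\ell}\equiv 2$ and $\mathbf{p}=\mathbf{1}$ you get $v_{i\ell}\beta_i=4>1$ and $\beta_i u_i=4B_i$.

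\emph{Fix.} Take $\beta_i=\min_{\ell: v_{i\ell}>0} p_\ell/v_{i\ell}$, as the paper does. Then condition 3 is exactly the MBB property, and $\beta_i u_i=\sum_\ell p_\ell x_{i\ell}=B_i$ gives condition 1. The rest of your converse then goes through, including $\alpha_j=\max_\ell p_\ell/c_{j\ell}$, which you have right.

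\textbf{Minor point.} In the converse, ``$\mathbf{p}>0$ by the argument above'' cites an argument that was about KKT points. For a CE you should argue directly: if $p_\ell=0$ for a task that some user values, that user's utility is unbounded on the budget set. Her demand set is then empty, so no CE has a zero price.
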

\begin{proof}
    Let $(\tilde{\mathbf{x}}, \tilde{\mathbf{y}}, \tilde{\mathbf{u}})$ be a KKT point of~\eqref{pgm:alter-primal} and $(\tilde{\mathbf{\bm \alpha}}, \tilde{\mathbf{\bm \beta}}, \tilde{\mathbf{p}})$ be the corresponding dual variables. 
    That is, $(\tilde{\mathbf{x}}, \tilde{\mathbf{y}}, \tilde{\mathbf{u}})$ and $(\tilde{\mathbf{\bm \alpha}}, \tilde{\mathbf{\bm \beta}}, \tilde{\mathbf{p}})$ satisfy the conditions 1.-6. listed above. 
    We first show that $(\tilde{\mathbf{p}}, \tilde{\mathbf{x}}, \tilde{\mathbf{y}})$ forms a CE. 
    \begin{enumerate}
        \item \textbf{Market clearing.} 
        Suppose that $\tilde{p}_\ell = 0$ for some $\ell \in [k]$. 
        Then, since $v_{i\ell} \tilde{\beta}_i \leq \tilde{p}_\ell$ by condition 5., we have $\tilde{\beta}_i = 0$ for all $i \in [n]$. On the other hand, 
        \begin{align*}
            \tilde{u}_i \overset{\textnormal{by condition 1.}}{\leq} \sum_{i=1}^n \tilde{u}_i \overset{\textnormal{by condition 2.}}{\leq} 
            \sum_{i=1}^n \sum_{\ell=1}^k v_{i\ell} \tilde{x}_{i\ell} &\leq \max_{i, \ell} v_{i\ell} \sum_{\ell=1}^k \sum_{i=1}^n \tilde{x}_{i\ell} \\ 
            &\overset{\textnormal{by condition 6.}}{\leq} \max_{i, \ell} v_{i\ell} \sum_{\ell=1}^k \sum_{j=1}^m \tilde{y}_{j\ell} \\ 
            &\leq \frac{\max_{i, \ell} v_{i\ell}}{\min_{j, \ell} c_{j\ell}} 
            \sum_{j=1}^m \sum_{\ell=1}^k c_{j\ell} \tilde{y}_{j\ell} \\ 
            &\overset{\textnormal{by condition 4.}}{\leq} \frac{\max_{i, \ell} v_{i\ell}}{\min_{j, \ell} c_{j\ell}} \sum_{j=1}^m D_j < \infty.
        \end{align*}
        This then contradicts the condition 1. since $0 < B_i \leq \tilde{\beta}_i \tilde{u}_i = 0$.
        Therefore, we have $\tilde{p}_\ell > 0$ for all $\ell \in [k]$.
        By condition 6., the market clearing is guaranteed.
        \item \textbf{\Firm optimality.}
        It follows by condition 3. that $v_{i\ell} \tilde{\beta}_i = \tilde{p}_\ell > 0$ if $\tilde{x}_{i\ell} > 0$ for all $i \in [n], \ell \in [k]$.
        This implies that $\tilde{\beta}_i = \frac{\tilde{p}_\ell}{v_{i\ell}}$ if $\tilde{x}_{i\ell} > 0$.
        Since we also have $v_{i\ell'} \tilde{\beta}_i \leq \tilde{p}_{\ell'}$ for any $\ell' \in [k]$, we have $\frac{v_{i\ell}}{\tilde{p}_\ell} \geq \frac{v_{i\ell'}}{\tilde{p}_{\ell'}}$ for all $\ell' \in [k]$ if $\tilde{x}_{i\ell} > 0$. 

        On the other hand, 
        \begin{equation*}
            \sum_{\ell=1}^k \tilde{p}_{\ell} \tilde{x}_{i\ell} \overset{\textnormal{by condition 3.}}{=} \tilde{\beta}_i \sum_{\ell=1}^k v_{i\ell} \tilde{x}_{i\ell} \overset{\textnormal{by condition 2.}}{=} \tilde{\beta}_i \tilde{u}_i \overset{\textnormal{by condition 1.}}{=} B_i, \hspace{10pt} \forall\, i \in [n].
        \end{equation*}
        Combining these two parts, we prove the \firm optimality.

        \item \textbf{Worker optimality.}
        Similar to the \firm case, 
        by condition 5. we have  $c_{j\ell} \tilde{\alpha}_j = \tilde{p}_\ell$ if $\tilde{y}_{j\ell} > 0$ for all $j \in [m], \ell \in [k]$.
        This implies that $\tilde{\alpha}_j = \frac{\tilde{p}_\ell}{c_{j\ell}}$ if $\tilde{y}_{j\ell} > 0$.
        Since we also have $c_{j\ell'} \tilde{\alpha}_j \geq \tilde{p}_{\ell'}$ for any $\ell' \in [k]$, 
        we have $\frac{c_{j\ell}}{\tilde{p}_\ell} \leq \frac{c_{j\ell'}}{\tilde{p}_{\ell'}}$ for all $\ell' \in [k]$ if $\tilde{y}_{j\ell} > 0$.
        
        On the other hand, since $\tilde{p}_\ell > 0$ for all $\ell \in [k]$ and $c_{j\ell} > 0$ for all $j \in [m]$ and $\ell \in [k]$, we have $\tilde{\alpha}_j > 0$ for all $j \in [m]$. 
        By condition 4., $\sum_{\ell=1}^k c_{j\ell} \tilde{y}_{j\ell} = D_j$ for all $j \in [m]$. 
        Combining these two parts, we prove the worker optimality.
    \end{enumerate}
    Conversely, for any CE $(\p^*, \x^*, \y^*)$, one can consider a pair of primal and dual variables $(\x^*, \y^*, \mathbf{u}^*)$ and $(\mathbf{\bm \alpha}^*, \mathbf{\bm \beta}^*, \p^*)$ where $u^*_i := \sum_{\ell=1}^k v_{i\ell} x_{i\ell}^*$, $\alpha^*_j := \max_{\ell \in [k]} \frac{p^*_{\ell}}{c_{j\ell}}$ and $\beta^*_i := \min_{\ell \in [k]} \frac{p^*_{\ell}}{v_{i\ell}}$. 
    It is straightforward to verify that such a pair of primal and dual variables satisfy all KKT conditions.
    This completes the proof.
\end{proof}

\section{Missing Proofs in \Cref{sec:EG-program}}
\subsection{Proof of \Cref{thm: KKT-points}}
\thmKKTPoints*
\begin{proof}
    Let ${\bm \beta} = (\beta_i)_{i \in [n]}, {\bm \alpha} = (\alpha_j)_{j \in [m]}, \p = (p_\ell)_{\ell \in [k]}$ denote the dual variables corresponding to the first, second, and third constraints, respectively. 
    Let $(\bar{\bu}, \bar{\bd}, \bar{\x}, \bar{\y})$ be a KKT point and $(\bar{\bm \beta}, \bar{ \bm \alpha}, \bar{\p})$ be the associated dual variables. 
    By KKT conditions, we have  
    \begin{enumerate}
        \item $- \frac{B_i}{\bar{u}_i} + \bar{\beta}_i \geq 0, \; \bar{u}_i > 0, \; \bar{u}_i\left( - \frac{B_i}{\bar{u}_i} + \bar{\beta}_i \right) = 0$ for all $i \in [n]$; 
        \label{primal-KKT-con:1}
        
        \item $\frac{E_j}{\bar{d}_j} - \bar{\alpha}_j \geq 0, \; \bar{d}_j > 0, \; \bar{d}_j\left( \frac{E_j}{\bar{d}_j} - \bar{\alpha}_j \right) = 0$ for all $j \in [m]$; 
        \label{primal-KKT-con:2}
        
        \item $-\bar{\beta}_i \frac{\partial u_i(\bar{x}_i)}{\partial x_{i\ell}} + \bar{p}_\ell \geq 0, \; \bar{x}_{i\ell} \geq 0, \; \bar{x}_{i\ell} \left( -\bar{\beta}_i \frac{\partial u_i(\bar{x}_i)}{\partial x_{i\ell}} + \bar{p}_\ell \right) = 0$ for all $i \in [n]$; 
        \label{primal-KKT-con:3}
        
        \item $\bar{\alpha}_j \frac{\partial d_j(\bar{y}_j)}{\partial y_{j\ell}} - \bar{p}_\ell \geq 0, \; \bar{y}_{j\ell} \geq 0, \; \bar{y}_{j\ell} \left( \bar{\alpha}_j \frac{\partial d_j(\bar{y}_j)}{\partial y_{j\ell}} - \bar{p}_\ell \right) = 0$ for all $j \in [m]$; 
        \label{primal-KKT-con:4}
        
        \item $u_i(\bar{x}_i) - \bar{u}_i \geq 0, \; \bar{\beta}_i \geq 0, \; \bar{\beta}_i \left( u_i(\bar{x}_i) - \bar{u}_i \right) = 0$  for all $i \in [n]$; 
        \label{primal-KKT-con:5}
        
        \item $\bar{d}_j - d_j(\bar{y}_j) \geq 0, \bar{\alpha}_j \geq 0, \bar{\alpha}_j \left( \bar{d}_j - d_j(\bar{y}_j) \right) = 0$ for all $j \in [m]$; 
        \label{primal-KKT-con:6}
        
        \item $\sum_{j=1}^m \bar{y}_{j\ell} - \sum_{i=1}^n \bar{x}_{i\ell} \geq 0, \; \bar{p}_\ell \geq 0, \; \bar{p}_\ell \left( \sum_{j=1}^m \bar{y}_{j\ell} - \sum_{i=1}^n \bar{x}_{i\ell} \right) = 0$ for all $\ell \in [k]$. 
        \label{primal-KKT-con:7}
    \end{enumerate} 
    We then show $(\bar{\p}, \bar{\x}, \bar{\y})$ constitutes a CE. 
    By $\bar{u}_i > 0$ (implied by the implicit open constraint), $B_i > 0$, and~(\ref{primal-KKT-con:1}), we have $\bar{\beta}_i > 0$ for each $i \in [n]$. 
    By the third assumption and KKT condition (\ref{primal-KKT-con:3}), we have $\bar{p}_\ell > 0$ for all $\ell \in [k]$. 
    As a result, market clearing follows from KKT condition (\ref{primal-KKT-con:7}) 
    For the \firm optimality, the KKT condition (\ref{primal-KKT-con:3}) yields that 
    \begin{equation}
        \bar{\beta}_i \leq \frac{\partial u_i(\bar{x}_i) / \partial x_{i\ell}}{\bar{p}_\ell} \mbox{ for each } \ell \in [k] \mbox{ and } \bar{\beta}_i = \frac{\partial u_i(\bar{x}_i) / \partial x_{i\ell}}{\bar{p}_\ell} \mbox{ whenever } \bar{x}_{i\ell} > 0. 
        \label{eq:primal-KKT-con-firm-opt-1}
    \end{equation}
    Applying KKT condition (\ref{primal-KKT-con:3}) again, we have $\sum_\ell \bar{p}_\ell \bar{x}_{i\ell} = \bar{\beta}_i \sum_\ell \bar{x}_{i\ell} \frac{\partial u_i(\bar{x}_i)}{\partial x_{i\ell}} = \bar{\beta}_i u_i(\bar{x}_i)$, where the last inequality follows by Euler's homogeneous function theorem. 
    Combining this with Conditions~\ref{primal-KKT-con:1} and \ref{primal-KKT-con:5}, we have 
    \begin{equation}
        \sum_\ell \bar{p}_\ell \bar{x}_{i\ell} = \bar{\beta}_i u_i(\bar{x}_i) = \bar{\beta}_i \bar{u}_i = B_i. 
        \label{eq:primal-KKT-con-firm-opt-2}
    \end{equation}
    \cref{eq:primal-KKT-con-firm-opt-1,eq:primal-KKT-con-firm-opt-2} implies the \firm optimality. Analogously, the worker optimality holds at the same time. 
    This shows $(\bar{\p}, \bar{\x}, \bar{\y})$ is a CE. 
    Conversely, given a CE $(\p^*, \x^*, \y^*)$, we can define 
    \begin{equation*}
        u^*_i = u_i(x^*_i), \; \beta^*_i = {B_i}/{u^*_i} \hspace{10pt} \forall\, i \in [n], \hspace{30pt} d^*_j = d_j(y^*_j), \; \alpha^*_j = {E_j}/{d^*_j} \hspace{10pt} \forall\, j \in [m]. 
    \end{equation*}
    Then, one can verify that $(\bu^*, \bd^*, \x^*, \y^*)$ is a KKT point associated with the dual variables $({\bm \beta}^*, {\bm \alpha}^*, \p^*)$. 
\end{proof}
\subsection{Proof of \Cref{lem: Dual-KKT}}
\lemDualKKT*
\begin{proof}
    Let $ \bm \gamma$, $\bm \lambda$ be the dual variables corresponding to the first and second constraints, respectively. 
    If $(\bar{\bm \beta}, \bar{\bm \alpha}, \bar{\mathbf p})$ is a KKT point and $\bar{\bm \gamma}, \bar{\bm \lambda}$ is its associated dual variables. 
    Then, by KKT conditions, we have 
    \begin{enumerate}  
        \item $- \frac{B_i}{\bar{\beta}_i} + \sum_{\ell=1}^k v_{i\ell} \bar{\gamma}_{i\ell} \geq 0, \hspace{10pt} \bar{\beta}_i > 0, \hspace{10pt} \bar{\beta}_i \big( - \frac{B_i}{\bar{\beta}_i} + \sum_{\ell=1}^k v_{i\ell} \bar{\gamma}_{i\ell} \big) = 0$ for all $i \in [n]$; 
        \label{KKT-con-a}
        
        \item $- \frac{E_j}{\bar{\alpha}_j} + \sum_{\ell=1}^k c_{j\ell} \bar{\lambda}_{j\ell} \geq 0, \hspace{10pt} \bar{\alpha}_j > 0, \hspace{10pt} \bar{\alpha}_j \big( - \frac{E_j}{\bar{\alpha}_j} + \sum_{\ell=1}^k c_{j\ell} \bar{\lambda}_{j\ell} \big) = 0$ for all $j \in [m]$; 
        \label{KKT-con-b} 
        
        \item $-\sum_{i = 1}^n \bar{\gamma}_{i\ell} + \sum_{j = 1}^m \bar{\lambda}_{j\ell} \geq 0, \hspace{10pt} \bar{p}_\ell \geq 0, \hspace{10pt} \bar{p}_\ell \big( -\sum_{i = 1}^n \bar{\gamma}_{i\ell}  + \sum_{j = 1}^m \bar{\lambda}_{j\ell} \big) = 0$ for all $\ell \in [k]$; 
        \label{KKT-con-c}
        
        \item $v_{i \ell} \bar{\beta}_i \leq \bar{p}_\ell, \hspace{10pt} \bar{\gamma}_{i\ell} \geq 0, \hspace{10pt} \bar{\gamma}_{i\ell}\left( v_{i \ell} \bar{\beta}_i - \bar{p}_\ell \right) = 0$ for all $i \in [n], \ell \in [k]$; 
        \label{KKT-con-d}
        
        \item $\bar{p}_\ell \leq d_{j \ell} \bar{\alpha}_j, \hspace{10pt} \bar{\lambda}_{j\ell} \geq 0, \hspace{10pt} \bar{\lambda}_{j\ell}\left( \bar{p}_\ell - d_{j \ell} \bar{\alpha}_j \right) = 0$ for all $j \in [m], \ell \in [k]$. 
        \label{KKT-con-e}
    \end{enumerate}
    First, note that 
    $\bar{\beta}_i > 0 \; \forall\, i \in [n]$ implies $\bar{p}_\ell > 0 \; \forall\, \ell \in [k]$ because for every task $\ell$ there is at least one $i$ such that $v_{i\ell} > 0$ and Condition~\ref{KKT-con-d} holds.  
    By Condition~\ref{KKT-con-c}, we ensure market clearing. 
    For any buyer $i \in [n]$ and task $\ell \in [k]$, 
    if $\bar{\gamma}_{i\ell} > 0$, 
    then by Condition~\ref{KKT-con-d} we have $\frac{v_{i \ell}}{\bar{p}_\ell} = \frac{1}{\bar{\beta}_i} \geq \frac{v_{i \ell'}}{\bar{p}_{\ell'}}$ for any $\ell' \in [k]$. 
    Meanwhile, we have $\sum_{\ell' = 1}^k \bar{p}_{\ell'} \bar{\gamma}_{i \ell'} = \bar{\beta}_i \sum_{\ell' = 1}^k v_{i \ell'} \bar{\gamma}_{i\ell'} = B_i$ by Conditions~\ref{KKT-con-d} and~\ref{KKT-con-a}. 
    This guarantees the \firm optimality. 
    Similarly, we can verify the worker optimality. 
    Therefore, $(\bar{\p}, \bar{\bm \gamma}, \bar{\bm \lambda})$ constitutes a CE. 

    Conversely, given that $(\p^*, \x^*, \y^*)$ is a CE, we define 
    \begin{equation*}
        \beta^*_i = \min_{\ell \in [k]} \frac{p^*_\ell}{v_{i\ell}} \; \forall\, i \in [n] \hspace{30pt} \text{ and } \hspace{30pt} \alpha^*_j = \max_{\ell \in [k]} \frac{p^*_\ell}{d_{i\ell}} \; \forall\, j \in [m]. 
    \end{equation*}
    Let $(\mathbf \p^*, \bm \beta^*, \bm \alpha^*)$ and $(\x^*, \y^*)$ be the primal and dual variables in~(\ref{program:EG-Dual}), one can verify that they satisfy the KKT conditions, and thus $(\p^*, \bm \beta^*, \bm \alpha^*)$ is a KKT point coupled with $(\x^*, \y^*)$. 
    
\end{proof}

\section{Missing Proofs in \Cref{sec:walrasian-algo}}

\subsection{Proof of \Cref{lem:price-update}}
\lemPriceUpdate*
\begin{proof}
We consider partitioning the set of \firms and workers as follows:
\begin{equation*}
    \begin{aligned}
        & \Gamma_{\mathrm{f}}(S) = \{i\in[n]: \exists \ \ell \in S, \ b_{i\ell} >0\},
        \ \ \ \overline{\Gamma}_{\mathrm{f}}(S) = [n]\backslash \Gamma_{\mathrm{f}}(S) , \\ 
        & \Gamma_{\mathrm{w}}(S) = \{j\in[m]: \exists \ \ell^\prime \notin S, \ e_{j\ell^\prime} >0\}, \ \ \ 
        \overline{\Gamma}_{\mathrm{w}}(S) = [m]\backslash \Gamma_{\mathrm{w}}(S).
    \end{aligned}
\end{equation*}
Here, $\Gamma_{\mathrm{f}}(S)$ includes all \firms that spend positive amount in some task in $S$, and $\Gamma_{\mathrm{w}}(S)$ includes all workers that earn positive amount in some task in $S$. Consider the following cases,

\begin{itemize}
    \item  $i \in \Gamma_\mathrm{f}(S)$. 
    In this case, we must have $\arc{i\ell'} \notin \mathcal{E}$ for all $\ell' \notin S$, i.e., \firm $i$ has no MBB task outside $S$. Suppose this is not the case, then it is possible to 
    move a nonzero amount of flow from $b_{i\ell}$ to $b_{i\ell'}$. 
    However, this implies that the previous spending update phase would either have more space to adjust the flow (so the update phase should have continued), or reach the lower bound $\lb$ on some surpluses (so the algorithm proceeds to the next iteration). In either cases we would not have entered the price update procedure, which gives a conflict. 

    Now that we know any edge incident on $i$ in $\mathcal{G}(\p)$ must have the other end in $S$, it then follows that none of these edges will disappear in the price update, since the prices of tasks in $S$ are scaled up in proportion. 

    \item $j \in {\Gamma}_{\mathrm{w}}(S)$. Similar to the analysis of ${\Gamma}_{\mathrm{f}}(S)$, we know that any edge incident on $j$ in $\mathcal{G}(\p)$ must have the other end in $[k]\backslash S$. Since the prices of tasks in $[k]\backslash S$ remain unchanged, none of these edges will disappear in the price update.
     
    \item $i\in \overline{\Gamma}_{\mathrm{f}}(S)$. Since we only increase the price of tasks in $S$, an edge $\arc{i\ell}$ disappears only if $\ell \in S$. By the definition of the partition, such edge carries zero flow in $\mathcal{G}(\p)$.
    
    \item $j\in \overline{\Gamma}_{\mathrm{w}}(S)$. Since we only increase the price of tasks in $S$, an edge $\arc{\ell^\prime j}$ disappears only if $\ell^\prime \in [k]\backslash S$. By the definition of the partition, such edge carries zero flow in $\mathcal{G}(\p)$. 
\end{itemize}
Combining the above cases, we have that no edge with positive flow disappears from the MBB/MPB graph during the price update. 
\end{proof}

\subsection{Proof of \Cref{lem:allocation-update}}
\lemAllocationUpdate*
\begin{proof}
    We first argue that each phase terminates after polynomial number of operations. For the phases of spending and earning update this is clear. For the price update, notice that each edge can only appear or disappear once in the MBB/MPB set. Let $\gamma_i(S,\p), \gamma_i([k]\backslash S,\p)$ denote the MBB value of \firm $i$ over the set of tasks $S$ and $[k]\backslash S$, respectively, and let $\bar{\p}$ be the prices in the beginning of an iteration. By proportionality of the price update, we know it must be one of the following two cases:
    \begin{enumerate}
        \item $\gamma_i(S, \bar{\p})> \gamma_i([k]\backslash S, \bar{\p})>0$, then all possible appearance of edges that are incident on $i$ will happen when $\p$ is scaled to $a_i^* \cdot \bar{\p}$ where $a_i^*= \gamma_i(S, \bar{\p})/\gamma_i([k]\backslash S, \bar{\p})$.
        \item Otherwise, no appearance or disappearance will happen for any edges that are incident on $i$.
    \end{enumerate}
    Therefore, at the beginning of each iteration, we can check the above two cases for each \firm $i$ in advance, and compute $a_i^*$ if we are in the first case. For price updates, we only need to let the scaling multiplier be $a_i^*$'s in increasing order. This takes polynomial time, and is to be combined with similar multipliers $a_j^*$'s from the worker side, which can be computed in advance similarly.

    For the number of price update phases, we know from above that there will be at most $n+m$ of such updates that cause the appearance of new MBB/MPB edges. However, there is still the risk of entering the price update phase again after all possible edge appearances, and the algorithm ends up scaling prices to infinity without terminating. We argue that this never happens: When prices in $S$ are scaled uniformly by a sufficiently large multiplier, there will be no edges between $[k]\backslash S$ and the workers, so the earning updates of all buyers would have shifted all earning requirements to $S$. This implies at least one $\ell \in S$ with non-positive surplus, which contradicts the fact that no surplus in $S$ is below $\lb>0$ in an iteration. 
    
    We then know that the number of price update phases is at most $n+m$. For the spending and earning update phases, notice that each either hits a lower bound $\lb$ (which ends the iteration) or is followed by a price update. The number of spending and earning update phases in an iteration is then upper-bounded by the number of price update phases plus $1$. 
\end{proof}
\section{Missing Discussions on the LP Formulation} 
\label{appendix: discussion-LP}
\subsection{Equivalence to a Min-cost Flow Problem}
Consider changing the variables of \eqref{program:EG Dual log Dual}:
let $f_{i\ell j}$ denote the amount of the money flow from $i$ to $j$ via $\ell$ for all $i \in [n], \ell \in L(i), j \in [m]$. 
Then, $b_{i\ell} = \sum_{j} f_{i\ell j}$, $e_{j\ell} = \sum_{i: \ell \in L(i)} f_{i\ell j}$.

\eqref{program:EG Dual log Dual} is equivalent to 
\begin{equation}
    \begin{aligned}
        \min_{f \geq 0} \quad & \sum_{i} \sum_{\ell \in L(i)} \sum_j (-\log{v_{i\ell}}) f_{i\ell j} + \sum_{j} \sum_{\ell} \sum_{i: \ell \in L(i)} (\log{c_{j\ell}}) f_{i\ell j} \\ 
        \textnormal{s.t.} \quad 
        & \sum_{\ell \in L(i)} \sum_{j} f_{i\ell j} = B_i \quad \forall\, i \in [n] \\ 
        & \sum_{\ell} \sum_{i: \ell \in L(i)} f_{i\ell j} = E_j \quad \forall\, j \in [m] \\ 
        & \sum_{i: \ell \in L(i)} \sum_{j} f_{i\ell j} = \sum_j \sum_{i: \ell \in L(i)} f_{i\ell j} \quad \forall\, \ell \in [k].  
    \end{aligned}
    \label{program:EG Dual log Dual replace}
\end{equation}
Equivalently, 
\begin{equation}
    \begin{aligned}
        \min_{f \geq 0} \quad & \sum_{i} \sum_j \sum_{\ell \in L(i)} (\log{c_{j \ell}} - \log{v_{i\ell}}) f_{i\ell j} \\ 
        \textnormal{s.t.} \quad 
        & \sum_{j} \sum_{\ell \in L(i)} f_{i\ell j} = B_i \quad \forall\, i \in [n] \\ 
        & \sum_{i} \sum_{\ell \in L(i)} f_{i\ell j} = E_j \quad \forall\, j \in [m].  
    \end{aligned}
    \label{program:EG Dual log Dual replace equiv}
\end{equation} 

The above LP corresponds to a min-cost flow problem on a multigraph consisting of a set of vertices representing users and workers, and $k$ edges between each pair of user-worker vertices. 
Since the cost-minimizing flow only goes through the edge with the minimum cost, the optimal solution to~\eqref{program:EG Dual log Dual replace equiv} can be captured by 
\begin{equation}
    \begin{aligned}
        \min_{a \geq 0} \quad & \sum_{i} \sum_{j} w_{i j} a_{i j} \\ 
        \textnormal{s.t.} \quad 
        & \sum_{j} a_{i j} = B_i \quad \forall\, i \in [n] \\ 
        & \sum_{i} a_{i j} = E_j \quad \forall\, j \in [m],  
    \end{aligned}
    \tag{Labor Market LP'}
    \label{pgm:simple-min-cost}
\end{equation}
where $w_{i j} := \min_{\ell' \in L(i)} (\log{c_{j\ell'}} - \log{v_{i\ell'}})$. 

\subsection{Discussions on Using a Truncated LP}
\label{app:subsec:truncated-LP}
To ensure that a small enough perturbation in the objective coefficients does not change the optimal solution, we need to guarantee that the objective value increases by a certain gap whenever the solution deviates from the optimal point to a non-optimal vertex of the feasible region. 

In other words, we need to compute the minimal objective change between an optimal vertex and a non-optimal vertex. 
However, this requires us to compute the modulus of weak sharp minima of~(\ref{program:EG Dual log Dual}). 
In the context of LPs, the modulus of weak sharp minima serves as an exact error‐bound constant, and—when enforcing polyhedrality—it coincides with the Hoffman constant. 
This connection has been 
established in \citet{burke2005weak}. 
It is known that there is no efficient way to compute Hoffman or error bound constants in general. 
In particular, computing exact Hoffman constants is known to be NP-hard~\citep{pena2021new}. 

Note that, we need to compute such a constant even before solving the LP. 
Intuitively, we need to consider every pair of possible vertices of the feasible region, or equivalently every pair of basic feasible solutions, which can be exponentially many.

\section{Missing Proofs in \Cref{sec:LP}}
\label{app:subsec:proofs-LP}

\subsection{Proof of \Cref{thm: LP-Equivalence}}
\thmLPEquivalence*
\begin{proof}
    Let $(\gamma_i)_{i \in [n]}, (\lambda_j)_{j \in [m]}, (\mu_\ell)_{\ell \in [k]}$ denote the dual variables corresponding to the first, second, and third constraints in~(\ref{program:EG Dual log Dual}), respectively. 
    Let $(\bar{\mathbf b}, \bar{\mathbf e})$ be a KKT point of~(\ref{program:EG Dual log Dual}) and $(\bar{\bm \gamma}, \bar{\bm\lambda}, \bar{\bm\mu})$ be the associated dual variables, then the KKT conditions lead to 
    \begin{enumerate}
        \item $- \log{v_{i\ell}} + \bar{\gamma}_i + \bar{\mu}_\ell \geq 0, \hspace{10pt} \bar{b}_{i\ell} \geq 0, \hspace{10pt} \bar{b}_{i\ell} \left( - \log{v_{i\ell}} + \bar{\gamma}_i + \bar{\mu}_\ell \right) = 0$ for all $i \in [n], \ell \in L(i)$; 
        \label{LP-KKT-con-1} 
        
        \item $\log{c_{j\ell}} + \bar{\lambda}_j - \bar{\mu}_\ell \geq 0, \hspace{10pt} \bar{e}_{j\ell} \geq 0, \hspace{10pt} \bar{e}_{j\ell} \left( \log{c_{j\ell}} + \bar{\lambda}_j - \bar{\mu}_\ell \right) = 0$ for all $j \in [m], \ell \in [k]$; 
        \label{LP-KKT-con-2}

        \item $\sum_{\ell \in L(i)} \bar{b}_{i\ell} = B_i$ for all $i \in [n]$; 
        \label{LP-KKT-con-3}

        \item $\sum_\ell \bar{e}_{j\ell} = E_j$ for all $j \in [m]$; 
        \label{LP-KKT-con-4}

        \item $\sum_{i: \ell \in L(i)} \bar{b}_{i\ell} = \sum_j \bar{e}_{j \ell}$ for all $\ell \in [k]$. 
        \label{LP-KKT-con-5}
    \end{enumerate}
    Let $\bar{p}_\ell = e^{\bar{\mu}_\ell} > 0$ for all $\ell \in [k]$, $\bar{x}_{i\ell} = {\bar{b}_{i\ell}}/{\bar{p}_\ell}$ for all $i \in [n], \ell \in [k]$ 
    and $\bar{y}_{j \ell} = {\bar{a}_{j\ell}}/{\bar{p}_\ell}$ for all $j \in [m], \ell \in [k]$. 
    Market clearing is ensured by Condition~\ref{LP-KKT-con-5}. 
    Also, $\sum_\ell \bar{p}_\ell \bar{x}_{i\ell} = \sum_\ell \bar{b}_{i\ell} = B_i$ by Condition~\ref{LP-KKT-con-3} and 
    $\sum_\ell \bar{p}_\ell \bar{y}_{j\ell} = \sum_\ell \bar{e}_{j\ell} = E_j$ by Condition~\ref{LP-KKT-con-4}. 
    It remains to show that 1) $v_{i\ell} / \bar{p}_\ell = \max_{\ell'} v_{i\ell'} / \bar{p}_{\ell'}$ whenever $\bar{x}_{i\ell} > 0$ and 2) $\bar{p}_\ell / c_{j\ell} = \max_{\ell'} \bar{p}_{\ell'} / c_{j\ell'}$ whenever $\bar{y}_{j\ell} > 0$. 
    These properties are implied by the KKT conditions: $\bar{x}_{i\ell} > 0$ implies $\bar{b}_{i\ell} > 0$, and thus $\log{v_{i\ell}} - \log{\bar{p}_\ell} = \gamma_i$ by Condition~\ref{LP-KKT-con-1}. Then, the first inequality in Condition~\ref{LP-KKT-con-1} implies $\log{v_{i\ell}} - \log{\bar{p}_\ell} \geq \log{v_{i\ell'}} - \log{\bar{p}_{\ell'}}$ for all $\ell' \in [k]$. 
    Similarly, we can show that every worker only chooses her MPB tasks. 
    Therefore, $(\bar{\p}, \bar{\x}, \bar{\y})$ is a CE. 

    Conversely, if $(\p^*, \x^*, \y^*)$ is a CE, then we can define 
    \begin{align*}
        b^*_{i\ell} = x^*_{i\ell} p^*_\ell \hspace{10pt} e^*_{j\ell} = y^*_{j\ell} p^*_\ell \hspace{10pt} \gamma_i = \max_{\ell \in [k]} \left(\log{\frac{v_{i\ell}}{p^*_\ell}}\right) \hspace{10pt} \lambda_j = \max_{\ell \in [k]} \left(\log\frac{p^*_\ell}{c_{j\ell}}\right) &\hspace{10pt} \mu_\ell = \log{p^*_\ell} \hspace{30pt} \nonumber \\ &\forall\, i \in [n], j \in [m], \ell \in [k]. 
    \end{align*}
    One can then verify that $(\mathbf b^*, \mathbf e^*)$ is a KKT point associated with the dual variables $(\bm \gamma^*, \bm\lambda^*, \bm\mu^*)$. 
\end{proof}

\subsection{Proof of \Cref{claim:mult-approx}}
\label{app:sec:truncate-coefficients-LP}
\claimMultApprox*
\begin{proof}
    Because we are rounding down, $v_{i\ell} \geq \tilde{v}_{i\ell}$ and $c_{j\ell} \geq \tilde{c}_{j\ell}$. 
    On the other hand, 
    since they are rounded up to $L$ decimal bits, we have $|\log{\tilde{v}_{i \ell}} - \log{v_{i \ell}}| \leq 2^{-L}$, 
    and $|\log{\tilde{c}_{j \ell}} - \log{c_{j \ell}}| \leq 2^{-L}$. 
    The lower bound of $\tilde{v}_{i\ell}$ follows by 
    $\tilde{v}_{i\ell} = e^{\log{\tilde{v}_{i\ell}}} \geq e^{\log{v_{i\ell}}} \cdot e^{-2^{-L}} \geq v_{i\ell} (1+2\cdot2^{-L})^{-1}$ where the last inequality holds because $e^{-x} \geq \frac{1}{1+2x}$ for $x \in [0, 1]$. 
    Similarly, the lower bound for $\tilde{c}_{j\ell}$ holds as well. 
\end{proof}


\subsection{Proof of \Cref{lem: graph-of-adjusted-prices}}
\lemAdjusted*
\begin{proof}
    Notice that there can be multiple connected components in $\tilde{\mathcal{G}}(\tilde{\mathbf{p}})$. 
    Let $\psi_1, \psi_2, \dots, \psi_r$ be the connected components of $\tilde{\mathcal{G}}(\tilde{\mathbf{p}})$. We start with an arbitrary component, say $\psi_1$. 
    Then, we multiplicatively increase the prices of all tasks in $\psi_1$, until 
    (i) a new MBB edge appears from some \firm in $\psi_1$ to a task outside of $\psi_1$, \emph{or} (ii) a new MPB edge appears from some worker outside of $\psi_1$ to a task in $\psi_1$. 
    In either of the two cases, 
    $(1)$ we still maintain that the price is an equilibrium price coupled with $(\tilde{\mathbf{b}}, \tilde{\mathbf{e}})$
    because \emph{there is no MBB/MPB edge disappearing} (thus the spendings and earnings flow only along MBB/MPB edges); and 
    $(2)$ the number of connected components decreases. 
    We repeat this process until 
    $\tilde{\mathcal{G}}(\tilde{\mathbf{p}})$ is connected. 
    Again, 
    since scaling price preserves it as an equilibrium price, 
    we can normalize the price vector such that $\sum_{\ell \in [k]} \tilde{p}_{\ell} =1$. 
\end{proof}

\subsection{Proof of~\Cref{lem:find-an-eq-price-in-polytime}}

\lemFindAnEqPriceInPolytime*

\begin{proof}
    To find such a price, we solve a system of equations which is defined by paths in $\tilde{\mathcal{G}}(\tilde{\mathbf{p}})$. 
    Fix the $k$-th task (the one with the largest index) as the representative task. 
    Since $\tilde{\mathcal{G}}(\tilde{\mathbf{p}})$ is connected, 
    for any task $\ell \in [k - 1]$, there is a path involving $\tau_\ell$ agents: 
    $q_{1}(\ell) - i_1(\ell) - q_2(\ell) - i_2(\ell) - \cdots  - i_{\tau_\ell}(\ell) - q_{\tau_\ell+1}(\ell)$ from $q_1(\ell) = k$ to ${q}_{\tau_\ell+1}(\ell) = \ell$ in $\tilde{\mathcal{G}}(\tilde{\mathbf{p}})$,
    where $q_r(\ell)$ denotes the $r$-th task on the path, and $i_r(\ell)$ denotes the $r$-th agent on the path. 
    The path depends on $\ell$ has length $2 \tau_{\ell}$.

    We use $\bm{\nu}$ to encode input utility data.
    Given input utility data $\bm{\nu}$, for each $\ell \in [k - 1]$, we define 
    $t_{\ell}(\bm{\nu}) := \prod_{r \in [\tau_\ell]} \left({\nu_{i_r(\ell) {q}_{r+1}(\ell)}}\big/{\nu_{i_r(\ell) {q}_r(\ell)}}\right)$ 
    where $\nu_{i {q}} = v_{i {q}}$ if $i$ is a user, and $\nu_{i q} = c_{i {q}}$ if $i$ is a worker. We then define the following system of linear equations:
    \begin{equation}
        \begin{aligned}
        p_{\ell} - t_{\ell}(\bm{\nu}) p_k &= 0, \quad \forall\, \ell \in [k - 1] \\ 
        \sum\nolimits_{\ell \in [k]}p_\ell &= 1.
        \end{aligned}
        \label{MPB-system-equations}
    \end{equation} 

    For given input utility data $\bm{\nu}$, we write the linear system \eqref{MPB-system-equations} as $A(\bm{\nu}) \p = \mathbf{z}$.
    Then, we can obtain a price $\p$ by solving $A(\bm{\nu}) \p = \mathbf{z}$.

    \begin{restatable}{claim}{claimMPB}
        \label{claim: MPB-system-equations-solution-uniqueness}
         $A(\bm{\nu}) \p = \mathbf{z}$ admits a unique solution. 
    \end{restatable}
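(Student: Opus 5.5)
The plan is to show directly that $A(\bm{\nu})$ is nonsingular by exhibiting its structure. The system consists of the $k-1$ equations $p_\ell = t_\ell(\bm{\nu})\, p_k$ for $\ell \in [k-1]$, together with the normalization $\sum_{\ell} p_\ell = 1$. This is $k$ linear equations in $k$ unknowns.

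The first step is to eliminate variables. Each of the first $k-1$ equations expresses $p_\ell$ as a fixed multiple of $p_k$, so every solution has the form $p_\ell = t_\ell(\bm{\nu})\, p_k$ for $\ell<k$. Setting $t_k(\bm{\nu}) := 1$ and substituting into the last equation yields
\begin{equation*}
p_k \Bigl(1 + \sum\nolimits_{\ell \in [k-1]} t_\ell(\bm{\nu})\Bigr) = 1 .
\end{equation*}

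The second step is to check that the coefficient of $p_k$ is strictly positive. Each factor $\nu_{i_r(\ell) q_{r+1}(\ell)}/\nu_{i_r(\ell) q_r(\ell)}$ in $t_\ell(\bm{\nu})$ comes from two edges of the connected graph $\tilde{\mathcal{G}}(\tilde{\mathbf{p}})$ incident to the same agent. For a user, an MBB edge $\arc{i q}$ is present only if $q$ attains the positive maximum bang-per-buck; here I would use \Cref{assumption: nontrivial}, which guarantees each user has some positive value, so the maximum is positive and $v_{iq}>0$ on every edge. For a worker, $c_{jq}>0$ always. Hence every factor is a finite positive number, so $t_\ell(\bm{\nu})>0$, and the coefficient $1+\sum_{\ell} t_\ell(\bm{\nu})$ is at least $1$.

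The third step concludes. The displayed equation has exactly one root $p_k = 1/(1+\sum_\ell t_\ell(\bm{\nu})) > 0$, and back-substitution determines every other $p_\ell$ uniquely. Equivalently, $\det A(\bm{\nu}) = \pm\bigl(1+\sum_\ell t_\ell(\bm{\nu})\bigr) \neq 0$, which one can see by expanding along the last column after the elimination. The solution is moreover strictly positive, which will be needed later.

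The only delicate point is the well-definedness and positivity of the $t_\ell$, namely that no zero utility lies on a chosen path. I would handle it by noting that $\tilde{\mathcal{G}}(\tilde{\mathbf{p}})$ only contains edges with $\tilde v_{iq}>0$. By \Cref{claim:mult-approx}, $\tilde v_{iq}>0$ iff $v_{iq}>0$, so the same nonzero pattern holds for the original data $\bm{\nu}$. Uniqueness of the solution does not depend on which path is chosen for each $\ell$.
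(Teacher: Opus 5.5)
Your proposal is correct and follows essentially the paper's route. The paper writes down $A(\bm{\nu})^{-1}$ explicitly and reads off the unique solution $\mathbf{p} = \frac{1}{s}(t_1,\ldots,t_{k-1},1)$ with $s = 1+\sum_{\ell\in[k-1]} t_\ell > 0$, which is exactly what your elimination produces (your determinant remark also checks out, in fact $\det A(\bm{\nu}) = s$), and your argument that each $t_\ell$ is finite and positive because every edge on the chosen paths carries a positive (dis)utility makes explicit a point the paper only asserts via $s>0$.
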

    \begin{proof}
    
    Since $A(\bm{\nu})$ is invertible with the inverse given by
    \begin{equation*}
        A^{-1}(\bm{\nu}) = \frac{1}{s(\bm{\nu})}\begin{pmatrix}
            s(\bm{\nu}) - t_1(\bm{\nu}) & - t_1(\bm{\nu}) & \cdots & - t_1(\bm{\nu}) & t_1(\bm{\nu}) \\ 
            - t_2(\bm{\nu}) & s(\bm{\nu}) - t_2(\bm{\nu}) & \cdots & - t_2(\bm{\nu}) & t_2(\bm{\nu}) \\ 
            \vdots & \vdots & \ddots & \vdots & \vdots \\ 
            - t_{k - 1}(\bm{\nu}) & - t_{k - 1}(\bm{\nu}) & \cdots & s(\bm{\nu}) - t_{k - 1}(\bm{\nu}) & t_{k - 1}(\bm{\nu}) \\ 
            -1 & -1 & \cdots & -1 & 1
        \end{pmatrix}, 
    \end{equation*} 
    where $s(\bm{\nu}) = 1 + \sum_{\ell = 1}^{k - 1} t_\ell(\bm{\nu}) > 0$, the system of linear equations $A(\bm{\nu}) \mathbf p = \mathbf{z}$ admits the following unique solution
    \begin{equation}
        \mathbf{p} = \frac{1}{s(\bm{\nu})}\Big( t_1(\bm{\nu}), t_2(\bm{\nu}), \ldots, t_{k - 1}(\bm{\nu}), 1 \Big).
        \label{eq:equilibrium-price-system-equations-unique-solution}
    \end{equation}

\end{proof}
    
    
    Let $\mathbf{u}$ and $\tilde{\mathbf{u}}$ denote the input utility data of the original and rounded instances, respectively. 
    We see that $\tilde{\mathbf{p}}$ satisfies $A(\tilde{\mathbf{u}}) \tilde{\mathbf{p}} = \mathbf{z}$ by the definition of the MBB/MPB graph and $\sum_\ell p_\ell = 1$.

    To find the desired price vector, we solve \eqref{MPB-system-equations} with $\bm{\nu} = \bf u$; denote its unique solution as $\mathbf{p}^*$. The system of linear equations is of polynomial-size and can be solved in $\mathcal{O}(k^3)$ arithmetic operations. 
    The following claim shows a lower bound of the minimal difference between two coordinates of $\mathbf{p}^*$. 
    \begin{restatable}{claim}{claimPBound}
    \label{claim: p-coordinate-bound}
        If $p_\ell^* \neq p_{\ell'}^*$ for some $\ell, \ell' \in [k]$, then we have $|p_{\ell}^* - p_{\ell'}^*| \geq k^{-1} (U D)^{-k^2}$.
        \label{fact:rationality-minimal-gap}
    \end{restatable}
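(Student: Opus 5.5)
The plan is to use the closed form \eqref{eq:equilibrium-price-system-equations-unique-solution}: $p^*_\ell = t_\ell(\mathbf{u})/s(\mathbf{u})$ for $\ell<k$ and $p^*_k = 1/s(\mathbf{u})$. Writing $t_k := 1$, we get
\[
|p^*_\ell - p^*_{\ell'}| = \frac{|t_\ell - t_{\ell'}|}{s}.
\]
I would bound the numerator from below and the denominator from above separately.

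\textbf{Rationality and the common denominator.} For this step I assume the utility data are integers bounded by $U$ and $D$, as is implicit in the definition of $L$ (if the data are rationals, one first clears denominators, which only affects the polynomial exponents). Each $t_\ell$ is a product of $\tau_\ell$ ratios $\nu_{i_r q_{r+1}}/\nu_{i_r q_r}$. The relevant paths can be taken simple, so $\tau_\ell \le k-1$. Hence every $t_\ell$ is a rational number whose numerator and denominator are products of at most $k-1$ entries, each at most $\max(U,D)\le UD$.

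Write $t_\ell = a_\ell/d_\ell$ with $a_\ell, d_\ell$ positive integers and $d_\ell \le (UD)^{k-1}$. Then
\[
t_\ell - t_{\ell'} = \frac{a_\ell d_{\ell'} - a_{\ell'} d_\ell}{d_\ell d_{\ell'}}.
\]
If this is nonzero, the numerator is a nonzero integer, so
\[
|t_\ell - t_{\ell'}| \ge \frac{1}{d_\ell d_{\ell'}} \ge (UD)^{-2(k-1)}.
\]

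\textbf{Upper bound on $s$.} Each $t_\ell \le (UD)^{k-1}$, because its numerator is at most $(UD)^{k-1}$ and its denominator is at least $1$. Therefore
\[
s = 1 + \sum_{\ell<k} t_\ell \le k\,(UD)^{k-1}.
\]

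\textbf{Combining.} The two bounds give
\[
|p^*_\ell - p^*_{\ell'}| \ge \frac{1}{k}\,(UD)^{-3(k-1)}.
\]
This is at least $k^{-1}(UD)^{-k^2}$ whenever $3(k-1)\le k^2$, which holds for all $k\ge 1$, assuming $UD\ge 1$. If $UD\ge1$ fails, I would normalize the data by scaling: utilities and disutilities can each be scaled by a positive constant without changing MBB/MPB structure or prices up to scale.

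The main obstacle is making the integrality/denominator argument airtight. Two points need care. First, the paths must be simple so that $\tau_\ell\le k-1$; a BFS tree in the connected graph $\tilde{\mathcal{G}}(\tilde{\mathbf{p}})$ gives this. Second, the integrality assumption on the inputs must be justified. The generous $k^2$ exponent in the statement leaves ample slack to absorb looser counting, for example bounding the common denominator $\prod_\ell d_\ell$ directly by $(UD)^{k(k-1)}$.
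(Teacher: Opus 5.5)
Your argument is correct. It uses the same ingredients as the paper (the closed form $p^*_\ell=t_\ell/s$ and integrality of the input data), but organizes the bound differently and obtains a sharper result.

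The paper puts every $p^*_\ell$ over one common integer denominator $Q=\prod_{q\in[k-1]}b_q+\sum_{\ell'=1}^{k-1}a_{\ell'}\prod_{q\neq\ell'}b_q$. Using the cruder bound $(UD)^k$ for each path product, it gets $Q\le k(UD)^{k(k-1)}\le k(UD)^{k^2}$. Two distinct prices are then distinct integer multiples of $1/Q$, so they differ by at least $1/Q$.

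You instead write the gap as $|t_\ell-t_{\ell'}|/s$. The integer-gap argument is then needed only for the pair of denominators $d_\ell d_{\ell'}\le (UD)^{2(k-1)}$, and $s$ needs only a real upper bound $k(UD)^{k-1}$. This gives $k^{-1}(UD)^{-3(k-1)}$, with an exponent linear rather than quadratic in $k$. The sharper gap would also allow a correspondingly smaller truncation precision $L$ in Lemma~\ref{lem:MBB-extraction}.

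What the paper's version gives in return is a one-line argument and a uniform denominator, hence a bit-size bound, for the whole vector $\mathbf{p}^*$.

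Both arguments rest on the two facts you flagged. First, the data are positive integers, so $U,D\ge 1$; the paper assumes this implicitly, as in the proof of Lemma~\ref{lem:MBB-extraction}. Second, each path product involves a bounded number of entries; your choice of simple paths with $\tau_\ell\le k-1$ makes this explicit where the paper leaves it implicit.
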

    \begin{proof}
        For each $\ell \in [k-1]$, denote $a_\ell := \prod_{r \in [\tau_\ell]} u_{i_r(\ell) {q}_{r+1}(\ell)}$ and $b_\ell := \prod_{r \in [\tau_\ell]} u_{i_r(\ell) {q}_r(\ell)}$ in the definition of $t_\ell(\mathbf{u})$. 
        Since $\mathbf{u}$ is the input utility data of the original instance thus rational, 
        $t_\ell(\bm{\nu}) = a_{\ell}/b_{\ell}$
        where both $a_{\ell}$ and $b_{\ell}$ are positive integers and $\max\{ a_{\ell}, b_{\ell} \} \leq (U D)^k$ for all $\ell \in [k-1]$. 
        By~\cref{eq:equilibrium-price-system-equations-unique-solution}, 
        any $p^*_\ell$ can be expressed as 
        \begin{equation*}
            \frac{a_\ell / b_\ell}{1 + \sum_{\ell' = 1}^{k - 1} a_{\ell'} / b_{\ell'}} = \frac{a_\ell\Pi_{q \in [k - 1] \setminus \{ \ell \}} b_q}{\Pi_{q \in [k - 1]} b_q + \sum_{\ell' = 1}^{k - 1} a_{\ell'}\Pi_{q \in [k - 1] \setminus \{ \ell' \}} b_q}. 
        \end{equation*}
        Therefore, the rationality dictates that the gap between two different prices is at least $\frac{1}{k (U D)^{k^2}}$. 
    \end{proof}

    Moreover, we present the following claim to upper bound the difference between $\p^*$ and $\tilde{\p}$.
    \begin{claim}
             \label{claim:p&tildep}
             For all $\ell \in [k]$, we have $|p^*_{\ell} - \tilde{p}_{\ell}| \leq 24k \cdot 2^{-L}$. 
        \end{claim}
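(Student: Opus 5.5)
Both $\p^*$ and $\tilde{\p}$ are given by the closed form \eqref{eq:equilibrium-price-system-equations-unique-solution}, since $\tilde{\p}$ solves $A(\tilde{\mathbf u})\tilde{\p} = \mathbf z$ and that system has a unique solution by \Cref{claim: MPB-system-equations-solution-uniqueness}. So I will write
\[
p^*_\ell = \frac{t_\ell(\mathbf u)}{s(\mathbf u)}, \qquad \tilde p_\ell = \frac{t_\ell(\tilde{\mathbf u})}{s(\tilde{\mathbf u})} \qquad (\ell\in[k-1]),
\]
and $p^*_k = 1/s(\mathbf u)$, $\tilde p_k = 1/s(\tilde{\mathbf u})$, with $t_k \equiv 1$. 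The plan is to control the multiplicative distortion of each $t_\ell$, transfer it to $s$ and then to the quotient, and finally turn the multiplicative error into an additive one using $p_\ell\le 1$.

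\textbf{Step 1: distortion of each $t_\ell$.} By \Cref{claim:mult-approx}, each rounded entry satisfies $\nu \ge \tilde\nu \ge (1+2^{1-L})^{-1}\nu$. Now $t_\ell$ is a ratio of two products, each with $\tau_\ell \le k$ factors, because the path from task $k$ to task $\ell$ can be taken simple and so visits at most $k$ tasks. Hence
\[
(1+2^{1-L})^{-k} \le \frac{t_\ell(\tilde{\mathbf u})}{t_\ell(\mathbf u)} \le (1+2^{1-L})^{k}.
\]
Set $\eta := (1+2^{1-L})^k - 1$. Since $L \ge k^2+8$, we have $k2^{1-L}$ tiny, so $\eta \le 2k\cdot 2^{1-L}\cdot(1+o(1)) \le 3k \cdot 2^{1-L}$. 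The same two-sided bound then holds for $s = 1+\sum_\ell t_\ell$, because it is a positive combination.

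\textbf{Step 2: from multiplicative to additive error.} It follows that $\tilde p_\ell / p^*_\ell \in [(1+\eta)^{-2}, (1+\eta)^2]$. Since $p^*_\ell \le 1$,
\[
|\tilde p_\ell - p^*_\ell| \le p^*_\ell\big((1+\eta)^2 - 1\big) \le 3\eta \le 9k\cdot 2^{1-L} = 18k\cdot 2^{-L} \le 24k\cdot 2^{-L}.
\]
Some slack remains to absorb the constants.

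\textbf{Main obstacle.} The delicate point is justifying that $\tilde{\p}$ really satisfies $A(\tilde{\mathbf u})\tilde{\p}=\mathbf z$ along the \emph{same} paths used to define $A$. This needs every edge of $\tilde{\mathcal G}(\tilde{\p})$ to be tight, i.e.\ to satisfy $\tilde p_{q_{r+1}}/\tilde p_{q_r} = \tilde\nu_{iq_{r+1}}/\tilde\nu_{iq_r}$. That equality holds because two MBB (resp.\ MPB) tasks of the same agent have equal bang-per-buck, and the paper has already asserted it. The only other care needed is bounding the path lengths by $k$ rather than $n+m$, which again uses simplicity of the chosen paths.
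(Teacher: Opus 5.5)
Your proposal is correct and follows essentially the same route as the paper. Both arguments start from a two-sided multiplicative bound $(1+2^{1-L})^{\pm k}$ on $t_\ell(\tilde{\mathbf u})/t_\ell(\mathbf u)$, transfer it to $s$ by the positive-combination (mediant) argument, and convert to an additive error using $p^*_\ell\le 1$. The only cosmetic difference is the final step: you multiply the two ratio bounds to get $\tilde p_\ell/p^*_\ell\in[(1+\eta)^{-2},(1+\eta)^2]$, which gives the slightly better constant $18k\cdot 2^{-L}$. The paper instead writes the difference as $\frac{t_\ell(\tilde{\mathbf u})}{t_\ell(\mathbf u)}\,p^*_\ell\,\bigl|\frac{t_\ell(\mathbf u)}{t_\ell(\tilde{\mathbf u})}-\frac{s(\mathbf u)}{s(\tilde{\mathbf u})}\bigr|$ and bounds it by the spread of the ratios. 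Your explicit justification of $\tau_\ell\le k$ via simple paths also fills in a point the paper leaves implicit.
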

    \begin{proof}
        We use $i_r$ and $q_r$ to replace $i_r(\ell)$ and $q_r(\ell)$ for simplicity.
        For any $\ell \in [k]$, 
        we have 
        \begin{equation}
            \frac{1}{(1 + 2^{1-L})^k} 
            \leq \prod_{r \in [\tau_\ell]} \frac{\tilde{u}_{i_r {q}_{r+1}}}{u_{i_r {q}_{r+1}}} 
            \leq \frac{t_\ell(\tilde{\mathbf{u}})}{t_\ell(\mathbf{u})} 
            = \prod_{r \in [\tau_\ell]} \frac{\tilde{u}_{i_r {q}_{r+1}} u_{i_r {q}_r}}{\tilde{u}_{i_r {q}_r} u_{i_r {q}_{r+1}}} 
            \leq \prod_{r \in [\tau_\ell]} \frac{u_{i_r {q}_r}}{\tilde{u}_{i_r {q}_r}} 
            \leq (1 + 2^{1-L})^k. 
            \label{eq:ratio-t-bound}
        \end{equation}

        Define $t_k(\bm{\nu}) = 1$ for any input utility data $\bm{\nu}$. 
        By~\cref{eq:equilibrium-price-system-equations-unique-solution}, 
        for any $\ell \in [k]$, 
        we have 
        \begin{align*}
            |p^*_{\ell} - \tilde{p}_{\ell}| = \left| \frac{t_\ell(\mathbf{u})}{s(\mathbf{u})} - \frac{t_\ell(\tilde{\mathbf{u}})}{s(\tilde{\mathbf{u}})} \right| 
            &= \frac{t_\ell(\tilde{\mathbf{u}})}{t_\ell(\mathbf{u})}\frac{t_\ell(\mathbf{u})}{s(\mathbf{u})} \left| \frac{t_\ell(\mathbf{u})}{t_\ell(\tilde{\mathbf{u}})} - \frac{s(\mathbf{u})}{s(\tilde{\mathbf{u}})} \right| \\  
            &\overset{(a)}{\leq} \left( 1 + 2^{1-L} \right)^k \left(\max_{\ell \in [k]} \frac{t_\ell(\mathbf{u})}{t_\ell(\tilde{\mathbf{u}})} - \min_{\ell \in [k]} \frac{t_\ell(\mathbf{u})}{t_\ell(\tilde{\mathbf{u}})} \right) \\
            &\overset{(b)}{\leq} \left( 1 + 2^{1-L} \right)^k \left( \left( 1 + 2^{1-L} \right)^k - (1 + 2^{1-L})^{-k} \right), 
        \end{align*}
        where $(a)$ follows by~\cref{eq:ratio-t-bound}, $\frac{t_\ell(\mathbf{u})}{s(\mathbf{u})} \leq 1$ and $\min_{\ell \in [k]} \frac{t_\ell(\mathbf{u})}{t_\ell(\tilde{\mathbf{u}})} \leq \frac{s(\mathbf{u})}{s(\tilde{\mathbf{u}})} \leq \max_{\ell \in [k]} \frac{t_\ell(\mathbf{u})}{t_\ell(\tilde{\mathbf{u}})}$, 
        and $(b)$ follows by~\cref{eq:ratio-t-bound}. 
        Recall that $L \geq k^2$ and thus $0 < 2^{1 - L} \leq 2^{1 - k^2} \leq \frac{1}{k}$ for $k \geq 1$. 
        Because $(1 + x)^k \leq e^{xk} \leq 2kx + 1$ and $(1 + x)^{-k} \geq e^{-kx} \geq -2kx + 1$ for any $0 < x \leq \frac{1}{k}$, we have $|p^*_{\ell} - \tilde{p}_{\ell}| \leq (2k \cdot 2^{1-L} + 1) \cdot 4k \cdot 2^{1-L} \leq 12k \cdot 2^{1-L} = 24k \cdot 2^{-L}$ (we use $k \cdot 2^{1-L} \leq 1$ in the last inequality). 
\end{proof}

To show the statement in~\cref{lem:find-an-eq-price-in-polytime}, it suffices to show the following lemma.
\begin{restatable}{lemma}{lemMBBExtraction}
     \label{lem:MBB-extraction}
     If $\arc{i\ell} \in \tilde{\mathcal{E}}(\tilde{\mathbf{p}})$, then $\arc{i\ell} \in \mathcal{E}(\mathbf{p}^*)$, and symmetrically, $\arc{\ell j} \in \tilde{\mathcal{E}}(\tilde{\mathbf{p}})$, then $\arc{\ell j} \in \mathcal{E}(\mathbf{p}^*)$.
\end{restatable}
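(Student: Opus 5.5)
The plan is a rationality-gap argument. Exact MBB/MPB optimality at $\mathbf{p}^*$ is a sign condition on finitely many rational quantities whose denominators are bounded. The corresponding quantities for the rounded data at $\tilde{\mathbf{p}}$ have the right sign. Claims~\ref{claim:mult-approx} and~\ref{claim:p&tildep} show the two versions are $O(\mathrm{poly}(k)\,U D\, 2^{-L})$-close, and $L$ is chosen so that this perturbation is smaller than the minimal nonzero magnitude permitted by the denominators. Throughout, I assume (after scaling) that the original utility data $v_{i\ell}, c_{j\ell}$ are positive integers bounded by $U$ and $D$, as already used in Claim~\ref{claim: p-coordinate-bound}.

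\emph{User edges.} Fix $\arc{i\ell}\in\tilde{\mathcal{E}}(\tilde{\mathbf{p}})$ and an arbitrary task $\ell'$. By definition of the MBB graph for the rounded instance,
\[
\tilde Q := \tilde v_{i\ell}\,\tilde p_{\ell'} - \tilde v_{i\ell'}\,\tilde p_{\ell} \ \ge\ 0 .
\]
The goal is
\[
Q := v_{i\ell}\,p^*_{\ell'} - v_{i\ell'}\,p^*_{\ell} \ \ge\ 0 ,
\]
since this, holding for every $\ell'$, is exactly $\arc{i\ell}\in\mathcal{E}(\mathbf{p}^*)$.

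First, I bound $|Q-\tilde Q|$ by splitting each product, for example $v p^* - \tilde v\tilde p = (v-\tilde v)p^* + \tilde v(p^*-\tilde p)$. Claim~\ref{claim:mult-approx} gives $0\le v-\tilde v\le 2^{1-L}v\le 2^{1-L}U$, and $\tilde v\le U$. All prices lie in $[0,1]$ because both vectors are normalized to sum to one. Claim~\ref{claim:p&tildep} gives $|p^*-\tilde p|\le 24k\,2^{-L}$. Together these yield $|Q-\tilde Q|\le 2U(2^{1-L}+24k\,2^{-L})\le 52kU\,2^{-L}$.

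Second, I lower-bound $|Q|$ whenever $Q\neq 0$. By the explicit solution \eqref{eq:equilibrium-price-system-equations-unique-solution} and the representation in the proof of Claim~\ref{claim: p-coordinate-bound}, all coordinates of $\mathbf{p}^*$ share the common denominator
\[
\mathrm{Den}=\prod_{q\in[k-1]} b_q+\sum_{\ell'} a_{\ell'}\prod_{q\neq\ell'} b_q \le k\,(UD)^{k^2}.
\]
Since the $v$'s are integers, $Q\cdot\mathrm{Den}\in\mathbb{Z}$. Hence $Q<0$ forces $Q\le -1/(k(UD)^{k^2})$.

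Third, I compare with the perturbation using the definition of $L$: $2^{-L}=2^{-8}k^{-2}U^{-1}D^{-1}(UD)^{-k^2}$. This gives $52kU\,2^{-L}<k^{-1}(UD)^{-k^2}$, so $Q<0$ would contradict $\tilde Q\ge 0$. Therefore $Q\ge 0$.

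\emph{Worker edges.} These are handled symmetrically. For $\arc{\ell j}\in\tilde{\mathcal{E}}(\tilde{\mathbf{p}})$, the MPB condition $c_{j\ell}/p_\ell\le c_{j\ell'}/p_{\ell'}$ becomes
\[
c_{j\ell'}\,p_\ell - c_{j\ell}\,p_{\ell'} \ge 0 ,
\]
and the same three steps apply with $D$ in place of $U$.

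The main obstacle is the bookkeeping of the denominator bound for $\mathbf{p}^*$ against the precision $L$. If the common-denominator estimate from Claim~\ref{claim: p-coordinate-bound} turns out slightly too weak, I would first try multiplying through by $p^*_k$, which has the same denominator. Failing that, I would exploit the extra $(UD)^{k^2}$ slack already built into $L$ beyond what Claim~\ref{claim: p-coordinate-bound} requires; this slack is presumably why $L$ contains the additional $\log_2 U+\log_2 D$ and constant terms.
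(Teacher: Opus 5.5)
Your proposal is correct and follows the paper's proof. Both assume the MBB (resp.\ MPB) inequality fails at $\mathbf{p}^*$, use integrality of the data and the common denominator of $\mathbf{p}^*$ to get a gap of at least $k^{-1}(UD)^{-k^2}$, and show that the perturbation controlled by the rounding claim and the price-closeness claim ($52kU\,2^{-L}$ in your bookkeeping, $74kU\,2^{-L}$ in the paper's) is smaller, which contradicts the edge being MBB/MPB at $\tilde{\mathbf{p}}$. Your explicit argument that $Q\cdot\mathrm{Den}\in\mathbb{Z}$ is the precise justification for the gap step, which the paper obtains more loosely by citing the pairwise price-gap claim.
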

\begin{proof}
    We prove this lemma by contradiction. 
    Suppose that, there exists $\arc{i\ell} \in \tilde{\mathcal{E}}(\tilde{\mathbf{p}})$, but $\arc{i\ell} \notin \mathcal{E}(\mathbf{p}^*)$, i.e., there exists an $\ell' \in [k]$ such that $v_{i \ell}/p^*_{\ell} < v_{i \ell'} / p^*_{\ell'}$, 
    or equivalently, $v_{i \ell} p^*_{\ell'} < v_{i \ell'} p^*_{\ell}$. 
    By \Cref{fact:rationality-minimal-gap} and $(v_{i\ell})_{i \in [n], \ell \in[k]}$ are integers, we have 
    \begin{equation}
        v_{i \ell} p_{\ell'} \leq v_{i \ell'} p_{\ell} - k^{-1} (U D)^{-k^2}. 
        \label{eq:rational-gap}
    \end{equation} 
    Then, since $L = k^2(\log_2{(U D)}) + \log_2{U} + \log_2{D} + 2\log_2{k} + 8$, we have 
    \begin{align}
        \tilde{v}_{i \ell} \tilde{p}_{\ell'} 
            &\leq v_{i \ell} (p_{\ell'}^* + 24k \cdot 2^{-L}) \tag{by~\cref{claim:p&tildep} and $\tilde{v}_{i\ell} \leq v_{i\ell}$} \\
            &\leq v_{i \ell} p_{\ell'}^* + 24k U \cdot 2^{-L} \nonumber \\
            &\leq v_{i \ell'} p_{\ell}^* + 24k U \cdot 2^{-L} - k^{-1} (U D)^{-k^2} \tag{by~\cref{eq:rational-gap}} \\ 
            &\leq (\tilde{v}_{i \ell'} + U \cdot 2^{1-L})(\tilde{p}_{\ell} + 24k \cdot 2^{-L}) + 24k U \cdot 2^{-L} - k^{-1} (U D)^{-k^2} \tag{by~\cref{claim:p&tildep} and $v_{i\ell} \leq (1 + 2^{1-L})\tilde{v}_{i\ell} \leq \tilde{v}_{i\ell} + 2^{1-L} v_{i\ell} \leq \tilde{v}_{i\ell} + U \cdot 2^{1-L}$} \\
            &\leq \tilde{v}_{i \ell'} \tilde{p}_{\ell} + (2 U \cdot 2^{-L} + 48k U \cdot 2^{-L} + 24k U \cdot 2^{-L} \cdot 2^{1-L}) - k^{-1} (U D)^{-k^2} \tag{by $\tilde{p}_\ell \leq 1$ $\forall\, \ell$ and $\tilde{v}_{i\ell'} \leq v_{i\ell} \leq U$} \\ 
            &\leq \tilde{v}_{i \ell'} \tilde{p}_{\ell} + 74k U \cdot 2^{-L} - k^{-1} (U D)^{-k^2} \tag{by $1 \leq k$ and $2^{1-L} \leq 1$} \\ 
            &\leq \tilde{v}_{i \ell'} \tilde{p}_{\ell} + 74 \cdot 2^{-8} k^{-1} (UD)^{-k^2} - k^{-1} (U D)^{-k^2} \tag{$D \geq 1$} \\ 
            &< \tilde{v}_{i \ell'} \tilde{p}_{\ell}, 
    \end{align}
    which contradicts $\arc{i\ell} \in \tilde{\mathcal{E}}(\tilde{\p})$. 
    In the same way, we can show that 
    $\arc{\ell j} \in \tilde{\mathcal{E}}(\tilde{\mathbf{p}})$, then $\arc{\ell j} \in \mathcal{E}(\mathbf{p}^*)$. 
\end{proof}

This completes the proof of~\cref{lem:find-an-eq-price-in-polytime}. 
\end{proof}


